\let\savedegree\degree
\let\degree\relax
\let\degree\savedegree
\tikzstyle{block} = [rectangle,draw,text width=10em,text centered,rounded corners,minimum height=4em]
\tikzstyle{line} = [draw, -latex']
\tikzset{arrow/.style={postaction={decorate,thick,decoration={markings,mark = at position #1 with {\arrow{>}}}}},arrow/.default=1}
\newtheorem{defn}{Definition}
\newtheorem{lem}{Lemma}
\newtheorem{thm}{Theorem}
\newtheorem{conj}{Conjecture}
\newtheorem{cor}{Corollary}
\newtheorem{rmk}{Remark}
\newtheorem{prop}{Proposition}
\newtheorem{ex}[thm]{Example}
\def\bbF{\mathbb{F}}
\def\bbQ{\mathbb{Q}}
\def\bbZ{\mathbb{Z}}
\def\leq{\leqslant}
\def\geq{\geqslant}
\def\a{\alpha}
\def\b{\beta}
\def\ve{\varepsilon}
\def\g{\gamma}
\def\D{\Delta}
\def\d{\delta}
\def\k{\kappa}
\def\l{\lambda}
\def\m{\mu}
\def\n{\nu}
\def\s{\sigma}
\def\r{\rho}
\def\z{\zeta}
\def\MD{\mathcal{D}}
\def\MC{\mathcal{C}}
\def\MF{\mathcal{F}}
\def\MG{\mathcal{G}}
\def\MI{\mathcal{I}}
\def\ML{\mathcal{L}}
\def\MM{\mathcal{M}}
\def\MO{\mathcal{O}}
\def\MR{\mathcal{R}}
\def\MH{\mathcal{H}}
\def\MM{\mathcal{M}}
\def\MRb{\widebar{\mathcal{R}}}
\def\MP{\mathcal{P}}
\def\ME{\mathcal{E}}
\def\MS{\mathcal{S}}
\def\MV{\mathcal{V}}
\def\MK{\mathcal{K}}
\def\MP{\mathcal{P}}
\def\MH{\mathcal{H}}
\def\RR{\mathrm{R}}
\def\ve{\varepsilon}
\def\deg{\text{deg}}
\def\dd{\text{d}}
\newcommand{\braket}[2]{\left\langle #1|#2\right\rangle}
\newcommand{\bra}[1]{\left\langle #1\right|}
\newcommand{\ket}[1]{\left|#1\right\rangle}
\newcommand{\bbrakket}[2]{\left\langle\langle #1|#2\right\rangle\rangle}
\newcommand{\bbra}[1]{\left\langle\langle #1\right|}
\newcommand{\kket}[1]{\left|#1\right\rangle\rangle}
\newcommand{\qg}{U_{q,t}(\overset{..}{\mathfrak{gl}}_1)}
\newcommand{\qgo}{U^\circ_{q,t}(\overset{..}{\mathfrak{gl}}_1)}
\newcommand{\qgop}{U^{\circ,+}_{q,t}(\overset{..}{\mathfrak{gl}}_1)}
\newcommand{\qgom}{U^{\circ,-}_{q,t}(\overset{..}{\mathfrak{gl}}_1)}
\newcommand{\qgopm}{U^{\circ,\pm}_{q,t}(\overset{..}{\mathfrak{gl}}_1)}
\newcommand{\qgp}{U^+_{q,t}(\overset{..}{\mathfrak{gl}}_1)}
\newcommand{\qgm}{U^-_{q,t}(\overset{..}{\mathfrak{gl}}_1)}
\newcommand{\qgg}{U^\geq_{q,t}(\overset{..}{\mathfrak{gl}}_1)}
\newcommand{\qgl}{U^\leq_{q,t}(\overset{..}{\mathfrak{gl}}_1)}
\newcommand{\ba}{\[\begin{aligned}~}
\newcommand{\ea}{\end{aligned}\]}
\newcommand{\Hilb}{\text{Hilb}}
\begin{document}

\title[]{Computing the $R$-matrix of the quantum toroidal algebra}

\address{Alexandr~Garbali, School of Mathematics and Statistics, University of Melbourne, Victoria 3010, Australia.}
\author{Alexandr Garbali and Andrei Negu\cb{t}}
\email{alexandr.garbali@unimelb.edu.au}

\address{Andrei~Negu\cb t, MIT, Department of Mathematics, Cambridge, MA, USA}
\address{Simion Stoilow Institute of Mathematics, Bucharest, Romania}
\email{andrei.negut@gmail.com}

   \begin{abstract}
    We consider the problem of the $R$-matrix of the quantum toroidal algebra $\qg$ in the Fock representation. Using the connection between the $R$-matrix $R(u)$ ($u$ being the spectral parameter) and the theory of Macdonald operators we obtain explicit formulas for $R(u)$ in the operator and matrix forms. These formulas are expressed in terms of the eigenvalues of a certain Macdonald operator which completely describe the functional dependence of $R(u)$ on the spectral parameter $u$. We then consider the geometric $R$-matrix (obtained from the theory of $K$-theoretic stable bases on moduli spaces of framed sheaves), which is expected to coincide with $R(u)$ and thus gives another approach to the study of the poles of the $R$-matrix as a function of $u$.
    
    \end{abstract}

\maketitle

\section{Introduction}

The quantum toroidal algebra $\qg$ is an important object in geometric representation theory, mathematical physics and algebraic combinatorics. In fact, arguably the most important appearance of this algebra lies at the intersection of these three fields, namely the action on Fock space:
\begin{equation}
\label{eqn:action intro}
\qg \curvearrowright \MF
\end{equation}
Specifically, $\MF$ can be thought of either as the $K$-theory group of the Hilbert scheme of points on $\mathbb{A}^2$, the Hilbert space of quantum mechanics for an arbitrary number of identical bosons, or the ring of symmetric functions in countably many variables; in each of these incarnations, the action \eqref{eqn:action intro} encapsulates the symmetries of the Fock space.

As $\qg$ is a Hopf algebra, one can use its (topological) coproduct to define actions:
\begin{equation}
\label{eqn:action intro 2} 
\qg \curvearrowright \MF \otimes \MF
\end{equation}
(and more generally on tensor products of arbitrarily many Fock spaces). But as is often the case in quantum algebra, the order of the tensor factors matters, and simply swapping the factors in \eqref{eqn:action intro 2} does \underline{not} commute with the $\qg$ action. To fix this, one constructs the universal $R$-matrix:
\begin{equation} 
\label{eqn:universal intro}
\MR \in \qg \ \widehat{\otimes} \ \qg
\end{equation}
and uses it to obtain a $\qg$-module intertwiner:
\begin{equation} 
\label{eqn:R-matrix intro}
R(u) : \MF \otimes \MF \rightarrow \MF \otimes \MF (u)
\end{equation}
The operators \eqref{eqn:R-matrix intro} satisfy the quantum Yang--Baxter equation:
\begin{align}\label{eq:YB intro}
R_{1,2}(u_2/u_1)
R_{1,3}(u_3/u_1)
R_{2,3}(u_3/u_2)
=
R_{2,3}(u_3/u_2)
R_{1,3}(u_3/u_1)
R_{1,2}(u_2/u_1)
\end{align}
where $R_{i,j}(u_j/u_i)$ acts in the $i$-th and $j$-th spaces of $\MF\otimes\MF\otimes\MF$. 
Let $N(u)$ be the vacuum-vacuum matrix element of $R(u)$ (which can be found in formula \eqref{eq:N}) and set:
\begin{equation} 
\label{eqn:RR intro}
\RR(u):=N(u)^{-1} R(u)
\end{equation}
Since the universal $R$-matrix \eqref{eqn:universal intro} is an infinite sum (closely related to the canonical tensor of the Hopf pairing between two halves of the quantum toroidal algebra $\qg$), one can obtain formulas for $R(u)$ as a \underline{power series} in $u$. While the coefficients of this power series do not admit closed formulas, there are many ways to compute them, both systematically and algorithmically. However, the main result of the present paper is that $\RR(u)$ is actually the expansion of a \underline{rational function} with prescribed poles. 

\medskip

\begin{thm}
\label{thm:main intro}

(cf. Corollary \ref{cor:poles}) The operator \eqref{eqn:RR intro} is the power series expansion of an operator whose coefficients are rational functions in $u$, with simple poles at $\{u = q^it^{-j}\}_{i,j \ge 1}$. \\

Moreover, we give formulas for $R(u)$ and $\RR(u)$ in terms of bosons and in the standard Heisenberg basis of $\MF$, in Propositions \ref{prop:R-bar} and \ref{prop:matrix coefficients}, respectively. 

\end{thm}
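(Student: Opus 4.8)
The plan is to exploit the connection between $R(u)$ and Macdonald operators. The key structural fact, which I would establish first, is that the intertwiner $R(u)$ on $\MF \otimes \MF$ can be conjugated (up to the scalar $N(u)$) into an operator that is \emph{diagonal with respect to the Macdonald basis} of one of the two tensor factors — more precisely, that $\RR(u)$ commutes with (or is built out of) a distinguished Macdonald operator $\MD$ acting on $\MF$, so that its action on a joint eigenbasis is governed entirely by the eigenvalues of $\MD$. This is where the hypothesis "the $R$-matrix is expressed in terms of the eigenvalues of a certain Macdonald operator" feeds in: if $\RR(u)$ acts on the Macdonald eigenvector indexed by a partition $\lambda$ by a scalar that is a rational function of $u$ whose only $u$-dependence enters through the Macdonald eigenvalue $\e_\lambda(q,t)$ (a Laurent monomial/polynomial in $q,t$), then each matrix coefficient of $\RR(u)$ in the Heisenberg basis is a finite $\bbQ(q,t)$-linear combination of such scalars, hence manifestly rational in $u$.

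From there the argument splits into two parts. \textbf{(1) Rationality.} Using Proposition \ref{prop:R-bar} (the bosonic formula for $\RR(u)$), I would write $\RR(u)$ as an explicit operator-valued expression — a product or ordered exponential of Heisenberg modes with coefficients that are ratios of the form $\prod (1 - u q^a t^b)/(1 - u q^c t^d)$, coming from the canonical tensor of the Hopf pairing. The point is that although the \emph{universal} $R$-matrix is an infinite sum, on each finite-dimensional graded piece $\MF_n \otimes \MF_m$ only finitely many terms contribute, and each contributes a rational function in $u$; summing finitely many rational functions gives a rational function. Concretely, I would check that in degree $n$ the relevant matrix coefficient is a sum over the (finite) set of pairs of partitions of $n$, with each summand visibly rational with denominator a product of factors $1 - uq^it^{-j}$. \textbf{(2) Location and simplicity of poles.} Here I would trace through the denominators appearing in the bosonic/Hopf-pairing formula and show that every factor in every denominator is of the form $1 - uq^it^{-j}$ with $i,j \ge 1$; this pins the poles to the set $\{u = q^it^{-j}\}_{i,j\ge 1}$. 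Simplicity of the poles should follow from the fact that, after passing to the Macdonald eigenbasis, the eigenvalue $\e_\lambda$ appears linearly (degree one) in the relevant rational scalar — equivalently, the residue computation at $u = q^it^{-j}$ produces a rank-one (or finite-rank) operator with no higher-order part. I would phrase this last step as: expand the scalar $\RR(u)|_{\text{eigenvector }\lambda}$ in partial fractions and observe each pole is order one.

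The main obstacle I anticipate is \textbf{step (1)'s bookkeeping combined with controlling cancellations in the denominators}. The raw bosonic formula for $\RR(u)$ will produce denominators that are products over all modes, and a priori one might worry about poles at $u = q^it^{-j}$ for $i,j$ of either sign, or about higher-order poles from coincidences among the factors. Showing that the spurious factors cancel against numerators — so that only the $i,j \ge 1$ simple poles survive — is the delicate part, and it is exactly here that the Macdonald-operator reformulation earns its keep: rather than fighting the bosonic product directly, I would argue that since $\RR(u)$ is diagonalized by Macdonald operators with \emph{rational} (in fact, ratio-of-linear-in-$\e_\lambda$) eigenvalues, the set of possible poles is constrained a priori, and then match this against the explicit bosonic expression to conclude no cancellation is needed beyond what is visible. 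Finally, to get the clean statements of Propositions \ref{prop:R-bar} and \ref{prop:matrix coefficients}, I would specialize the operator formula to the Heisenberg basis using the known transition coefficients between power-sum and Macdonald bases (the Macdonald pairing / Pieri-type formulas), which is routine once the operator-level rationality is in hand.
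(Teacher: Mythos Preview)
Your proposal has the right endpoint in mind --- the $u$-dependence of $\RR(u)$ is indeed controlled by the Macdonald eigenvalues $f_\lambda(u)$, and their explicit product form $N(u)^{-1}f_\lambda(u)=\prod_{(i,j)\in\lambda}(1-u^{-1}t^{1-i}q^{j-1})/(1-u^{-1}t^{-i}q^j)$ immediately gives rationality and simple poles at $u=q^jt^{-i}$ --- but you have not identified the bridge that actually connects $\bar R(u)$ on $\MF\otimes\MF$ to these eigenvalues, and your suggested bridges (``conjugated into something diagonal'', ``commutes with a Macdonald operator $\MD$'') are not correct. The operator $\bar R(u)$ is not diagonal in any Macdonald-type basis of $\MF\otimes\MF$, nor is there a single-copy Macdonald operator it commutes with in any useful sense.

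The paper's mechanism is more concrete and quite different from conjugation. One introduces the combined bosons $c_{\pm r}=1\otimes a_{\pm r}-(q/t)^{-r/2}a_{\pm r}\otimes 1$ and checks that $:\prod_i\xi(z_i)\otimes\eta(z_i):$ is a vertex operator in the $c$'s having \emph{exactly the same functional form} as $:\prod_i\eta(z_i):$ has in the $a$'s. Since $\bar R_n$ and $\widehat F_n$ are both given by the same contour integral of the shuffle element $F_n(z_1,\dots,z_n)$ against these vertex operators, their expansion coefficients $X_{\mu,\nu}(u)$ in the bases $c_{-\mu}c_\nu$ and $a_{-\mu}a_\nu$ are literally equal. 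Now one uses that $\widehat F(u)$ is diagonal in the Macdonald basis with eigenvalue $f_\lambda(u)$: inverting the transition matrix $T$ between $a_{-\mu}a_\nu$ and matrix units $|a_\alpha\rangle\langle a_\beta|$ gives $X_{\mu,\nu}(u)$ as a \emph{finite} sum $\sum_\lambda b_\lambda f_\lambda(u)\,(\text{combinatorial factor in }g_{\lambda,\cdot})$. This is the content of Proposition~\ref{prop:R-bar}, and Corollary~\ref{cor:poles} is then immediate. Without the $c$-operator identification, your concerns about spurious poles of the wrong sign and about higher-order cancellations remain open; with it, they never arise, because the pole structure is read off directly from the product formula for $f_\lambda(u)$ rather than fought out in a bosonic expansion.
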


\medskip

Our techniques for proving the Theorem above are two-fold: on one hand, we heavily use the Macdonald polynomial basis of $\MF$, and express vertex operators in terms of this basis. On the other hand, we recall from \cite{FHHSY, Shuf} the use of the shuffle algebra in dealing with such vertex operators. We prove certain shuffle algebra formulas for the $R$-matrix $R(u)$, which will allow us to identify its functional dependence on the parameter $u$ and prove Theorem \ref{thm:main intro}. Although we believe certain key formulas that we need are new, the approach summarized in the present paragraph has been used in many works, such as \cite{FJMM_BA, Awata, Fukud, GdG, Shuf, R-matrix}, and in the closely related problem of the instanton $R$-matrix \cite{MO , S , Proch19 , Litv20}.

In the second half of the present paper, we present an alternative approach to proving the first paragraph of Theorem \ref{thm:main intro}, via the $K$-theoretic stable basis construction (which originated in \cite{MO} and was developed in \cite{AO, O1, O2, OS} and other works) \footnote{We will take an expositional point of view rather than a purely rigorous one, so we do not claim an alternative proof due to certain details that we leave out for the sake of brevity. Some of these details are straightforward and some are less so, and we will inform the reader which is which}. The starting point is the observation that $\MF$ is isomorphic to the (algebraic, equivariant) $K$-theory groups of the Hilbert scheme of points on $\mathbb{A}^2$, a beautiful idea in geometric representation theory that arose from several important directions (notably the actions of the Heisenberg algebra on the cohomology of Hilbert schemes, due to Grojnowski and Nakajima, and the Bridgeland-King-Reid-Haiman equivalence of categories between modules over wreath products and sheaves on the Hilbert scheme). Taking this idea one step further, one identifies $\MF \otimes \MF$ with the $K$-theory groups of moduli spaces of rank 2 sheaves on the plane. The latter are symplectic varieties, and so any Lagrangian correspondence involving the said moduli spaces will give rise to homomorphisms:
\begin{equation}
\label{eqn:stable intro}
\MF \otimes \MF \rightarrow \MF \otimes \MF \left( \frac {u_2}{u_1} \right)
\end{equation}
where $u_1,u_2$ are simply equivariant parameters. The ``correct'' Lagrangian correspondences are the stable bases constructed in \cite{AO, MO, O1, O2, OS}, which were defined so that the operators \eqref{eqn:stable intro} satisfy the quantum Yang-Baxter equation. Though the details have not been written down (to the authors' knowledge), it is overwhelmingly expected that the operators \eqref{eqn:R-matrix intro} and \eqref{eqn:stable intro} match under the identification $u = \frac {u_2}{u_1}$. If this is so, then the fact that $R(u)$ is a rational function in $u$ would follow almost immediately, as operators given by Lagrangian correspondences are rational functions in $\frac {u_2}{u_1}$ for general reasons.

The structure of the present paper is the following:

\begin{itemize}
    
    \item In Section \ref{sec:Fock}, we review Macdonald polynomials, the Fock space and vertex operators
    
    \item In Section \ref{sec:algebra}, we review the quantum toroidal algebra $\qg$ and its shuffle algebra incarnation, and prove Theorem \ref{thm:main intro}
    
    \item In Section \ref{sec:stable basis}, we recall stable bases for Hilbert schemes, and present the alternative approach to the $R$-matrix using the operators \eqref{eqn:stable intro}

\end{itemize}

We would like to thank Jean-Emile Bourgine, Boris Feigin, Jan de Gier, Yakov Kononov, Andrey Smirnov and Alexander Tsymbaliuk for many interesting discussions on the quantum toroidal algebra and $R$-matrices. A. G. gratefully acknowledges financial support from the Australian Research Council.
A.N. gratefully acknowledges NSF grants DMS-$1760264$ and DMS-$1845034$, as well as support from the Alfred P.\ Sloan Foundation and the MIT Research Support Committee.


\section{The Fock space and symmetric functions}
\label{sec:Fock}
In this Section we provide some basic information about symmetric functions, Fock space, Heisenberg algebra and Macdonald operators.

\subsection{Partitions}
Let $\MP$ be the set of all partitions $\m=(\m_1,\m_2,\dots)$,  $\m_j\geq \m_{j+1}$. The {\it length of a partition} $\m$, denoted $\ell(\m)$, is equal to the number of non-zero parts in $\m$. 
The {\it weight of a partition} $|\m|$ is given by the sum of all parts $|\m|=\m_1+\dots +\m_{\ell(\m)}$. The notation $\m\vdash j$ means that $\m$ is a partition of $j$, $|\m|=j$. Let $r$ be a positive integer and $\mu$ a partition. Define the {\it part multiplicity function} $m_r(\m)$ which counts the number of parts in $\m$ equal to $r$. The {\it part multiplicity vector} $m(\m)$ is defined as $m(\m)=(m_1(\m),m_2(\m),\dots,m_{\m_1}(\m),0,\dots)$. It is clear that we have a bijection between the set of all partitions $\MP$ and non-negative integer vectors of part multiplicities. Using this correspondence we can add partitions by component-wise addition of the corresponding part multiplicity vectors. We can also subtract partitions when the subtraction of the part multiplicities does not produce vectors with negative entries.
So for three partitions $\m,\n,\l$ the notation:
\ba
\m\pm \n =\l
\quad \text{means}\quad 
m(\m)\pm  m(\n) =m(\l)
\ea
This defines a partial order: $\n \subseteq \m$, if $m(\m)- m(\n)\in\mathbb{Z}_{\geq 0}^{\infty}$. The intersection of two partitions $\m\cap \n$ is the partition $\l=\m\cap \n$ such that $\l\subseteq \m$, $\l\subseteq \n$ and the differences $\m-\l$ and $\n-\l$ have no common parts. We will also use the dominance order on partitions $\l\geq \m$, which means that $\l$ and $\m$ have the same weight and $\sum_{i=1}^k\l_i \geq \sum_{i=1}^k \m_i$ for all $k>0$.

Let us introduce the short hand notations involving factorials of part multiplicities of a partition $\m$:
\ba
&\m!:=\prod_{a = 1}^{\infty} m_a(\m)! 
\ea 
We define an analogue of the binomial coefficient for arguments in $\MP$. Let $\m,\n \in \MP$, we write:
\begin{align}
\label{binom0}
{\m \brack \n} :=
  \begin{cases}
\frac{\m!}{(\m-\n)! \n!},
&\quad \text{if } \n\subseteq \m \\
0,  & \quad \text{otherwise}
 \end{cases}
\end{align}
The usual properties of binomial coefficients apply to \eqref{binom0}, in particular the symmetry:
\begin{align}
\label{binom}
{\m \brack \n} =
{\m \brack \m - \n}
\end{align}
We will encounter summations over partitions. These summations always run over the infinite set of all partitions $\m\in\MP$ which we will simply write as $\sum_\m$. The summands will typically involve binomial coefficients \eqref{binom0} and due to their vanishing condition the summands will have finite support.

A partition $\m$ is identified with its Young diagram (we follow French notation), namely a collection of lattice squares in the first quadrant which is stable under gravity toward the third quadrant. The {\it transpose} of $\m$ is the partition $\m'$ whose Young diagram is the reflection of the Young diagram of $\m$ across the $x = y$ diagonal. The boxes of a Young diagram, denoted by $\square$, are identified by their coordinates $\square=(i,j)$, where the row index $i \geq 1$ increases upwards and the column index $j \geq 1$ increases rightwards. The arm and leg functions $a_\m(\square)$, $l_\m(\square)$ are defined by:
$$
a_\m(\square) = \m_i-j,
\quad
l_\m(\square) = \m'_j-i
$$
Sums or products over $\square \in \m$ mean that $\square$ runs over all the boxes in the Young diagram of the partition $\mu$. This should not be confused with the notation $a\in \mu$, which will simply be shorthand for the natural number $a$ running over the multiset $\{\mu_1,\dots,\mu_{\ell(\mu)}\}$.

\subsection{Symmetric functions}
Let $\Lambda$ be the ring of symmetric functions in the alphabet $x=(x_1,x_2,\dots)$ over $\mathbb{Z}$ and 
$\Lambda_{\mathbb{F}}=\Lambda \otimes \mathbb{F}$ where $\mathbb{F}=\mathbb{Q}(q,t)$. The basic symmetric functions are the monomial symmetric function $m_\l(x)$ and the power sum symmetric function $p_\l(x)$, $\l\in\MP$, where:
\ba
p_\l(x) = p_{\l_1}(x) \dots p_{\l_{\ell(\l)}}(x),
\qquad
p_r(x) = \sum_{i} x_i^r
\ea
For an integer $r$ and a partition $\lambda$ we set:
\begin{align}
\label{eq:z}
z_\l(q,t):=\l! \prod_{r\in\l}r \frac{1-q^r}{1-t^r}
\end{align}
The Macdonald scalar product on $\Lambda_\mathbb{F}$ is defined by:
\begin{align}
\label{pscalar}
\braket{p_\l}{p_\m}_{q,t} = \d_{\l,\m} z_\lambda(q,t)
\end{align}
Let us now recall the Macdonald polynomials $P_{\l}(x;q,t)$, which are characterised by the following two conditions \cite[Ch. \textrm{VI}]{Macdonald}:
\ba
&P_{\l}(x;q,t) = m_\l(x) +\sum_{\m<\l} u_{\l,\m}(q,t)m_\m(x)\\
&\braket{P_\l}{P_\m}_{q,t} = 0, \qquad \text{for}~\l\neq\m
\ea
As shown in \textit{loc. cit.}, these conditions imply that:
\begin{equation}
\label{eqn:b lambda}
\braket{P_\l}{P_\l}_{q,t} = \frac 1{b_\l(q,t)}, \quad \text{where} \quad b_\l(q,t) = \prod_{{\Large{\square}}\,\in\l}
\frac{1-q^{a_{\square}(\l)} t^{l_\square(\l)+1}}{1-q^{a_\square(\l)+1} t^{l_\square(\l)}}
\end{equation}
Thus, the dual Macdonald polynomials $Q_{\l}(x;q,t)$ are:
\begin{align}
\label{QMac}
Q_{\l}(x;q,t)=b_{\l}(q,t)P_{\l}(x;q,t)
\end{align}
and the scalar product becomes:
\begin{align}
\label{PQscalar}
\braket{P_\l}{Q_\m}_{q,t} = \d_{\l,\m}
\end{align}
We will effectively work with two bases of symmetric functions: the power sum basis and the Macdonald basis. Let us define the transition coefficients between these bases:
\begin{align}
\label{gcoef}
P_{\l}(x;q,t)= \sum_{\m} g_{\l,\m}(q,t)p_{\m}(x)
\end{align}
Since $P_{\l}(x;q,t)$ and $p_{\m}(x)$ are both homogeneous functions, this means that $g_{\l,\m}(q,t)$ are non-zero when $|\l|=|\m|$, i.e.:
\begin{align}
\label{gcoef0}
g_{\l,\m}(q,t) = 0,
\qquad
|\l|\neq|\m|
\end{align}

\subsection{The Fock space}
Let $\MH_a$ be the Heisenberg algebra over $\mathbb{F}$, generated by $\{a_{n},a_{-n}\}$ for $n\in\mathbb{Z}_{>0}$ with the defining relations:
\begin{align}
\label{Ha1}
[a_r, a_s] = [a_{-r}, a_{-s}] = 0 \qquad \text{and} \qquad [a_r,a_{-s}]=\d_{r,s}r\frac{1-q^r}{1-t^r}
\end{align}
for all $r,s > 0$. By multiplying these generators we get:
\ba
a_\m =\prod_{r \in \m} a_{r},\qquad
a_{-\m} =\prod_{r \in \m} a_{-r},\qquad \m\in\MP
\ea
The representation of this Heisenberg algebra is given on the Fock space $\MF$ and the dual Fock space $\MF^*$ by the formulas:
\begin{equation}
\label{Hact}
\begin{split}
a_{-\l} \ket{a_\m} = \ket{a_{\m+\l}},
\qquad
a_\l \ket{a_\m} =z_\l(q,t) {\m \brack \l }  \ket{a_{\m-\l}}
\\
\bra{a_\m} a_\l  =\bra{a_{\m+\l}}  ,
\qquad
\bra{a_\m} a_{-\l}  = \bra{a_{\m-\l}} z_\l(q,t) {\m \brack \l }
\end{split}
\end{equation}
The scalar product reads:
\begin{align}
\label{ascalar}
\braket{a_\l}{a_\m} = \d_{\l,\m} z_\l(q,t)
\end{align}
We denote by $T_{\a,\b}^{\m,\n}$ the matrix elements of the generic basis element $a_{-\mu} a_{\nu}$ of the Heisenberg algebra:
\begin{align}
\label{amatrix}
T_{\a,\b}^{\m,\n}:=\bra{a_\a}a_{-\m} a_{\n} \ket{a_\b}
\end{align}
We set $T$ to be the matrix of coefficients $T_{\a,\b}^{\m,\n}$  for all partitions $\a,\b,\m,\n$. 
The matrix elements of $T$ are computed using \eqref{Hact} and the matrix elements of $T^{-1}$ are computed using standard binomial identities:
\begin{align}
\label{TT}
T_{\a,\b}^{\m,\n}=\d_{\a-\m ,\b- \n}\,
z_\a(q,t) 
z_\n(q,t)
{\b \brack \n} ,
\qquad
(T^{-1})^{\a,\b}_{\m,\n} 
&= \d_{\m-\a ,\n- \b}\frac{(-1)^{\ell(\m-\a)}}{z_\a(q,t) z_\n(q,t)} {\m \brack \a} 
\end{align}

As graded vector spaces, the ring of symmetric functions and the Fock space $\MF$ are isomorphic:
\ba
\iota: \MF\rightarrow \Lambda_\mathbb{F},
\qquad
\ket{a_\l}\mapsto \ket{p_\l}
\ea
Using the isomorphism $\iota$ we can view Macdonald functions as vectors in the Fock space. They could be defined using the transition coefficients $g_{\l,\m}(q,t)$ between Macdonald functions and the power sums \eqref{gcoef}:
\begin{align}
\label{eq:g}
\bra{P_{\l}} = \sum_{\m} g_{\l,\m}(q,t)\bra{a_{\m}},
\qquad
\ket{Q_{\l}} =b_\l(q,t) \sum_{\m} g_{\l,\m}(q,t)\ket{a_{\m}}
\end{align}
These vectors give us an orthonormal basis. We also define the normalized vectors:
\begin{align}
\label{braket_norm}
\bbra{a_\l} := \frac{1}{z_\l(q,t)}\bra{a_\l},
\qquad
\kket{a_\l} := \ket{a_\l}
\end{align}
so that we have:
\begin{align}
\label{ascalar2}
\bbrakket{a_\l}{a_\m} = \d_{\l,\m}  
\end{align}

\subsection{Macdonald operators}
Following \cite{Sh,FHHSY} we build the vertex operators:
\begin{align}
\label{eq:eta}
&\eta(z) := \exp\left(\sum_{r=1}^{\infty}  \frac{1-t^{-r}}{r}a_{-r} z^r\right)\exp\left(- \sum_{r=1}^{\infty} \frac{1-t^{r}}{r} a_{r} z^{-r}\right)
\\
\label{eq:xi}
&\xi(z) := \exp\left(- \sum_{r=1}^{\infty}\frac{1-t^{-r}}{r}(t / q)^{r/2} a_{- r}z^{ r}\right)
\exp\left(\sum_{r=1}^{\infty} \frac{1-t^r}{r} (t/q)^{r/2} a_{ r}z^{- r}\right)
\end{align}
The Fourier modes of these operators, denoted by $\eta_n$ and $\xi_n$:
\begin{align*}
    \eta(z) = \sum_{n\in \mathbb{Z}} \eta_n z^{-n},
    \quad
        \xi(z) = \sum_{n\in \mathbb{Z}} \xi_n z^{-n}
\end{align*}
have a well defined action on $\mathcal{F}$. A product of Heisenberg operators is {\it normally ordered} if all the positive modes are on the right and the negative modes are on the left. We use the standard notation $:a_\nu a_{-\mu}:\, =
:a_{-\mu} a_\nu:\, = a_{-\mu} a_\nu$. As such, the products $\eta(z) \eta(w)$ and $\xi(z)\xi(w)$ can be ordered according to:
\begin{align}
\label{eq:etacom}
&\eta(z) \eta(w)     
=
\zeta \left( \frac zw \right)
:\eta(z) \eta(w):\\
\label{eq:xicom}
&\xi(z)\xi(w)
=
\zeta \left( \frac wz \right)
:\xi(z)\xi(w):
\end{align}
where:
\begin{equation}
\label{eqn:def zeta}
\zeta(x) = \frac {(x-1)(x-q t^{-1})}{(x-q)(x-t^{-1})}
\end{equation}
\footnote{Note that the rational function \eqref{eqn:def zeta} is actually $\zeta (x^{-1})^{-1}$ in the notation of \cite{R-matrix}, but either choice gives rise to the same algebra structures.} In formulas \eqref{eq:etacom} and \eqref{eq:xicom}, the rational function $\zeta$ is expanded in non-negative powers of $z/w$. Using the vertex operators one can write the free field realizations of Macdonald operators. The basic Macdonald operators are the operators $\widehat{E}_1$ and $\widehat{E}^*_1$ which act on the Macdonald functions as:\footnote{These operators are denoted by $E_{q,t}$ and $E_{q^{-1},t^{-1}}$ in \cite[Ch. \textrm{VI}]{Macdonald} respectively.}
\begin{align}
    \widehat{E}_1 \ket{P_\lambda}= \sum_{i\geq 1} (q^{\lambda_i}-1) t^{-i} \ket{P_\lambda},
    \qquad
    \widehat{E}_1^* \ket{P_\lambda}= \sum_{i\geq 1} (q^{-\lambda_i}-1) t^{i} \ket{P_\lambda}
\end{align}
These operators are realized on the Fock space by the zero modes $\eta_0$ and $\xi_0$:
\begin{align}
    \eta_0 = (t-1) \widehat{E}_1 + 1,
    \quad
        \xi_0 = (t^{-1}-1) \widehat{E}^*_1 + 1
\end{align}
For any partition $\l \in \MP$, consider the ring homomorphism $\ve_\l : \Lambda_{\mathbb{F}} \rightarrow \mathbb{F}$ generated by:
\begin{align}
\label{eq:spec}
\ve_\l(x_i) = q^{\l_i} t^{-i},
\qquad  i \geq 1
\end{align}
The eigenvalue equations for $\eta_0$ and $\xi_0$ are:
\begin{align*}
\frac{t^{-1}}{1-t^{-1}}\eta_0 \ket{P_\lambda}= 
\ve_\lambda(e_1(x))\ket{P_\lambda},
\qquad
\frac{t}{1-t}\xi_0 \ket{P_\lambda}= 
\ve_\lambda(e_1(x^{-1}))\ket{P_\lambda}
\end{align*}
where the alphabet $(x^r)$ means $(x_1^r,x_2^r,\dots)$ for any integer $r$, and $e_1(x)$ is the first elementary symmetric function $e_1(x)= x_1 +x_2 +\dots$. More generally \cite{Sh} we define the operators:
\begin{align}
\label{En}
    \widehat{E}_n&:=\frac{1}{n!}\frac{t^{-n(n+1)/2}}{(1-t^{-1})^n} 
\oint
\prod_{i=1}^n \frac{\dd z_i}{2 \pi i z_i } 
\prod_{1 \leq i\neq j \leq n}
\frac{(z_i-z_j)}{(z_i-t^{-1} z_j)}
:\eta(z_1)\dots \eta(z_n):\\
\label{En_dual}
    \widehat{E}_n^*&:=\frac{1}{n!}\frac{t^{n(n+1)/2}}{(1-t)^n} 
\oint
\prod_{i=1}^n \frac{\dd z_i}{2 \pi i z_i } 
\prod_{1 \leq i\neq j \leq n}
\frac{(z_i-z_j)}{(z_i-t z_j)}
:\xi(z_1)\dots \xi(z_n):
\end{align}
whose action in the Macdonald basis is:
\begin{align}
    \widehat{E}_n \ket{P_\lambda}= 
\ve_\lambda(e_n(x))\ket{P_\lambda},
\qquad
    \widehat{E}_n^* \ket{P_\lambda}=
\ve_\lambda(e_n(x^{-1}))\ket{P_\lambda} \label{eqn:eigenvalues of e}
\end{align}
Since the operators $\widehat{E}_n$ are diagonal in the Macdonald basis, they commute with each other, and thus they generate a ring of operators which is isomorphic to the ring of symmetric functions (see \cite{FHHSY}).


\section{The quantum toroidal and shuffle algebras}
\label{sec:algebra}

We will now discuss the algebraic structure that governs the Fock space, and encompasses both the vertex operators $\eta(z)$, $\xi(z)$ and the Macdonald operators $\widehat{E}_{n}$ and $\widehat{E}^*_{n}$. We will focus on two isomorphic incarnations of this algebra:
\begin{equation}
\label{eqn:iso intro}
\qg \cong \MD \MS
\end{equation}
where the left-hand side is the quantum toroidal algebra that was featured in the Introduction, and the right-hand side is the (double extended) shuffle algebra $\MS$. We will find the latter incarnation to be more convenient for our computations. 

\subsection{The triangular decomposition}

The quantum toroidal algebra of type $\mathfrak{gl}_1$ (sometimes called the Ding-Iohara-Miki algebra \cite{DI, Miki}) can be presented by generators and relations, although we believe that it is more enlightening for our purposes to start from its triangular decomposition:
\begin{equation}
\label{eqn:triangular}
\qg = \qgp \otimes \qgo \otimes \qgm 
\end{equation}
The three tensor factors above can be referred to as the ``positive nilpotent'', ``Cartan'' and ``negative nilpotent'' subalgebras, by analogy with the theory of finite-dimensional Lie algebras. Moreover, the first two and the latter two factors fit together in analogues of the ``Borel subalgebras'':
\begin{align}
&\qgg = \qgp \bigotimes_{\bbF} \qgop  \label{eqn:quantum plus} \\
&\qgl = \qgm \bigotimes_{\bbF} \qgom \label{eqn:quantum minus}
\end{align}
where $\bbF = \bbQ(q,t)$. Our reason for stating the facts above before actually defining $\qg$ is that one can reconstruct the algebra structure from the bialgebra structures on the halves \eqref{eqn:quantum plus} and \eqref{eqn:quantum minus}, plus the bialgebra pairing:
\begin{equation}
\label{eqn:bialgebra pairing}
\qgg \otimes \qgl \xrightarrow{\langle \cdot, \cdot \rangle} \bbF 
\end{equation}
that we will recall in Subsection \ref{sub:bialgebra}. Thus, we find it more convenient to work backwards.

\subsection{The positive halves} Recall the rational function $\zeta(x)$ from \eqref{eqn:def zeta}.
\begin{defn}

Consider the algebras: 
\begin{align}
&\qgp = \bbF \langle e_n \rangle_{n \in \bbZ} \Big / \Big( \text{relation \eqref{eqn:rel quantum plus}} \Big) \label{eqn:def quantum plus} \\ 
&\qgm = \bbF \langle f_n \rangle_{n \in \bbZ} \Big / \Big( \text{relation \eqref{eqn:rel quantum minus}} \Big) \label{eqn:def quantum minus}
\end{align}
where $e(z) = \sum_{n \in \bbZ} \frac {e_n}{z^n}$, $f(z) = \sum_{n \in \bbZ} \frac {f_n}{z^n}$, and:
\begin{align}
&e(z) e(w) \zeta \left(\frac zw \right) = e(w) e(z) \zeta \left(\frac wz \right) \label{eqn:rel quantum plus} \\
&f(z) f(w) \zeta \left(\frac wz \right) = f(w) f(z) \zeta \left(\frac zw \right) \label{eqn:rel quantum minus}
\end{align}
To make sense of the relation above, one clears the denominators of the $\zeta$ functions and then identifies the coefficients of any $z^aw^b$ in the left and right-hand sides. Note that $\qgm = \qgp^{\emph{op}}$.

\end{defn}

One would like $\qgp$ and $\qgm$ to admit bialgebra structures, but as in the theory of finite-dimensional quantum groups, before doing so one needs to enlarge these algebras. To do so, define:
\begin{equation}
\label{eqn:cartan subalgebra}
\qgopm = \bbF\left[c,\psi_0^\pm,\psi_1^\pm,\psi_2^\pm,\dots \right] \Big/ c \text{ and } \psi_0^\pm \text{ invertible}
\end{equation}

Let us form the power series $\psi^\pm(z) = \sum_{r = 0}^{\infty} \frac {\psi_{\pm r}^{\pm}}{z^{\pm r}}$, and define the ``Borel'' subalgebras:
\begin{align}
&\qgg = \qgp \bigotimes_{\bbF}  \qgop \label{eqn:quantum ext plus} \\
&\qgl = \qgm \bigotimes_{\bbF} \qgom \label{eqn:quantum ext minus}
\end{align}
where the multiplication in the algebras above is governed by the relations:
\begin{align}
&
\psi^{+}(z)e(w) \z\left(\frac zw\right) = e(w)\psi^{+}(z)
\z\left(\frac wz\right)
   \label{eqn:rel quantum ext plus} \\
&\psi^{-}(z)f(w) \z\left(\frac wz\right) =f(w) \psi^{-}(z) \z\left(\frac zw\right) \label{eqn:rel quantum ext minus}
\end{align}
as well as the fact that $c$ is central. To make sense of the relations above, expand the rational functions in the right-hand sides as a power series in $w^{-1}$ and $w$, respectively.

\subsection{The non-negative halves} 
\label{sub:bialgebra}

The extended algebras $\qgg$ and $\qgl$ admit topological coproducts, completely determined by the following formulas: 
\begin{align}
&\Delta(c) = c \otimes c \label{eqn:cop quantum 0} \\
&\Delta(\psi^+(w)) = \psi^+(w) \otimes \psi^+(w/c^{(1)}) \label{eqn:cop quantum 1} \\
&\Delta(\psi^-(w)) = \psi^-(w/c^{(2)}) \otimes \psi^-(w) \label{eqn:cop quantum 2} \\
&\Delta(e(z))= e(z)\otimes 1 + \psi^+(z)\otimes e(z/c^{(1)}) \label{eqn:cop quantum 3} \\
&\Delta(f(z))= 1\otimes f(z) +  f(z/c^{(2)}) \otimes \psi^-(z) \label{eqn:cop quantum 4}
\end{align}
where $c^{(1)} = c \otimes 1$, $c^{(2)} = 1 \otimes c$. The formulas above induce bialgebra structures on $\qgg$ and $\qgl$. The counit $\varepsilon$ annihilates all generators $e_n, f_n$ and $\psi_{\pm r}^\pm$ for $r \neq 0$, and sends $c$ and $\psi^\pm_0$ to 1.

\begin{rmk}
\label{rem:hopf}

Note that $\qgg$ and $\qgl$ are actually Hopf algebras, and it is easy to see how to write down the antipode maps:
\begin{equation}
\label{eqn:antipode}
\qgg \xrightarrow{S} \qgg, \qquad \qgl \xrightarrow{S} \qgl
\end{equation}
from \eqref{eqn:cop quantum 1}--\eqref{eqn:cop quantum 3}. As we will not need the antipode in the present paper, we leave this as an exercise to the interested reader. Indeed, it is well known that the antipode is not an extra structure on a bialgebra, but a property which is determined by the product and coproduct.

\end{rmk}

\begin{defn}

A bilinear pairing:
\begin{equation}
\label{eqn:pairing}
\Big \langle \cdot, \cdot \Big \rangle : \qgg \otimes \qgl \longrightarrow \bbF
\end{equation}
is called a bialgebra pairing if it satisfies the properties:
\begin{align}
&\Big \langle a, bb' \Big \rangle = \Big \langle \Delta(a), b \otimes b' \Big \rangle \label{eqn:bialg 1} \\
&\Big \langle aa', b \Big \rangle = \Big \langle a \otimes a', \Delta^{\emph{op}}(b) \Big \rangle \label{eqn:bialg 2}
\end{align}
for all $a,a' \in \qgg$ and all $b,b' \in \qgl$. 

\end{defn}

Consider the assignments:
\begin{equation}
\label{eqn:pairing 1}
\Big \langle e(z), f(w) \Big \rangle = \frac {1}{\alpha} \cdot \delta \left(\frac zw \right) 
\end{equation}
where:
\begin{align}
\label{alpha}
\alpha : = \frac{1-qt^{-1}}{(1-q)(1-t^{-1})}
\end{align}
and:
\begin{equation}
\label{eqn:pairing 2}
\Big \langle \psi^+(z), \psi^-(w) \Big \rangle = \frac {\zeta \left(\frac wz \right)}{\zeta \left(\frac zw \right)} 
\end{equation}
(the right-hand side of \eqref{eqn:pairing 2} should be expanded as a power series in $|z| \gg |w|$) and $\langle c, - \rangle = \langle -, c \rangle = 1$. All other pairings between the $e$'s, $f$'s and $\psi$'s vanish. It is straightforward to see that the assignments above determine a bialgebra pairing, in accordance with \eqref{eqn:bialg 1}--\eqref{eqn:bialg 2}. Moreover, with respect to the $\mathbb{Z} 
\times \mathbb{Z}$ grading of the quantum toroidal algebra given by:
\begin{equation}
\label{eqn:grading on algebra}
\deg \ e_n = (1,n), \qquad \deg \ f_n = (-1,n), \qquad \deg \ \psi_n^\pm = (0,\pm n), \qquad \deg \ c = (0,0)
\end{equation}
we have $\langle a, b \rangle \neq 0$ only if $\deg \ a + \deg \ b = 0$.

\subsection{The full algebra} 

The main reason for introducing all the structure above is to give the following.

\begin{defn}

The Drinfeld double of $\qgg$ and $\qgl$ is defined as:
\begin{equation}
\label{eqn:double}
\qg = \qgg \otimes \qgl \Big / \left\{c \otimes 1 - 1 \otimes c , \psi_0^{+}\otimes 1 - 1 \otimes(\psi_0^{-})^{-1} \right\}
\end{equation}
where the mutiplication is controlled by the following relation:
\begin{equation}
\label{eqn:double rel}
a_1b_1 \Big \langle a_2,b_2 \Big \rangle = \Big \langle a_1,b_1 \Big \rangle b_2 a_2
\end{equation}
for all $a \in \qgg = \qgg \otimes 1 \subset \qg$ and $b \in \qgl = 1 \otimes \qgl \subset \qg$. In the formula above, we use Sweedler notation $\Delta(a) = a_1 \otimes a_2$ and $\Delta(b) = b_1 \otimes b_2$ for the coproduct, meaning that there are implied summation signs in front of the tensors and in front of the LHS and RHS of \eqref{eqn:double rel}.

\end{defn}

We leave it as an exercise to the interested reader that formula \eqref{eqn:double rel} implies the relations:
\begin{align}
&\psi^{-}(c z) e(w)\z\left(\frac zw\right)=e(w) \psi^{-}(c z)\z\left(\frac wz\right)  \label{eqn:rel double 1} \\
&\psi^{+}(c z) f(w)\z\left(\frac wz\right)=f(w) \psi^{+}(c z) \z\left(\frac zw\right)  \label{eqn:rel double 2} \\
&[e(z),f(w)]= \frac{1}{\alpha}\left( \d( \frac{cz}{w})\psi^-(w) - \d(\frac{cw}{z})\psi^+(z)\right) \label{eqn:rel double 3} 
\end{align}
The currents $\psi^{\pm}(z)$ can be viewed as exponential generating functions:
\begin{align}
\label{eqn:psi to h}
\psi^{\pm}(z)=\psi^{\pm}_0 \exp\left( \sum_{n>0}\frac{1}{n} (1-q^n)(1-t^{-n})(1-q^{-n}t^{n})h_{\pm n} z^{\mp n}\right)
\end{align}
Then the Drinfeld double relation implies that $\{h_{\pm n}\}_{n>0}$ generate a deformed Heisenberg algebra:
\begin{align}
[h_n,h_m] = [h_{-n},h_{-m}] = 0 \quad \text{and} \quad [h_n,h_{-m}]= \d_{m,n} \frac{n (c^n-c^{-n})}{(1-q^n)(1-t^{-n})(q^{-n}t^{n} - 1)} \label{eqn:rel double 4}
\end{align}
for all $m,n > 0$. Combining all the quadratic relations in the present Section, specifically \eqref{eqn:rel quantum plus}, \eqref{eqn:rel quantum minus}, \eqref{eqn:rel quantum ext plus}, \eqref{eqn:rel quantum ext minus}, \eqref{eqn:rel double 1}, \eqref{eqn:rel double 2}, \eqref{eqn:rel double 3}, \eqref{eqn:rel double 4} gives us the usual generators and relations of the quantum toroidal algebra. 

\begin{rmk}

In order for $\qg$ defined as in \eqref{eqn:double} to be an algebra, we need to make sense of products:
$$
a b a' b' a'' b'' \dots 
$$
for various $a,a',a'',\dots \in \qgg  \subset \qg$ and $b,b',b'',\dots \in \qgl \subset \qg$. In order for such a product to unambiguously describe an element of \eqref{eqn:double}, we need a way to convert products of the form $ab$ to linear combinations of products of the form $ba$, and vice versa. This is done by the following formula, which is equivalent to \eqref{eqn:double rel} by the properties of the antipode map \eqref{eqn:antipode}:
\begin{equation}
\label{eqn:double rel antipode}
ab = \Big \langle a_1,b_1 \Big \rangle b_2 a_2 \Big \langle a_3, S(b_3) \Big \rangle 
\end{equation}
where the right-hand side involves the Sweedler notation for the iterated coproduct: $\Delta^{(2)}(a) = a_1 \otimes a_2 \otimes a_3$ and $\Delta^{(2)}(b) = b_1 \otimes b_2 \otimes b_3$. The interested reader may check that there exists an antipode map (it is uniquely determined by the coproduct formulas) which satisfies all the Hopf algebra axioms. The reason why we choose to not write down the antipode map explicitly is that formula \eqref{eqn:double rel} leads to more elegant formulas than the equivalent formulas \eqref{eqn:double rel antipode} in the case of the quantum toroidal algebra. 

\end{rmk}

\subsection{The action I}
Let $u\in \mathbb{C}^*$ be the {\emph{spectral parameter}}. It is straightforward to check that there is an action \cite{FT,SV}: 
\begin{equation}
\label{eqn:action}
\qg \curvearrowright \Lambda_{\bbF}(u)
\end{equation}
generated by the assignments: 
\begin{align}
&c \mapsto (q/t)^{1/2} \\ 
&\psi_0^\pm \mapsto 1 \\
&e(z) \mapsto u  \xi(z)  \\
&f(z) \mapsto u^{-1}  \eta(z) \\
&
\psi^+(z)\mapsto \exp\left(-\sum_{r=1}^{\infty}\frac{1}{r} (1-t^r)(1-q^{-r} t^r) (q/t)^{r/2}  a_r z^{- r}\right)\\
&\psi^-(z)\mapsto \exp\left(\sum_{r=1}^{\infty}\frac{1}{r} (1-t^{-r})(1-q^{-r} t^{r})    a_{-r}z^{ r}\right)
\end{align}
Indeed, all one needs to show is that the formulas above satisfy relations \eqref{eqn:rel quantum plus}, \eqref{eqn:rel quantum minus}, \eqref{eqn:rel quantum ext plus}, \eqref{eqn:rel quantum ext minus}, \eqref{eqn:rel double 1}, \eqref{eqn:rel double 2}, \eqref{eqn:rel double 3}, \eqref{eqn:rel double 4}, and we leave this as an exercise to the interested reader. 

\subsection{The universal $R$-matrix} Given a Hopf algebra $A$, an element $\MR \in A \otimes A$ which satisfies:
\begin{align}
\label{deltaUR}
&\MR\D(g)=\Delta^{\text{op}}(g) \MR, \quad \forall g\in A
\\
\label{URR1}
&(\D\otimes 1)\MR=
\MR_{1,3}\MR_{2,3} \\
\label{URR2}
&(1 \otimes \D)\MR=
\MR_{1,3}\MR_{1,2}
\end{align}
is called a universal $R$-matrix. As a consequence of \eqref{deltaUR}, \eqref{URR1} and \eqref{URR2} we have the Yang--Baxter equation:
\begin{align}
    \label{eq:Univ_YB}
\MR_{1,2}\MR_{1,3}\MR_{2,3}
=
\MR_{2,3}\MR_{1,3}\MR_{1,2}
\end{align}
Drinfeld doubles such as \eqref{eqn:double} admit universal $R$-matrices:
\begin{equation}
\label{eqn:r-matrix abstract}
\mathcal{R} \in \qg \ \widehat{\otimes} \ \qg
\end{equation}
(the completed tensor product is necessary because the coproduct \eqref{eqn:cop quantum 0}--\eqref{eqn:cop quantum 4} is topological, i.e. its values are infinite sums) which are none other than the canonical tensors of the pairing \eqref{eqn:pairing}:
\begin{equation}
\label{eqn:r-matrix concrete}
\mathcal{R} = \sum_i a_i \otimes b^i
\end{equation}
where $\{a_i\}$ and $\{b^i\}$ are orthogonal bases of the subalgebras $\qgg$ and $\qgl$. If these subalgebras were finite-dimensional and the pairing were non-degenerate, then \eqref{eqn:r-matrix concrete} would be completely well-defined. However, both of the aforementioned properties fail for the quantum toroidal algebra $\qg$. The way to fix this failure is to note that the restriction of the pairing \eqref{eqn:pairing}:
\begin{equation}
\label{eqn:pairing restricted}
\Big \langle \cdot, \cdot \Big \rangle : \qgp \otimes \qgm \longrightarrow \bbF
\end{equation}
is indeed non-degenerate, and that the failure of non-degeneracy is simply due to the central elements $c$ and $\psi_0^\pm$ being in the kernel of the pairing. The way to fix this is to add central elements $d$ and $d^\perp$ to the quantum toroidal algebra, and extend the Hopf pairing in such a way that $\{d,d^\perp\}$ are orthogonal to $\{\log(c), \log(\psi_0^-)\}$. On this extended algebra, the pairing becomes non-degenerate and we have:
\begin{align}
\label{KR}
\MR=\MRb\MK
\end{align}
where we separate the ``Cartan part'' of the $R$-matrix:
\begin{align}
\label{MK}
&\MK=\exp\left( \sum_{r = 1}^{\infty} \frac {(1-q^r)(1-t^{-r})(1 - q^{-r}t^{r}) (h_{r} \otimes h_{-r})}r  \right)
e^{\log(c)\otimes d+d\otimes \log(c) +\log(\psi_0^-)\otimes d^{\perp}+d^{\perp}\otimes \log(\psi^-_0) }
\end{align}
from the canonical tensor of the restricted pairing \eqref{eqn:pairing restricted}, namely:
\begin{equation}
\label{eqn:r-matrix for quantum toroidal}
\MRb = \sum_i a_i \otimes a^i = 1 \otimes 1 + \alpha \sum_{n \in \bbZ} e_n \otimes f_{-n} + \text{higher terms}
\end{equation}
where $a_i$ and $a^i$ run over dual bases of $\qgp$ and $\qgm$. In the formula above, we have for all $i$:
\begin{equation}
\label{eqn:degrees}
\deg \ a_i + \deg \  a^i = 0
\end{equation}
where $\deg$ is the $\bbZ \times \bbZ$ grading on $\qg$ which is completely determined by $\deg \ e_n = (1,n)$ and $\deg \ f_n = (-1,n)$ and multiplicativity (see \eqref{eqn:grading on algebra}). Relation \eqref{eqn:degrees} is due to the fact that the pairing \eqref{eqn:pairing 1}, \eqref{eqn:pairing 2} only pairs non-trivially elements of opposite degrees. In the present paper, we will study the object \eqref{eqn:r-matrix for quantum toroidal}, and slightly abusively refer to it as ``the universal $R$-matrix of the quantum toroidal algebra''.

\subsection{The shuffle algebra} In the previous Subsections, we discussed the definition of the quantum toroidal algebra $\qg$, as well as its Hopf algebra structure and universal $R$-matrix. We saw that the quantum toroidal algebra acts on Fock space (via the vertex operators $\xi(z)$ and $\eta(z)$, but one thing which wasn't clear is how to incorporate the Macdonald operators $\widehat{E}_n$ and $\widehat{E}^*_n$ into this algebra action. Moreover, while \eqref{eqn:r-matrix for quantum toroidal} gives a recipe for defining the universal $R$-matrix, it is by no means an explicit formula. To remedy these issues, we will introduce an algebraic structure called the shuffle algebra, which will be isomorphic to the positive half of the quantum toroidal algebra:
$$
\MS \cong \qgp
$$
However, while $\qgp$ is defined by generators and relations, $\MS$ is completely explicit, and also leads to explicit formulas. To define the shuffle algebra, start from the vector space of symmetric polynomials in arbitrarily many variables:
\begin{equation}
\label{eqn:big shuf}
\MV = \bigoplus_{n = 0}^{\infty} \bbF(z_1,\dots,z_n)^{\text{symmetric}}
\end{equation}
Let us make $\MV$ into an algebra via the so-called shuffle product:
\begin{equation}
\label{eqn:shuf prod}
F(z_1,\dots,z_n) * F'(z_1,\dots,z_{n'}) = \text{Sym} \left[ \frac {F(z_1,\dots,z_n)F'(z_{n+1},\dots,z_{n+n'})}{n! n'!} \mathop{\prod_{1\leq a \leq n}}_{n < b \leq n+n'} \zeta \left( \frac {z_a}{z_b} \right) \right]
\end{equation}
where $\text{Sym}$ refers to the sum over all $(n+n')!$ permutations of the variables. It is a straightforward exercise to check that $\MV$ is an associative algebra with unit the function 1 in zero variables. 

\begin{defn}

The shuffle algebra is the subalgebra $\MS \subset \MV$ generated by:
\begin{equation}
\label{eqn:generators}
\{z_1^n\}_{n \in \bbZ}
\end{equation}

\end{defn}

It was shown in \cite{Shuf} that $\MS$ coincides with the subset of $\MV$ of rational functions of the form:
\begin{equation}
\label{eqn:wheel form}
F(z_1,\dots,z_n) = f(z_1,\dots,z_n) \prod_{1 \leq i \neq j \leq n}
\frac {1 - \frac {z_i}{z_j}}{\left(1 - \frac {qz_i}{z_j}\right) \left(1- \frac {z_i}{tz_j} \right)}
\end{equation}
where $f$ is a symmetric Laurent polynomial which satisfies the so-called wheel conditions of \cite{FHHSY}:
\begin{equation}
\label{eqn:wheel explicit}
f\left(\frac {xq}t, xq, x, z_4,\dots,z_n \right) = f\left(\frac {xq}t, \frac xt, x, z_4,\dots,z_n \right) = 0
\end{equation}
As such, it is very easy to note that the following explicit rational functions lie in $\MS$:
\begin{equation}
\label{eqn:comm}
F_n = \alpha^n \prod_{1 \leq i \neq j \leq n} \zeta \left(\frac {z_i}{z_j} \right) = \alpha^n \prod_{1 \leq i\neq j \leq n} \frac{(z_i-z_j)(z_i-q t^{-1}z_j)}{(z_i-q z_j)(z_i-t^{-1} z_j)}
\end{equation}
It was shown in \cite{FHHSY} that the rational functions $F_n$ commute with each other, and thus generate a commutative subalgebra of the shuffle algebra:
\begin{equation}
\label{eqn:comm}
\bbF[F_1,F_2,\dots] \subset \MS
\end{equation}
The connection between the shuffle algebra and the quantum toroidal algebra is the following result. 

\begin{prop}
\label{prop:iso}

The maps $f_n \mapsto z_1^n$ and $e_n \mapsto z_1^n$ extend to isomorphisms:
\begin{equation}
\label{eqn:iso}
\qgm \xrightarrow{\sim} \MS \qquad \text{and} \qquad \qgp \xrightarrow{\sim} \MS^{\emph{op}}
\end{equation}
respectively (above and henceforth, $\MS^{\emph{op}}$ refers to $\MS$ with the opposite algebra structure).

\end{prop}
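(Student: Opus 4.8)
The plan is to prove the isomorphism $\qgm \xrightarrow{\sim} \MS$; the statement for $\qgp$ is then automatic, because $\qgp = \qgm^{\emph{op}}$ (the defining relations \eqref{eqn:rel quantum plus} and \eqref{eqn:rel quantum minus} differ only by swapping the order of multiplication), so an algebra isomorphism $\qgm \to \MS$ with $f_n \mapsto z_1^n$ is simultaneously an isomorphism $\qgp = \qgm^{\emph{op}} \to \MS^{\emph{op}}$ with $e_n \mapsto z_1^n$. The argument for $\qgm \xrightarrow{\sim} \MS$ splits into well-definedness, surjectivity, and injectivity.

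Well-definedness amounts to checking that the elements $z_1^n \in \MS$ satisfy relation \eqref{eqn:rel quantum minus}. Encoding the generators into a generating series $\sum_{n \in \bbZ} z_1^n z^{-n}$ and applying the shuffle product \eqref{eqn:shuf prod} in two variables, one finds that both sides of \eqref{eqn:rel quantum minus} evaluate to $\zeta(z_1/z_2)\,\zeta(z_2/z_1)$ times the same $\text{Sym}$-invariant product of formal $\delta$-functions (matching $z$ with one of $z_1,z_2$ and $w$ with the other); comparing coefficients of $z^a w^b$ yields the relation. The point is just that $\text{Sym}$ is insensitive to relabelling the variables, so this step is short. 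Surjectivity is then immediate, since $\MS$ is by definition the subalgebra of $\MV$ generated by $\{z_1^n\}_{n \in \bbZ}$, which is the image of $\qgm$.

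Injectivity is the substance of the proposition: one must show that $\qgm$ carries no relations beyond those generated by \eqref{eqn:rel quantum minus}. I would proceed in two steps. First, a PBW-type reordering: clearing denominators in \eqref{eqn:rel quantum minus} yields, for each pair of indices, a relation rewriting a product $f_a f_b$ in terms of the reversed product $f_b f_a$ and of products $f_{a'} f_{b'}$ with shifted indices, which lets one ``straighten'' arbitrary words; iterating shows that $\qgm$ is spanned by an explicit, combinatorially indexed family of ordered monomials in the $f_n$. Second, one shows that the images of these ordered monomials in $\MS$ are linearly independent. This is where the description of $\MS$ recalled from \cite{Shuf} --- as the set of rational functions of the form \eqref{eqn:wheel form} with $f$ a symmetric Laurent polynomial satisfying the wheel conditions \eqref{eqn:wheel explicit} --- is used: one filters $\MS$ (for instance by a leading-monomial order on the symmetric Laurent part) and checks, using \eqref{eqn:wheel explicit}, that distinct ordered monomials have distinct nonzero leading terms. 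Linear independence together with surjectivity gives the isomorphism.

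The hard part is this second step: extracting a clean PBW spanning family from the relatively intricate relation \eqref{eqn:rel quantum minus} (verifying that the straightening terminates and that no member of the family becomes redundant), and then matching the images against the wheel-condition description of $\MS$. This is essentially the computation carried out in \cite{FHHSY, SV, Shuf}, to which I would defer for the details. A more conceptual alternative to injectivity runs through pairings: equip $\MS$ with its Feigin--Odesskii topological coproduct and the bialgebra pairing determined by $\langle z_1^a, z_1^b \rangle = \frac{1}{\alpha}\,\delta_{a,b}$ and the axioms \eqref{eqn:bialg 1}--\eqref{eqn:bialg 2}; verify that the surjection $\qgm \to \MS$ intertwines this structure with the Drinfeld coproduct \eqref{eqn:cop quantum 4} and pairing, so that the restricted pairing \eqref{eqn:pairing restricted} is the pullback of the shuffle pairing; then non-degeneracy of \eqref{eqn:pairing restricted} forces the kernel of the surjection to vanish. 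This is elegant, but the non-degeneracy of \eqref{eqn:pairing restricted} is itself most naturally proved through the shuffle realization, so one cannot fully escape the combinatorics of the first route.
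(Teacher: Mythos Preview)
The paper does not actually prove this proposition: it is stated without proof, as a known result (the surrounding text attributes the wheel-condition description of $\MS$ to \cite{Shuf}, and the isomorphism itself is implicitly taken from the same circle of references \cite{FHHSY, SV, Shuf}). So there is no ``paper's own proof'' to compare against; the paper simply cites the literature.

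Your outline is a reasonable summary of how that literature proceeds, and you correctly identify the division of labor: well-definedness and surjectivity are trivial, and the content lies in injectivity. Your two proposed routes (PBW straightening plus linear independence via the wheel-condition description, or alternatively via non-degeneracy of the bialgebra pairing) are both standard and both ultimately appeal to \cite{Shuf} or its predecessors, which is exactly what the paper does by stating the result and moving on. One small comment: your description of the PBW step (``straighten arbitrary words'' using the quadratic relation) is slightly optimistic as written, since the shifted terms in the relation mean the reordering is not a simple descent in any obvious order; the actual argument in \cite{Shuf} goes the other way, proving directly that the graded pieces of $\MS$ and $\qgm$ have the same dimension by computing both. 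But since you explicitly defer the hard step to the literature, this is not a gap in your proposal so much as a note on what you would find there.
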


\subsection{The coproduct on the shuffle algebra} The Hopf algebra structure on $\qg$ can be presented in terms of the shuffle algebra (see \cite{Shuf}, or \cite{R-matrix} for a more recent review in a related language to ours). Explicitly, first consider the extended shuffle algebras:
\begin{align}
&\MS^{\leq} = \MS \bigotimes_{\bbF} \qgom \Big / \text{relation \eqref{eqn:rel ext minus}} \label{eqn:ext minus} \\
&\MS^{\geq} = \MS^{\text{op}} \bigotimes_{\bbF} \qgop \Big / \text{relation \eqref{eqn:rel ext plus}} \label{eqn:ext plus} 
\end{align}
(compare with \eqref{eqn:cartan subalgebra}) where the relations are best written in terms of the formal series $\psi^\pm(w) = \sum_{r=0}^{\infty} \psi^{\pm}_{\pm r} w^{\mp r}$:
\begin{align}
&F(z_1,\dots,z_n) \psi^-(w) = \psi^-(w) F(z_1,\dots,z_n) \prod_{i=1}^n \frac {\zeta \left( \frac {z_i}w \right)}{\zeta \left( \frac w{z_i} \right)} \label{eqn:rel ext minus} \\
&G(z_1,\dots,z_n) \psi^+(w) = \psi^+(w) G(z_1,\dots,z_n) \prod_{i=1}^n \frac {\zeta \left( \frac w{z_i} \right)}{\zeta \left( \frac {z_i}w \right)} \label{eqn:rel ext plus} 
\end{align}
for all $F \in \MS$ and $G \in \MS^{\text{op}}$. The right-hand sides of the formulas above are expanded as power series in $w$ in the same direction as $\psi^\pm(w)$. By comparing \eqref{eqn:rel ext plus}--\eqref{eqn:rel ext minus} with \eqref{eqn:quantum ext plus}--\eqref{eqn:quantum ext minus}, we can see that the isomorphisms of Proposition \ref{prop:iso} extend to isomorphisms:
\begin{equation}
\label{eqn:iso geq}
\MS^{\leq} \xrightarrow{\sim} \qgl \qquad \text{and} \qquad \MS^{\geq} \xrightarrow{\sim} \qgg
\end{equation}
The reason for the extended algebras $\MS^\geq$ and $\MS^\leq$ is that they admit topological bialgebra structures via the following coproduct formulas:
\begin{align}
&\Delta(c) = c \otimes c \label{eqn:cop shuf 0} \\
&\Delta(\psi^+(w)) = \psi^+(w) \otimes \psi^+(w/c^{(1)}) \label{eqn:cop shuf 1} \\
&\Delta(\psi^-(w)) = \psi^-(w/c^{(2)}) \otimes \psi^-(w) \label{eqn:cop shuf 1.5} \\
&\Delta(F) = \sum_{k=0}^n \frac {F(z_1c^{(2)},\dots,z_k c^{(2)} \otimes z_{k+1},\dots,z_n) \cdot \prod_{i=1}^k \psi^-(z_i)}{\prod_{i=1}^k \prod_{j=k+1}^n \zeta \left(\frac {z_i}{z_j} \right)}\label{eqn:cop shuf 2} \\
&\Delta(G) = \sum_{k=0}^n \frac {\prod_{j=k+1}^n \psi^+(z_j) \cdot G(z_1,\dots,z_k \otimes z_{k+1}c^{(1)},\dots,z_nc^{(1)})}{\prod_{i=1}^k \prod_{j=k+1}^n \zeta \left(\frac {z_j}{z_i} \right)}  \label{eqn:cop shuf 3}
\end{align}
for any $F(z_1,\dots,z_n) \in \MS$ and any $G(z_1,\dots,z_n) \in \MS^{\text{op}}$. To make sense of the right-hand side of \eqref{eqn:cop shuf 2} and \eqref{eqn:cop shuf 3}, one expands the rational function as a power series in $|z_i| \ll |z_j|$ (for all $1\leq i \leq k$, $k < j \leq n$) and in every monomial of the resulting expression places all powers of $z_1,\dots,z_k$ to the left of the $\otimes$ sign, and all powers of $z_{k+1},\dots,z_n$ to the right of the $\otimes$ sign. It is easy to see that the coproduct \eqref{eqn:cop shuf 1}--\eqref{eqn:cop shuf 3} match the coproduct \eqref{eqn:cop quantum 0}--\eqref{eqn:cop quantum 4} under the isomorphisms \eqref{eqn:iso geq}.

\subsection{The action II}

With Proposition \ref{prop:iso} in mind, it should come to no surprise that one can present the action of the quantum toroidal algebra on Fock space in the language of the shuffle algebra (this was observed in \cite{FHHSY}). Specifically, composing the isomorphisms \eqref{eqn:iso} with the action \eqref{eqn:action} implies that:
\begin{align}
&F \in \MS \quad \text{acts on } \Lambda_{\bbF} \text{ via} \quad \frac {1}{n!}\oint_{\MC_n} \prod_{i=1}^n \frac{\dd z_i}{2 \pi i z_i } 
F(z_1,\dots,z_n) :\eta(z_1)\dots \eta(z_n): \label{eqn:action shuf plus} \\
&G \in \MS^{\text{op}} \quad \text{acts on } \Lambda_{\bbF} \text{ via} \quad \frac {1}{n!}\oint_{\MC_n} \prod_{i=1}^n \frac{\dd z_i}{2 \pi i z_i } 
G(z_1,\dots,z_n) :\xi(z_1)\dots \xi(z_n): \label{eqn:action shuf minus}
\end{align}
where $\MC_n$ is the contour given by $|z_1|=\dots=|z_n|=1$. The formulas above allow one to motivate the shuffle algebra as the abstract structure which governs the composition of the vertex operators $\eta(x)$ and $\xi(x)$, respectively. When $F = F_n$, the formulas above lead to a family of Macdonald operators similar to those in \eqref{En}-\eqref{En_dual}:
\begin{align}
\label{Fn}
\widehat{F}_{n}&:=\frac{1}{n!}
\oint_{\MC_n}^{|q| < 1 < |t|}
\prod_{i=1}^n \frac{\dd z_i}{2 \pi i z_i } 
F_n(z_1,\dots,z_n)
:\eta(z_1)\dots \eta(z_n):\\
\label{Fn_dual}
\widehat{F}_{n}^*&:=
\frac{1}{n!}
\oint_{\MC_n}^{|q| > 1 > |t|}
\prod_{i=1}^n \frac{\dd z_i}{2 \pi i z_i } 
F_n(z_1,\dots,z_n)
:\xi(z_1)\dots \xi(z_n):
\end{align}
where the superscript on the integral sign denotes the assumption on the sizes of
the parameters $q$ and $t$ that must be made in order to evaluate the contour integral. It is convenient to express the action of these operators in the Macdonald basis using their generating functions:
\begin{align}
\label{Fgen}
\widehat{F}(v):=\sum_{n=0}^{\infty}v^{-n} \widehat{F}_n,
\qquad
\widehat{F}^*(v):=\sum_{n=0}^{\infty}(-v)^{n} \widehat{F}_n^*
\qquad
\end{align}
In addition we define two functions:
\begin{align}
\label{eq:N}
N(v)&=\exp\left(\sum_{r>0} \frac{1}{r} \frac{1-q^rt^{-r}}{(1-q^r)(1-t^{-r})} v^{-r} \right)\\
\label{eq:Ns}
N^*(v)&=\exp\left(\sum_{r>0} \frac{1}{r} \frac{1-q^{-r}t^{r}}{(1-q^{-r})(1-t^{r})} v^{r} \right)
\end{align}
where we suppose $|q|,|t^{-1}|,|v^{-1}|<1$ in \eqref{eq:N} and $|q|,|t^{-1}|,|v^{-1}|>1$ in \eqref{eq:Ns}.
\begin{lem}
\label{lem:F}
The action of $\widehat{F}(v)$ and $\widehat{F}^*(v)$ in the Macdonald basis is given by:
\begin{align}
    \widehat{F}(v) \ket{P_\lambda}= 
    f_\l(v)\ket{P_\lambda},
\qquad
\widehat{F}^*(v) \ket{P_\lambda}=
f^*_\lambda(v)\ket{P_\lambda} \label{eqn:eigenvalue of f}
\end{align}
and the eigenvalues $f_\lambda (v),f^*_\lambda (v)$ are defined by:
\begin{align}
\label{eq:f-product}
f_\l(v)&:=N(v)\prod_{(i,j)\in \l}\frac{1-v^{-1} t^{-i+1}q^{j-1}}{1-v^{-1} t^{-i}q^{j}}\\
\label{eq:fs-product}
f_\l^*(v)&:=
N^*(v)\prod_{(i,j)\in \l}\frac{1-v  t^{i-1}q^{-j+1}}{1-v t^{i}q^{-j}}
\end{align}
\end{lem}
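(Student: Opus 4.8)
The plan is to compute the eigenvalues of the operators $\widehat{F}_n$ (and analogously $\widehat{F}_n^*$) directly, by evaluating the contour integrals \eqref{Fn} on a Macdonald polynomial and then resumming the generating function \eqref{Fgen}. The starting point is the known fact (already recalled in \eqref{eqn:comm} and the surrounding discussion, going back to \cite{FHHSY}) that the shuffle elements $F_n = \alpha^n \prod_{i\neq j}\zeta(z_i/z_j)$ generate the same commutative subalgebra as the Macdonald operators $\widehat{E}_n$, so each $\widehat{F}_n$ is automatically diagonal in the basis $\ket{P_\l}$; the content of the lemma is the precise value of the eigenvalue. Since $\widehat{F}_n$ is a polynomial in the $\widehat{E}_m$'s, one could in principle extract $f_\l(v)$ from the eigenvalues $\ve_\l(e_m(x))$ in \eqref{eqn:eigenvalues of e} by a generating-function manipulation; but it is cleaner to evaluate the integral \eqref{Fn} against $\ket{P_\l}$ head-on.

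Concretely, I would proceed as follows. First, normally order $:\eta(z_1)\cdots\eta(z_n):$ acting on $\ket{P_\l}$ using the vertex operator formula \eqref{eq:eta}: the positive modes $a_r$ act on $\ket{P_\l}$, and since the $\widehat{E}_m$ are diagonal one can bootstrap this to the statement that the "annihilation part" of the product produces a scalar times $\ket{P_\l}$ up to lower terms, while matching the leading monomial pins down the scalar. A cleaner route: use the well-known principal specialization / Cauchy-type identity for Macdonald polynomials, namely that $\bra{P_\l}\prod_i \eta(z_i)\ket{\text{vac}}$-type pairings are controlled by $\ve_\l$; more precisely, the operator $\widehat{F}(v)$, being in the commutative algebra generated by the $\widehat{E}_n$'s, must have eigenvalue of the form $N(v)\cdot(\text{a function depending on } \l \text{ only through } \ve_\l)$, and the $N(v)$ prefactor is exactly the vacuum eigenvalue $\widehat{F}(v)\ket{P_\emptyset} = N(v)\ket{P_\emptyset}$, which one reads off from $\zeta(z_i/z_j)$ self-contractions — this reproduces \eqref{eq:N}. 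Then one determines the $\l$-dependent factor by computing $\widehat{F}(v)$ on $\ket{P_{(1)}}$, $\ket{P_{(2)}}$, $\ket{P_{(1,1)}}$, etc., recognizing the emerging product $\prod_{(i,j)\in\l}\frac{1-v^{-1}t^{-i+1}q^{j-1}}{1-v^{-1}t^{-i}q^j}$, and finally proving it in general. The general proof is most efficiently done by a generating-function computation: writing $\sum_n v^{-n}F_n = \exp(\text{something})$ is not quite available, but the eigenvalue of $\sum_n v^{-n}\widehat{F}_n$ can be obtained from the fact that $F_n$ is (up to normalization) the $n$-th "complete" symmetric element dual to the power-sum-like generators, so that $\widehat{F}(v)$ is a plethystic/multiplicative generating function in the eigenvalues $\{q^{\l_i}t^{-i}\}$, giving $f_\l(v) = N(v)\prod_{i\ge1}\frac{1-v^{-1}t^{-i}}{1-v^{-1}q^{\l_i}t^{-i}}$, and then the standard telescoping identity $\prod_{i\ge1}\frac{1-v^{-1}t^{-i}}{1-v^{-1}q^{\l_i}t^{-i}} = \prod_{(i,j)\in\l}\frac{1-v^{-1}t^{-i+1}q^{j-1}}{1-v^{-1}t^{-i}q^{j}}$ (valid because $\l_i = 0$ for $i > \ell(\l)$) converts this into the claimed box-product form.

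For the starred operator $\widehat{F}_n^*$, the identical argument runs with $\eta(z)\leftrightarrow\xi(z)$ using \eqref{eq:xi} and the ordering rule \eqref{eq:xicom}; the only changes are the sign conventions in the generating function \eqref{Fgen} (the $(-v)^n$ instead of $v^{-n}$, which is why \eqref{eq:fs-product} has positive powers of $v$), the replacement $q\mapsto q^{-1}$, $t\mapsto t^{-1}$ in the eigenvalues $\ve_\l(e_n(x^{-1}))$ from \eqref{eqn:eigenvalues of e}, and the opposite assumption $|q|>1>|t|$ on the contour which governs which expansion of $\zeta$ is used. This yields $f_\l^*(v) = N^*(v)\prod_{i\ge1}\frac{1-v t^{i}}{1-v q^{-\l_i}t^{i}}$, and the same telescoping identity gives \eqref{eq:fs-product}.

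The main obstacle is the rigorous justification of the generating-function / plethystic step — i.e.\ proving that $\widehat{F}(v)$ acts on $\ket{P_\l}$ as the multiplicative generating function $N(v)\prod_i \frac{1-v^{-1}t^{-i}}{1-v^{-1}q^{\l_i}t^{-i}}$ rather than merely "some symmetric function of the $q^{\l_i}t^{-i}$". The cleanest way to nail this down is to identify $F(v) := \sum_n v^{-n} F_n \in \MS[[v^{-1}]]$ intrinsically: since $F_n$ is proportional to the image of the $n$-th elementary (or complete homogeneous) symmetric function under the isomorphism of \cite{FHHSY} between the commutative shuffle subalgebra and $\Lambda_\bbF$, and since that isomorphism sends $P_\l \mapsto$ the function whose $\widehat{E}_m$-eigenvalue is $\ve_\l(e_m)$, the operator $\widehat{F}(v)$ must be the image of $\sum_n v^{-n} e_n = \prod_i(1+v^{-1}x_i)$ — so its eigenvalue on $\ket{P_\l}$ is $\ve_\l\big(\prod_i(1+v^{-1}x_i)\big)$ up to the normalization factors that produce $N(v)$, and $\ve_\l(x_i) = q^{\l_i}t^{-i}$ gives exactly the desired product. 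Carefully tracking the normalization constants $\frac{1}{n!}$, $\alpha^n$, the $t$-powers, and the expansion direction of $\zeta$ through this identification — and checking it against the low-degree cases $n=1,2$ — is the part that requires genuine care; everything else is bookkeeping and the elementary telescoping identity for the box product.
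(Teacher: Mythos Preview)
Your overall strategy --- use the commutativity of the shuffle subalgebra to conclude that $\widehat{F}(v)$ is diagonal on Macdonald polynomials, and then compute its eigenvalue by relating it to the known eigenvalues $\ve_\lambda(e_n(x))$ of the $\widehat{E}_n$ --- is exactly the paper's strategy. The difference is in how that relation is made precise, and there your argument has a genuine error.

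Your intermediate claim $f_\lambda(v) = N(v)\prod_{i\geq 1}\frac{1-v^{-1}t^{-i}}{1-v^{-1}q^{\lambda_i}t^{-i}}$ is false, and so is the ``standard telescoping identity'' you invoke to match it with the box product. Already for $\lambda=(1)$ the box product \eqref{eq:f-product} gives $f_\lambda(v)/N(v)=\frac{1-v^{-1}}{1-v^{-1}qt^{-1}}$, whereas your product gives $\frac{1-v^{-1}t^{-1}}{1-v^{-1}qt^{-1}}$. The source of the error is the assumption that $F_n$ is simply proportional to the image of $e_n$ (or $h_n$) under the FHHSY isomorphism; it is not. The relation between the generating series $F(v)=\sum_n v^{-n}F_n$ and $E(v)=\sum_n v^{-n}(\text{const})\prod_{i\neq j}\frac{z_i-z_j}{z_i-t^{-1}z_j}$ is a nontrivial plethystic one, and the phrase ``up to the normalization factors that produce $N(v)$'' hides exactly this.

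The paper closes this gap by quoting a result from \cite{Flags}: there exist elements $P_r\in\MS$ such that
\[
F(v)=\exp\Big(\sum_{r\geq 1}\frac{1-q^rt^{-r}}{(1-q^r)(1-t^{-r})}\frac{P_r}{r}v^{-r}\Big),\qquad
E(v)=\exp\Big(\sum_{r\geq 1}\frac{(-1)^{r-1}}{t^r-1}\frac{P_r}{r}v^{-r}\Big).
\]
From the second formula together with \eqref{eqn:eigenvalues of e} one reads off $\widehat{P}_r\ket{P_\lambda}=(t^r-1)\ve_\lambda(p_r(x))\ket{P_\lambda}$, and plugging this into the first formula gives the exponential expression \eqref{eq:fexp} for $f_\lambda(v)$, which then rewrites as the box product via the elementary identity \eqref{epsilon1}. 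So the ``rigorous justification of the generating-function / plethystic step'' that you flag as the main obstacle is precisely what the paper supplies, and it does not go through the formula you wrote down.
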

\begin{proof}
We will prove the formula for $\widehat{F}(v)$, as the one for $\widehat{F}^*(v)$ is analogous. The eigenvalue $f_\lambda(v)$ can be written in the exponential form:
\begin{align}
\label{eq:fexp}
	f_\lambda(v)
    &=\exp\left(\sum_{r>0} \frac{1}{r}\frac{1-q^rt^{-r}}{1-q^r}  v^{-r}t^r \ve_\lambda(p_r(x)) \right)
\end{align}
which can be shown with the help of the identity:
\begin{align}
\label{epsilon1}
\frac{1}{1-q^r}\ve_{\l}(p_r(x)) =
-q^{-r}\sum_{i=1}^{\ell(\l)}\sum_{j=1}^{\l_i} q^{j r} t^{-i r} + \frac{t^{-r}}{(1-t^{-r})(1-q^r)}
\end{align}
Let us define:
\begin{align}
&F(v) = \sum_{n=0}^{\infty}v^{-n} F_n \\
&E(v) = \sum_{n=0}^{\infty}v^{-n} \frac{t^{-n(n+1)/2}}{(1-t^{-1})^n} \prod_{1 \leq i\neq j \leq n} \frac{(z_i-z_j)}{(z_i-t^{-1} z_j)}
\end{align}
as elements of $\MS[[v^{-1}]]$. It was proved in \cite[Subsection 6.5]{Flags} that there exist elements $P_n \in \MS$ such that:
\begin{align}
&F(v) = \exp \left(\sum_{r=1}^{\infty} \frac {1-q^rt^{-r}}{(1-q^r)(1-t^{-r})} \frac {P_r}{r} v^{-r} \right) \label{eqn:exp f} \\
&E(v) = \exp \left(\sum_{r=1}^{\infty} \frac {(-1)^{r-1}}{t^r-1} \frac {P_r}{r} v^{-r} \right) \label{eqn:exp e}
\end{align}
Since formula \eqref{eqn:action shuf plus} yields an action, we conclude that there exist operators $\widehat{P}_n$ on $\Lambda_{\bbF}$ such that formulas \eqref{eqn:exp f} and \eqref{eqn:exp e} also hold with hats above $F,E$ and $P$, where:
$$
\widehat{E}(v) = \sum_{r=0}^{\infty} v^{-r} \widehat{E}_n
$$
However, formula \eqref{eqn:eigenvalues of e} then implies:
$$
\widehat{P}_n \ket{P_\lambda}= 
\ve_\lambda \left( (t^n-1)p_n(x) \right)\ket{P_\lambda}
$$
Combining this with the version of \eqref{eqn:exp f} with hats on top of $F$ and $P$ yields precisely \eqref{eqn:eigenvalue of f}.

\end{proof}

\subsection{Shuffle algebra formulas for the $R$-matrix} The shuffle algebra language yields a description of the universal $R$-matrix in the Fock representation. Specifically, it was shown in \cite[Theorem 4.16]{R-matrix} that the image of $\bar{\MR}$ under the action map:
$$
\qgp \ \widehat{\otimes} \ \qgm \longrightarrow \text{End}(\Lambda_{\bbF}(u)) \ \widehat{\otimes} \ \qgm \stackrel{\eqref{eqn:iso}} \cong \text{End}(\Lambda_{\bbF}(u)) \ \widehat{\otimes} \ \MS
$$
is given by the formula:
\begin{equation}
\label{eqn:half r-matrix}
\sum_{n = 0}^\infty \frac {(u\alpha)^{n}}{n!} \oint_{\MC_n} \prod_{i=1}^n \frac{\dd w_i}{2 \pi i w_i } :\xi(w_1)\dots \xi(w_n): \otimes \ S(w_1,\dots,w_n)
\end{equation}
where the formal series $S(w_1,\dots,w_n) \in \MS[[w_1^{\pm 1},\dots,w_n^{\pm 1}]]$ is defined by:
\begin{equation}
\label{eqn:formal series}
S(w_1,\dots,w_n) = \sum_{d_1,\dots,d_n \in \bbZ} w_1^{d_1} \dots w_n^{d_n} \text{Sym} \Big[ z_1^{d_1} \dots z_n^{d_n} \Big] \prod_{1\leq i \neq j \leq n} \zeta \left(\frac {z_i}{z_j} \right)
\end{equation}
We may then apply formula \eqref{eqn:action shuf plus} to conclude that the formal series $S(w_1,\dots,w_n)$ acts on $\Lambda_{\bbF}(u)$ as: 
\begin{equation}
\label{eqn:s action}
u^{-n} :\eta(w_1)\dots \eta(w_n): \prod_{1\leq i \neq j \leq n} \zeta \left(\frac {w_i}{w_j} \right)
\end{equation}
Combining this with \eqref{eqn:half r-matrix} we can compute $\bar{\MR}$ in a tensor product of Fock representations. Explicitly, let $R(u_2/u_1)$ be the image of the universal $R$-matrix $\MR$ under the action map:
\begin{equation}
    \label{eqn:r-matrix in fock}
\qg \ \widehat{\otimes} \ \qg \longrightarrow \text{End}(\Lambda_{\bbF}(u_1)) \ \widehat{\otimes} \ \text{End}(\Lambda_{\bbF}(u_2)), \qquad \MR \leadsto R(u_2/u_1)
\end{equation}
We denote by  $\bar{R}(u_2/u_1)$ and $K$ the corresponding images of $\bar{\MR}$ and $\MK$. From now on we will assume $|q| < 1 < |t|$ and $|u|>1$.
\begin{prop}
The matrix $R(u)$ is given by:
\begin{align}
\label{eq:RK}
R(u) = \bar{R}(u)K
\end{align}
where $\bar{R}(u) = \sum_{n = 0}^\infty u^{-n}\bar{R}_n$ is given by the formula:
\begin{equation}
    \label{eqn:r-matrix in Fock}
\bar{R}_n = \frac {\alpha^n}{n!} \oint_{\MC_n} \prod_{i=1}^n \frac{\dd w_i}{2 \pi i w_i }  :\xi(w_1)\dots \xi(w_n): \otimes :\eta(w_1)\dots \eta(w_n): \prod_{1\leq i \neq j \leq n} \zeta \left(\frac {w_i}{w_j} \right)
\end{equation}
and:
\begin{align}
\label{eq:KFock}
K=
\exp
\left(
\sum_{r>0}\frac{1}{r}\frac{(1-t^r)(1-q^r t^{-r})}{1-q^{r}}(t/q)^{r/2}  a_r\otimes a_{-r}
\right)
(q/t)^{\frac{1}{2}(d\otimes 1+1\otimes d)}
\end{align}
\end{prop}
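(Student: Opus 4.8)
The plan is to compute the images of the two factors in the factorization $\MR = \MRb\MK$ of \eqref{KR} separately under the action map \eqref{eqn:r-matrix in fock}, and then to multiply them: since that map is an algebra homomorphism on $\qg\,\widehat{\otimes}\,\qg$, the image of a product is the product of the images, so once both images are identified we obtain $R(u) = \bar{R}(u)K$, which is \eqref{eq:RK}.

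For $\MRb$, I would begin from \cite[Theorem 4.16]{R-matrix}, recalled in \eqref{eqn:half r-matrix}: the image of $\MRb$ under the map that lets the \emph{first} tensor factor act on $\Lambda_{\bbF}(u_1)$ while keeping the \emph{second} factor inside the shuffle algebra $\MS$ equals $\sum_{n=0}^{\infty}\frac{(u_1\alpha)^n}{n!}\oint_{\MC_n}\prod_{i=1}^n\frac{\dd w_i}{2\pi i w_i}:\xi(w_1)\dots\xi(w_n):\otimes\,S(w_1,\dots,w_n)$, the factor $u_1^n$ arising from the $n$ substitutions $e(z)\mapsto u_1\xi(z)$. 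I would then push the second tensor factor through the shuffle-algebra action \eqref{eqn:action shuf plus} on $\Lambda_{\bbF}(u_2)$; by \eqref{eqn:s action}, the formal series $S(w_1,\dots,w_n)$ acts there as $u_2^{-n}:\eta(w_1)\dots\eta(w_n):\prod_{1\leq i\neq j\leq n}\zeta(w_i/w_j)$, the factor $u_2^{-n}$ arising from the $n$ substitutions $f(z)\mapsto u_2^{-1}\eta(z)$. Collecting both contributions, the scalar attached to the degree-$n$ term is $(u_1\alpha)^n u_2^{-n}=\alpha^n u^{-n}$ with $u=u_2/u_1$, and this is precisely the formula \eqref{eqn:r-matrix in Fock} for $\bar{R}_n$.

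For the Cartan factor $\MK$ of \eqref{MK}, I would just substitute the values of the action \eqref{eqn:action}. One has $c\mapsto(q/t)^{1/2}$ and $\psi_0^\pm\mapsto 1$, hence $\log c\mapsto\tfrac12\log(q/t)$ and $\log(\psi_0^-)\mapsto 0$; therefore the $d^{\perp}$-terms of \eqref{MK} drop out and its purely central exponential becomes $(q/t)^{\frac12(d\otimes 1+1\otimes d)}$. For the deformed-Heisenberg part, I would read off the images of $h_{\pm r}$ by comparing the exponential presentation \eqref{eqn:psi to h} of $\psi^{\pm}(z)$ with the explicit images of $\psi^{\pm}(z)$ in \eqref{eqn:action}; this yields $h_r\mapsto\frac{(qt)^{r/2}}{1-q^r}a_r$ and $h_{-r}\mapsto\frac{1}{1-q^r}a_{-r}$. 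Substituting these into $\sum_{r>0}\frac1r(1-q^r)(1-t^{-r})(1-q^{-r}t^r)\,h_r\otimes h_{-r}$ and simplifying with the elementary identity $(1-t^{-r})(1-q^{-r}t^r)(qt)^{r/2}=(1-t^r)(1-q^rt^{-r})(t/q)^{r/2}$ produces exactly the exponent in \eqref{eq:KFock}, so the image of $\MK$ is $K$.

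The step I expect to require the most care is the bookkeeping in the second paragraph: one must make the formal series $S(w_1,\dots,w_n)\in\MS[[w_1^{\pm 1},\dots,w_n^{\pm 1}]]$ act term by term, and then check that after carrying out the two contour integrations over $\MC_n$ (under the running conventions $|q|<1<|t|$ and $|u|>1$) the outcome really reorganizes into a power series in $u^{-1}$ with the stated coefficients $\bar{R}_n$, and in particular that the parameters $u_1,u_2$ enter only through the ratio $u=u_2/u_1$ (which is forced by the grading, cf. \eqref{eqn:degrees}). The computation of $K$, by contrast, is a direct substitution once the images of the $h_{\pm r}$ are in hand.
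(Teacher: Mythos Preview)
Your proposal is correct and follows essentially the same route as the paper: the paper does not give a separate proof of this proposition but derives $\bar R(u)$ in the paragraph preceding it exactly as you describe (combining \eqref{eqn:half r-matrix} with \eqref{eqn:s action}), and simply states the formula for $K$ without justification. Your derivation of $K$ by reading off $h_{\pm r}$ from \eqref{eqn:psi to h} and \eqref{eqn:action} and then substituting into \eqref{MK} supplies the detail the paper omits, and the simplifications you give are correct.
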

We define a version of the path ordered exponential:
\begin{align}
    \label{eq:pexp}
    \MP \exp\left( \oint
\frac{\dd w}{2\pi i w} A(w)\right) 
=\sum_{n=0}^{\infty} \frac{1}{n!}
\oint_{\MC_n}
\prod_{i=1}^n \frac{\dd w_i}{2\pi i w_i} A(w_1)\dots A(w_n)
\end{align}
\begin{rmk}
$\bar{R}(u)$ is a path ordered exponential:
\begin{align}
\label{eq:R_exp}
\bar{R}(u)= 
\MP \exp\left(
\frac {\alpha}{u}
\oint
\frac{\dd w}{2\pi i w} \xi(w)\otimes \eta(w)\right)
\end{align}
\end{rmk}
\begin{proof}
Using \eqref{eq:etacom} and \eqref{eq:xicom} we can remove the rational function from the integrand in $\bar{R}_n$ in \eqref{eqn:r-matrix in Fock} together with the normal ordering:
\ba
\bar{R}_n = \frac {\alpha^{n}}{n!} 
\oint_{\MC_n}
\prod_{i=1}^n \frac{\dd w_i}{2\pi i w_i}
\xi(w_1)\dots \xi(w_n)\otimes \eta(w_1)\dots \eta(w_n)
\ea
Then we use the definition of $\MP \exp$ and arrive at \eqref{eq:R_exp}.
\end{proof}
\begin{rmk}
The partial vacuum-vacuum expectations and the vacuum-vacuum expectations of $\bar{R}(u)$ read:
\begin{align}
\label{RnF}
\left(\bra{\emptyset}\otimes - \right)
\bar{R}(u)
\left(\ket{\emptyset}\otimes -\right) &= 
\widehat{F}(u)
\\
\label{RN}
\left(\bra{\emptyset}\otimes \bra{\emptyset}\right)
\bar{R}(u)
\left(\ket{\emptyset}\otimes \ket{\emptyset}\right) &= 
N(u)
\end{align}
which is an immediate consequence of \eqref{eqn:r-matrix in Fock}. We should note that \eqref{RnF} was derived in \cite{FJMM_BA} by using commutation relations of the quantum toroidal algebra.
\end{rmk}
\begin{defn}
We define the normalized $R$-matrix:
\begin{align}
    \label{eq:RR}
    \RR(u):=\frac{1}{N(u)}R(u)
\end{align}
\end{defn}

\subsection{The $R$-matrix in terms of bosonic operators}
We focus on the $\bar{R}(u)$ and its expression in the Heisenberg basis \eqref{Hact}. We can write:
\begin{align}
\label{eq:Rprime}
&\bar{R}(u)
=
\sum_{\m,\r,\n,\s\in \MP}
[\bar{R}(u)]_{\m,\r}^{\n,\s}
\,a_{-\m}a_{\n}\otimes a_{-\r}a_{\s}\\
&
\label{eq:Rvanish}
[\bar{R}(u)]_{\m,\r}^{\n,\s}(u)=0,
\qquad |\m|+|\r|\neq |\n|+|\s|
\end{align}
(the vanishing condition \eqref{eq:Rvanish} follows from \eqref{eqn:degrees}). Introduce the operators:
\begin{align}
\label{c-ops}
c_{-r}:= 
1\otimes a_{-r}
-
(q/t)^{-r/2} a_{-r}\otimes 1, \qquad
c_r := 
1\otimes a_{r}
-
(q/t)^{-r/2} a_{r}\otimes 1
\end{align}
and $c_{\pm\m} = c_{\pm\m_1}\dots c_{\pm\m_{\ell(\m)}}$ for a partition $\m$. Recalling the explicit form of $\xi$ and $\eta$ we have:
\begin{align}
\label{etaxi}
:\prod_{i=1}^n \xi(z_i)\otimes \eta(z_i): ~=
\exp\left(\sum_{r>0}  \frac{1-t^{-r}}{r}
c_{-r} p_r(z) \right)
\exp\left( - \sum_{r>0} \frac{1-t^{r}}{r} 
c_{r} p_r(z^{-1}) \right)
\end{align}
where $p_r(z^{\pm 1}) = z_1^{\pm r} + \dots + z_n^{\pm r}$.

\begin{prop}\label{prop:R-bar}
$\bar{R}(u)$ expands in $c_{-\m}c_\n$ as:
\begin{align}
\label{Rcc}
\bar{R}(u) = 
 \sum_{\m,\n\in \MP} 
c_{-\m}c_\n
\left(  
 \sum_{\l\in \MP} b_\l(q,t)
 f_\l(u)
  \sum_{\a\subseteq \m\cap \n} 
    \frac{(-1)^{\ell(\a)}}{z_\a(q,t)}
     g_{\l,\m-\a}(q,t) g_{\l,\n-\a}(q,t)
\right)
\end{align}
\end{prop}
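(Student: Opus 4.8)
The plan is to start from the path-ordered exponential form \eqref{eq:R_exp} of $\bar R(u)$ together with the normal-ordering rewrite \eqref{etaxi}, which expresses the integrand purely in terms of the shifted Heisenberg operators $c_{\pm r}$. First I would expand $\bar R(u)$ in the operators $c_{-\m}c_\n$ by collecting, for each pair of partitions $(\m,\n)$, the coefficient obtained from the contour integral $\frac{1}{n!}\oint_{\MC_n}\prod_i\frac{\dd w_i}{2\pi i w_i}\,p_\m(z^{-1})p_\n(z)\,\dots$ after extracting the monomials $c_{-\m}$ and $c_\n$ from the two exponentials in \eqref{etaxi}. The key point is that $:\prod_i\xi(z_i)\otimes\eta(z_i):$ produces, in front of $c_{-\m}c_\n$, a symmetric function of $z_1,\dots,z_n$ (a product of power sums $p_r(z^{-1})$ and $p_r(z)$ with the combinatorial constants from expanding the exponential), so summing over $n$ repackages $\bar R(u)$ as a pairing of $c_{-\m}c_\n$ against a generating series of shuffle-type integrals. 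Concretely, the coefficient of $c_{-\m}c_\n$ is a sum over $n$ of integrals $\oint F_n(z)\cdot(\text{monomial symmetric data in }\m,\n)$, and since the $F_n$ generate the commutative subalgebra whose Fock action is $\widehat F(v)$, the relevant contribution is governed by the eigenvalues $f_\l(v)$ from Lemma \ref{lem:F}.

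The central identity to establish is therefore: when one writes $\bar R(u)$ as a sum over $c_{-\m}c_\n$, the operator-valued coefficient (acting on the first tensor factor, or equivalently its spectral data) equals $\sum_\l b_\l(q,t)f_\l(u)\,(\text{transition coefficients in }\m,\n)$. To see this I would insert a complete set of Macdonald vectors: $\bar R(u)$ acting with the first factor pushed to the vacuum gives $\widehat F(u)=\sum_\l f_\l(u)\ket{P_\l}\bra{Q_\l}$ by \eqref{RnF} and Lemma \ref{lem:F}, and the resolution of the identity $1=\sum_\l\ket{P_\l}\bra{Q_\l}=\sum_\l b_\l(q,t)\ket{P_\l}\bra{P_\l}$, combined with \eqref{eq:g} expressing $\bra{P_\l}$ and $\ket{Q_\l}$ in the power-sum basis via $g_{\l,\m}$, is what converts spectral data into the $g$-coefficients. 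The shift from $a_{\pm r}$ to $c_{\pm r}$ is precisely engineered (see \eqref{c-ops}, \eqref{etaxi}) so that the $\eta$-factor of \eqref{etaxi} carries exactly the combination that $\widehat F(u)$ sees; tracking this carefully is what produces the two $g$-factors $g_{\l,\m-\a}g_{\l,\n-\a}$ rather than a single one.

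The appearance of the inner sum over $\a\subseteq\m\cap\n$ with the sign $(-1)^{\ell(\a)}/z_\a(q,t)$ is, I expect, the main obstacle and the part requiring genuine care. It arises from the combinatorics of normal-ordering and of re-expanding products of power sums: when one extracts the $c_{-\m}c_\n$ coefficient, terms in which some modes $c_{-r}$ and $c_r$ ``contract'' in the sense of the binomial identities \eqref{TT} produce overlapping parts, and the matrix $T^{-1}$ with entries $\frac{(-1)^{\ell(\m-\a)}}{z_\a z_\n}{\m\brack\a}$ is exactly what governs this reorganization. I would handle this by writing the coefficient of a fixed $c_{-\m}c_\n$ monomial as a sum over all ways of partitioning the $n$ integration variables into those feeding the $p$'s of $\m$, those feeding the $p$'s of $\n$, and a common overlap $\a$ (the variables appearing in both a negative and a positive power sum), and then recognizing the resulting alternating sum over $\a$ as the one in \eqref{Rcc}. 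The factor $z_\a(q,t)^{-1}$ comes from the Heisenberg commutator \eqref{Ha1} contracted $\ell(\a)$ times, and the sign from the antisymmetry of that contraction against the $\zeta$-factors in \eqref{etacom}–\eqref{xicom}. Once this bookkeeping is organized — most cleanly by first proving the identity on vacuum matrix elements, then propagating it to all matrix elements using that both sides are determined by their action on $\ket{P_\l}\otimes(-)$ via the intertwining property — formula \eqref{Rcc} follows, with \eqref{gcoef0} ensuring the $\l$-sum is finite once $\m,\n$ are fixed.
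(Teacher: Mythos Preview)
Your ingredients are right, but you are missing the one observation that makes the paper's argument short, and your explanation of the $\alpha$-sum is off.

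The key step you do not state cleanly: compare \eqref{Rintc} with the single-Fock formula for $\widehat F_n$ directly below it. The two contour integrals are \emph{literally the same integrand} with the symbols $a_{\pm r}$ replaced by $c_{\pm r}$. Since the coefficient $X_{\m,\n}(u)$ of $c_{-\m}c_\n$ in $\bar R(u)$ is extracted purely formally from the exponentials (no commutators of $c$'s are needed, the expression is already normal ordered), it equals the coefficient of $a_{-\m}a_\n$ in $\widehat F(u)$. This collapses the whole problem to a single Fock space: compute the normal-ordered expansion of $\widehat F(u)$. You gesture toward this via \eqref{RnF}, but the partial vacuum expectation is a weaker statement than the identification of \emph{all} expansion coefficients; what you actually need is the formal equality of integrals.

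Once reduced to $\widehat F(u)=\sum_{\m,\n}X_{\m,\n}(u)\,a_{-\m}a_\n$, the computation is linear algebra: the matrix elements $f_{\a,\b}(u)=\bra{a_\a}\widehat F(u)\ket{a_\b}$ are obtained from the spectral decomposition $\widehat F(u)=\sum_\l f_\l(u)\ket{Q_\l}\bra{P_\l}$ together with \eqref{eq:g}, giving $f_{\a,\b}=\sum_\l b_\l f_\l\, g_{\l,\a}g_{\l,\b}\,z_\a z_\b$. Then $X_{\m,\n}=\sum_{\a,\b}(T^{-1})^{\a,\b}_{\m,\n}\,f_{\a,\b}$ with $T^{-1}$ from \eqref{TT}. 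The $\alpha$-sum with sign $(-1)^{\ell(\a)}/z_\a$ in \eqref{Rcc} is exactly the structure of $T^{-1}$; it does \emph{not} come from contracting $c_{-r}$ against $c_r$ (nothing is being contracted --- $c_{-\m}c_\n$ is already normal ordered), nor from partitioning integration variables, nor from the $\zeta$-factors. Your paragraph attributing the sign and the $z_\a^{-1}$ to Heisenberg commutators and to \eqref{eq:etacom}--\eqref{eq:xicom} is the wrong mechanism and should be replaced by the straightforward application of $T^{-1}$.
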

\begin{proof}
Recall formulas \eqref{Fn} with \eqref{eqn:r-matrix in Fock}:
\begin{align}
\label{Rintc}
\bar{R}_n&=\frac {1}{n!}
\oint_{\MC_n}
\prod_{i=1}^n \frac{\dd z_i}{2\pi i z_i}
F_n(z_1,\dots,z_n)
\exp\left(\sum_{r>0}  \frac{1-t^{-r}}{r}
c_{-r}p_r(z)\right)
\exp\left(- \sum_{r>0} \frac{1-t^{r}}{r} 
c_{r}p_r(z^{-1}) \right)
\\
\widehat{F}_n&=
\frac {1}{n!}
\oint_{\MC_n}
\prod_{i=1}^n \frac{\dd z_i}{2 \pi i z_i } 
F_n(z_1,\dots,z_n)
\exp\left(\sum_{r>0}  \frac{1-t^{-r}}{r}
a_{-r}p_r(z) \right)
\exp\left(- \sum_{r>0} \frac{1-t^{r}}{r} 
a_{r} p_r(z^{-1}) \right)
\end{align}
Clearly the expansion coefficients of $\bar{R}(u)$ in $c_{-\m}c_\n$ coincide with the expansion coefficients of $\widehat{F}(u)$ in $a_{-\m}a_\n$. 
If we denote these expansion coefficients by $X_{\m,\n}(u)$, then we have:
\begin{align}
\label{RFF}
\bar{R}(u) = \sum_{\m,\n\in \MP} 
X_{\m,\n}(u) c_{-\m}c_\n,
\qquad
\widehat{F}(u) = \sum_{\m,\n\in \MP} 
X_{\m,\n}(u) a_{-\m}a_\n
\end{align}
Now we calculate $X_{\m,\n}(u)$. Define the matrix elements $f_{\a,\b}(u)$ of $\widehat{F}(u)$ in the Heisenberg basis:
\begin{align}
\label{eq:fdef}
f_{\a,\b}(u):=\bra{a_\a}\widehat{F}(u)\ket{a_\b}
\end{align}
Recall the coefficients $T_{\a,\b}^{\m,\n}$ and $(T^{-1})_{\a,\b}^{\m,\n}$ from \eqref{TT}. The connection between $f_{\a,\b}(u)$ and $X_{\m,\n}(u)$ is:
\begin{align}
\label{eq:Ff}
f_{\a,\b}(u) =
\sum_{\m,\n\in \MP} 
T_{\a,\b}^{\m,\n}
X_{\m,\n}(u),\qquad
X_{\m,\n}(u) =
\sum_{\a,\b\in \MP} 
(T^{-1})_{\m,\n}^{\a,\b}
f_{\a,\b}(u)
\end{align}
The coefficients $f_{\a,\b}(u)$ are computed by writing $\widehat{F}(u)$ in its eigenbasis of Macdonald polynomials:
\begin{align}
\label{eq:Fmac}
\widehat{F}(u) = \sum_{\l\in \MP} f_\l(u) \ket{Q_\l}\bra{P_\l}
\end{align}
Let us insert \eqref{eq:Fmac} into \eqref{eq:fdef} and use the transition coefficients $g_{\a,\b}(q,t)$ from \eqref{gcoef} and the scalar product \eqref{ascalar}. We obtain:
\ba
f_{\a,\b}(u)
= \sum_{\l\in \MP} f_\l(u) 
\braket{a_\a}{Q_\l}\braket{P_\l}{a_\b}
&=
\sum_{\m,\n\in \MP}
 \sum_{\l\in \MP} b_\l(q,t)   f_\l(u)   
    g_{\l,\m}(q,t)   g_{\l,\n}(q,t)
\braket{a_\a}{a_\n}
 \braket{a_{\m}}{a_\b}\\
 &=
 \sum_{\l\in \MP}  b_\l(q,t) f_\l(u) 
   g_{\l,\a}(q,t)  g_{\l,\b}(q,t)
   z_{\a}(q,t)z_{\b}(q,t)
\ea
Substituting the formula above into the second equation of \eqref{eq:Ff}, and using \eqref{TT} to express $(T^{-1})^{\a,\b}_{\m,\n}$, yields:
\begin{align}
X_{\m,\n}(u) &=
\sum_{\a,\b\in \MP} 
 \d_{\m-\a ,\n- \b}{\m \brack \a} 
\frac{(-1)^{\ell(\m-\a)}}{z_\a(q,t) z_\n(q,t)}
 \sum_{\l\in \MP} b_\l(q,t) f_\l(u) 
   g_{\l,\a}(q,t)  g_{\l,\b}(q,t) z_{\a}(q,t)z_{\b}(q,t) \nonumber \\
    \label{eq:Xcoef}
 &= \sum_{\l\in \MP} b_\l(q,t) f_\l(u)
  \sum_{\a\subseteq \m\cap \n} 
    \frac{(-1)^{\ell(\a)}}{z_\a(q,t)}
     g_{\l,\m-\a}(q,t) g_{\l,\n-\a}(q,t)
\end{align}
\end{proof}
Since the operator $K$ in \eqref{eq:RK} does not depend on the spectral parameter $u$, Proposition \ref{prop:R-bar} shows that the dependence on $u$ of $R(u)$ is given by $f_\l(u)$ which is written explicitly in \eqref{eq:f-product}. For the normalized matrix $\RR(u)$ the spectral parameter dependence is given by the rational functions $N(u)^{-1} f_\l(u)$ and we conclude:

\begin{cor}
\label{cor:poles}

The operator $\RR(u)$ only has simple poles, and they are located at $\{u = q^jt^{-i}\}_{ i, j \geq 1}$. If we are interested in the restriction of $\RR(u)$ to the degree $\leq N$ part of $\Lambda_{\bbF}$, then the only poles we encounter are $u = q^jt^{-i}$ with $(i,j)$ among the boxes of partitions of weight $\leq N$.

\end{cor}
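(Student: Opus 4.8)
The plan is to read off the pole structure from Proposition~\ref{prop:R-bar} together with the explicit product formula~\eqref{eq:f-product} for the eigenvalues $f_\l(u)$. By \eqref{eq:RK} and \eqref{eq:RR} we have $\RR(u)=N(u)^{-1}\bar R(u)\,K$, and since the operator $K$ of \eqref{eq:KFock} carries no $u$-dependence, any matrix coefficient of $\RR(u)$ in a fixed basis of $\MF\otimes\MF$ equals $N(u)^{-1}$ times a matrix coefficient of $\bar R(u)$ between two fixed vectors. So it suffices to understand, for fixed vectors, the rational function of which $N(u)^{-1}\bar R(u)$ is the expansion.

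The crucial computation is the ratio
\[
\frac{f_\l(u)}{N(u)}=\prod_{(i,j)\in\l}\frac{1-u^{-1}t^{-i+1}q^{j-1}}{1-u^{-1}t^{-i}q^{j}}=\prod_{(i,j)\in\l}\frac{u-t^{-i+1}q^{j-1}}{u-t^{-i}q^{j}},
\]
obtained by dividing \eqref{eq:f-product} by \eqref{eq:N}: this is a genuine rational function of $u$, with poles contained in $\{u=q^{j}t^{-i}:(i,j)\in\l\}$. Because $q$ and $t$ are multiplicatively independent, the values $q^{j}t^{-i}$ are pairwise distinct over distinct boxes, so each of these poles is simple. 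Substituting this into \eqref{Rcc}, the coefficient of $c_{-\m}c_\n$ in $N(u)^{-1}\bar R(u)$ becomes
\[
\sum_{\l\in\MP}b_\l(q,t)\,\frac{f_\l(u)}{N(u)}\sum_{\a\subseteq\m\cap\n}\frac{(-1)^{\ell(\a)}}{z_\a(q,t)}\,g_{\l,\m-\a}(q,t)\,g_{\l,\n-\a}(q,t).
\]

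Now I would use the grading. By \eqref{gcoef0}, $g_{\l,\m-\a}\,g_{\l,\n-\a}=0$ unless $|\l|=|\m|-|\a|=|\n|-|\a|$; in particular this coefficient vanishes unless $|\m|=|\n|$, in which case the $\l$-sum is restricted to $|\l|\le|\m|$ and is finite. A finite sum of rational functions each with at most simple poles inside $\{u=q^{j}t^{-i}\}$ retains that property, the poles arising only from boxes of partitions of weight $\le|\m|$. To pass from coefficients of $c_{-\m}c_\n$ to honest matrix coefficients of $\RR(u)$, note that $c_{-\m}c_\n$ shifts the grading of $\MF\otimes\MF$ by $|\m|-|\n|$ and maps total degree $D$ into total degree $D-|\n|$, hence annihilates any fixed source vector once $|\n|$ exceeds its degree; so only finitely many $(\m,\n)$, all with $|\m|=|\n|$ bounded by the degree of the source, contribute. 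Summing these finitely many contributions shows that every matrix coefficient of $\RR(u)$ is rational with only simple poles, all in $\{u=q^{j}t^{-i}\}_{i,j\ge1}$; and if the computation is carried out on the degree $\le N$ part of $\MF\otimes\MF$, then $|\n|=|\m|\le N$ forces $|\l|\le N$, confining the poles to $u=q^{j}t^{-i}$ with $(i,j)$ a box of a partition of weight $\le N$.

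The only point needing care — and it is mild — is this last degree bookkeeping: one must check that the (a priori infinite) sum \eqref{Rcc}, once evaluated on a fixed vector, collapses to a finite sum over $(\m,\n)$ of bounded weight, ruling out any accumulation of poles or increase in pole order. Everything else is direct substitution into \eqref{Rcc} and \eqref{eq:f-product}.
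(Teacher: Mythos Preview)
Your proof is correct and follows essentially the same approach as the paper: deduce the $u$-dependence of $\RR(u)$ from Proposition~\ref{prop:R-bar} via the explicit rational functions $N(u)^{-1}f_\l(u)$ of \eqref{eq:f-product}, using that $K$ is $u$-independent. You have spelled out several details (simplicity of the poles via generic $q,t$, and the grading argument that makes the sums finite on fixed vectors) that the paper leaves implicit.
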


\subsection{Computing the matrix elements of $\RR(u)$}
We derive an explicit formula for the matrix elements of the normalized matrix $\RR(u)$. This formula uses Proposition \ref{prop:R-bar} and therefore the final answer depends on the transition coefficients $g_{\m,\n}(q,t)$ between the Macdonald functions and the power sums. We recall the normalized vectors of the Fock space \eqref{braket_norm} and define the matrix elements:
\begin{align}
\label{eq:Rme}
\RR_{\a,\b}^{\g,\d}(u) := \bbra{\a,\b} \RR(u) \kket{\g,\d}
\end{align}
where $\bbra{\a,\b} = \bbra{\a} \otimes \bbra{\b}$. Similarly to \eqref{eq:Rvanish} these matrix elements satisfy the vanishing condition:
\begin{align}
    \label{eq:rme_vanish}
    \RR_{\a,\b}^{\g,\d}(u)=0,
\qquad |\a|+|\b|\neq |\g|+|\d|
\end{align}
Due to \eqref{eq:Univ_YB} the matrix elements $\RR_{\a,\b}^{\g,\d}(u)$ satisfy the Yang--Baxter equation:
\begin{align}\label{eq:YB}
\sum_{a,b,c \in \MP} \RR_{\a,\a'}^{a,b}(u_2/u_1)
\RR_{a,\a''}^{\b,c}(u_3/u_1)
\RR_{b,c}^{\b',\b''}(u_3/u_2)
=
\sum_{a,b,c \in \MP}
\RR_{\a',\a''}^{b,a}(u_3/u_2)
\RR_{\a,a}^{c,\b''}(u_3/u_1)
\RR_{c,b}^{\b,\b'}(u_2/u_1)
\end{align}
for all fixed external indices $\a,\a',\a''$ and $\b,\b',\b''$ (the summations over partitions $a,b,c$ on both sides of \eqref{eq:YB} are finite due to \eqref{eq:rme_vanish}).

\begin{prop}
\label{prop:matrix coefficients}
Let $\a,\b,\g,\d$ be partitions.  The matrix elements of $\RR(u)$ are given explicitly by:
\begin{align}
\label{eq:Rprop}
\RR_{\a,\b}^{\g,\d}(u) = \sum_{\l\in \MP} b_\l(q,t) 
\prod_{(i,j)\in \l}\frac{1-u  t^{i-1}q^{-j+1}}{1-u t^{i}q^{-j}}
\sum_{\s \subseteq (\a+\b) \cap (\g+\d)} g_{\l,(\a+\b)-\s}(q,t)g_{\l,(\g+\d)-\s}(q,t) C_{\a,\b}^{\g,\d}(\s)
\end{align}
where:
\begin{align}
\label{eq:C}C_{\a,\b}^{\g,\d}(\s)  =  (-1)^{\ell(\a+\g+\s)}(q/t)^{|\b+\g|/2}
z_{(\g+\d)-\s}(q,t)
\sum_{\k \subseteq \b \cap \s} (q/t)^{-|\k|} 
{\g \brack \k}
{\d \brack \s-\k}
{(\a+\b)-\s \brack \b-\k}
\end{align}
\end{prop}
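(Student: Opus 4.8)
The plan is to start from the explicit expansion of $\bar{R}(u)$ in the basis $c_{-\m}c_\n$ given in Proposition~\ref{prop:R-bar}, feed it through the relation $R(u)=\bar{R}(u)K$ and the normalization $\RR(u)=N(u)^{-1}R(u)$, and then compute the matrix element $\bbra{\a,\b}\RR(u)\kket{\g,\d}$ by carefully acting with the operators $c_{\pm\m}$ and $K$ on the tensor product Fock basis. First I would observe that $N(u)^{-1}f_\l(u)=\prod_{(i,j)\in\l}\frac{1-ut^{i-1}q^{-j+1}}{1-ut^iq^{-j}}$ is exactly $f_\l^*(u)/N^*(u)$-type product appearing in \eqref{eq:Rprop} (it is the $u\to u$, rather than $u\to u^{-1}$, version of \eqref{eq:fs-product}); actually one should double-check whether it is $f_\l$ or $f_\l^*$ that enters, but in any case it is the rational product over boxes of $\l$, so the spectral-parameter factor $b_\l(q,t)\prod_{(i,j)\in\l}(\cdots)$ in \eqref{eq:Rprop} will simply be $N(u)^{-1}$ times the $\l$-summand of \eqref{Rcc}, and the entire content of the proof is the purely combinatorial identification of the coefficient $C_{\a,\b}^{\g,\d}(\s)$.

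The key steps, in order: (1) Write $K = \exp\left(\sum_{r>0}\frac1r\frac{(1-t^r)(1-q^rt^{-r})}{1-q^r}(t/q)^{r/2}a_r\otimes a_{-r}\right)(q/t)^{\frac12(d\otimes1+1\otimes d)}$ from \eqref{eq:KFock}, and note that on the degree-graded pieces the Cartan factor $(q/t)^{\frac12(d\otimes1+1\otimes d)}$ contributes a scalar $(q/t)^{(|\g|+|\d|)/2}$ (or $(|\a|+|\b|)/2$, equal by \eqref{eq:rme_vanish}), which together with bookkeeping of how $c_{\pm r}$ in \eqref{c-ops} mixes the two tensor factors will produce the global prefactor $(-1)^{\ell(\cdots)}(q/t)^{|\b+\g|/2}$ in \eqref{eq:C}. (2) Expand $c_{-\m}=\prod_{r\in\m}(1\otimes a_{-r}-(q/t)^{-r/2}a_{-r}\otimes1)$ and $c_\n$ similarly using the multinomial theorem for part multiplicities; each term of the expansion distributes the parts of $\m$ (resp. $\n$) between the two tensor slots, and the sub-partition that goes into the ``left'' slot is what will become the summation index — this is the origin of the sums over $\k\subseteq\b\cap\s$ and the binomial coefficients $\binom{\g}{\k}$, etc. (3) Act with the resulting monomials $a_{-\m_1}a_{\n_1}\otimes a_{-\m_2}a_{\n_2}$ on $\kket{\g,\d}$ and pair against $\bbra{\a,\b}$ using the matrix elements $T_{\a,\b}^{\m,\n}$ from \eqref{TT} and \eqref{amatrix} in each factor; the Kronecker deltas $\d_{\a-\m,\b-\n}$ force the weight/multiplicity matchings that turn the free sums into the constrained sums $\s\subseteq(\a+\b)\cap(\g+\d)$ and $\k\subseteq\b\cap\s$. (4) Combine the Heisenberg-normalization factors $1/z_\l(q,t)$ from $\bbra{\cdot}$ with the $z_\a(q,t)$'s coming out of \eqref{Hact} and reconcile everything with the $z_{(\g+\d)-\s}(q,t)$ in \eqref{eq:C}, and re-index the $\a$-sum in \eqref{Rcc} (running over $\a\subseteq\m\cap\n$) in terms of $\s$; the identity \eqref{binom} and standard manipulations of $\binom{\m}{\n}$ and $\m!$ will be needed to collapse intermediate double sums.

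The main obstacle I expect is step (3)–(4): correctly tracking the combinatorics of how a single sum over $\a\subseteq\m\cap\n$ in \eqref{Rcc}, together with the binomial expansions of the $c$'s in step (2) \emph{and} the binomials in the action formulas \eqref{Hact}, reorganizes into the \emph{three} nested sums (over $\l$, over $\s$, over $\k$) with precisely the binomial coefficients $\binom{\g}{\k}\binom{\d}{\s-\k}\binom{(\a+\b)-\s}{\b-\k}$ and the sign $(-1)^{\ell(\a+\g+\s)}$ and power $(q/t)^{-|\k|}$ appearing in \eqref{eq:C}. Keeping the $q/t$ half-integer powers consistent (they arise both from $K$ and from the $(q/t)^{-r/2}$ twist in $c_{\pm r}$) and verifying that they assemble into $(q/t)^{|\b+\g|/2-|\k|}$ will require care; I would sanity-check the final formula against the vacuum-vacuum element ($\a=\b=\g=\d=\emptyset$ should give $\RR_\emptyset^\emptyset=1$) and against the first nontrivial graded piece, e.g. $|\a|+|\b|=|\g|+|\d|=1$, where \eqref{eq:Rprop} should reproduce the known $2\times2$ block of the $R$-matrix. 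The rest — the rationality and pole location, i.e. Corollary~\ref{cor:poles} — is immediate once \eqref{eq:Rprop} is established, since the only $u$-dependence sits in the finite product $\prod_{(i,j)\in\l}\frac{1-ut^{i-1}q^{-j+1}}{1-ut^iq^{-j}}$.
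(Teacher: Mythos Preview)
Your approach is essentially the paper's, and the overall structure is correct: start from Proposition~\ref{prop:R-bar}, expand the $c_{\pm\m}$ via the multinomial (binomial in part multiplicities), evaluate using the $T$-coefficients of \eqref{TT}, and track the Kronecker deltas and $z_\l(q,t)$ factors. Your identification of $N(u)^{-1}f_\l(u)$ with the product over boxes is exactly right, and you correctly anticipate that the hard part is purely combinatorial.

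One point where your plan is underspecified, and where you should follow the paper more closely: in step~(1) you treat $K$ as though only its Cartan piece $(q/t)^{\frac12(d\otimes1+1\otimes d)}$ matters, but the Heisenberg exponential $\exp\left(\sum_r \tfrac1r\tfrac{(1-t^r)(1-q^rt^{-r})}{1-q^r}(t/q)^{r/2}a_r\otimes a_{-r}\right)$ is a genuine off-diagonal operator on $\MF\otimes\MF$. The paper handles this by inserting a resolution of the identity between $\bar R(u)$ and $K$, computing $\bar R_{\a,\b}^{\s,\r}(u)$ and $K_{\s,\r}^{\g,\d}$ separately (the latter yields $\d_{\s+\r,\g+\d}(q/t)^{|\s+\d|/2}\binom{\g}{\s}\prod_{r\in\g-\s}(1-q^rt^{-r})$), and then summing over the intermediate $(\s,\r)$. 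After substituting the explicit $X_{\m,\n}(u)$ of \eqref{eq:Xcoef}, the spectral-parameter and $g_{\l,\cdot}$ factors separate cleanly, and what remains is a single combinatorial identity (the paper's \eqref{eq:C-identity}) equating a triple sum involving $\tilde C$ and $K$-matrix-elements to the claimed $C_{\a,\b}^{\g,\d}(\s)$. The factors $\prod(1-q^rt^{-r})$ from $K$ do not appear in the final $C$; they are absorbed into the $(q/t)$-powers upon expanding the product and applying Vandermonde-type binomial identities. The paper itself only sketches this last reduction (``the rest is an exercise \dots using standard manipulations with binomial coefficients''), so your instinct that this is where the real work lies is accurate --- just be aware that the intermediate object has one more layer (the sum over the insertion point $(\s,\r)$ and the $K$-exponential contribution) than your outline suggests.
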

\begin{proof}
From \eqref{eq:RR} and \eqref{eq:RK} we have $\RR(u)=N(u)^{-1}\bar{R}(u) K$. Denote the matrix elements of $\bar{R}(u)$ and $K$ by:
\begin{align}
\label{eq:Rb_K}
\bar{R}_{\a,\b}^{\g,\d}(u) := \bbra{\a,\b} \bar{R}(u) \kket{\g,\d} ,\qquad
K_{\a,\b}^{\g,\d} := \bbra{\a,\b} K \kket{\g,\d}
\end{align}
We compute $\bar{R}(u)$ starting from \eqref{RFF}. Recall the definition of the operators $c_{\pm n}$ \eqref{c-ops} and then evaluate $\bbra{\a,\b}c_{-\m}c_\n\kket{\g,\d}$ using \eqref{braket_norm} and \eqref{amatrix}:
\ba
    \bbra{\a,\b}c_{-\m}c_\n\kket{\g,\d}
=&
\bbra{\a,\b}
\prod_{r\in \m} 
\left( 1\otimes a_{-r} -(q/t)^{-r/2} a_{-r}\otimes 1 \right)
\prod_{r\in \n} 
\left( 1\otimes a_{r} -(q/t)^{-r/2} a_{r}\otimes 1 \right)\kket{\g,\d}\\
=\sum_{\s\subseteq\m}
\sum_{\r\subseteq\n}
&(-1)^{\ell((\m+\n)-(\s+\r))}
(q/t)^{-\frac{1}{2}|\m+\n-\s-\r|}
{\m \brack \s}{\n \brack \r}
\bbra{\a,\b}a_{-(\m-\s)} a_{\n-\r}\otimes a_{-\s}a_\r\kket{\g,\d}\\
=\sum_{\s\subseteq\m}
\sum_{\r\subseteq\n}
&\frac{(-1)^{\ell((\m+\n)-(\s+\r))}
(q/t)^{-\frac{1}{2}|\m+\n-\s-\r|}}{z_\a(q,t)z_\b(q,t)}
{\m \brack \s}{\n \brack \r}
T^{\m-\s,\n-\r}_{\a,\g}
T^{\s,\r}_{\b,\d}
\ea
After writing the coefficients $T$ explicitly \eqref{TT} and using:
\begin{align}
\label{eq:zz}
z_{\n}(q,t) = z_{\n-\r}(q,t) z_{\r}(q,t) {\n \brack \r} 
\end{align}
we get:
\ba
\bar{R}_{\a,\b}^{\g,\d}(u)
 &= \sum_{\m,\n\in \MP} 
X_{\m,\n}(u) 
\sum_{\s\subseteq\m}
\sum_{\r\subseteq\n}
\frac{(-1)^{\ell((\m+\n)-(\s+\r))}
(q/t)^{-\frac{1}{2}|\m+\n-\s-\r|}}{z_\a(q,t)z_\b(q,t)}
{\m \brack \s}{\n \brack \r}
T^{\m-\s,\n-\r}_{\a,\g}
T^{\s,\r}_{\b,\d}\\
=
(-1)^{\ell(\a+\g)}
&\sum_{\m,\n\in \MP} 
X_{\m,\n}(u) 
\sum_{\s\subseteq\m}
\sum_{\r\subseteq\n}
\d_{\a-(\m-\s),\g-(\n-\r)}
\d_{\b-\s,\d-\r}
(q/t)^{-\frac{1}{2}|\m+\n-\s-\r|}
z_{\n}(q,t)
{\m \brack \s}{\g \brack \n-\r}{\d \brack \r}
\ea
The Kronecker delta $\d_{\b-\s,\d-\r}$ allows us to fix $\r=\d-(\b-\s)$ while the other Kronecker delta can be simplified:
\begin{multline}
\label{eq:Rme0}
\bar{R}_{\a,\b}^{\g,\d}(u)=(-1)^{\ell(\a+\g)}(q/t)^{\frac{1}{2}(|\a|-|\g|)}
\sum_{\substack{\m\subseteq (\a+\b)\\
\n\subseteq (\g+\d)}}
\d_{(\a+\b)-\m,(\g+\d)-\n} \, z_\n(q,t) X_{\m,\n}(u)  \\
\sum_{\s}  (q/t)^{|\s-\m|} 
{\m \brack \s} 
{\g \brack \n-(\d-(\b-\s))}
{\d \brack \b-\s}  
\end{multline}
Next we ``reverse'' the summations over $\m$ and $\n$ by replacing everywhere $\m \rightarrow (\a+\b)-\m$ and $\n \rightarrow (\g+\d)-\n$. After that the Kronecker delta becomes $\d_{\m,\n}$ and allows us to remove the summation over $\n$:
\begin{align}
\label{eq:Rme1}
\bar{R}_{\a,\b}^{\g,\d}(u)=
\sum_{\m}
z_{(\g+\d)-\m}(q,t) X_{(\a+\b)-\m,(\g+\d)-\m}(u)
\tilde{C}_{\a,\b}^{\g,\d}(\m)
\end{align}
where:
\begin{align}
\label{eq:Ctilde}
\tilde{C}_{\a,\b}^{\g,\d}(\m):=    (-1)^{\ell(\a+\g)}(q/t)^{-\frac{1}{2}(|\a|+|\g|)-|\b|}\sum_{\s}  (q/t)^{|\s+\m|}  
{(\a+\b)-\m \brack \s} 
{\g \brack \m-(\b-\s)}
{\d \brack \b-\s}  
\end{align}
The evaluation of $K$ produces:
\begin{align}
\label{eq:Kme}
K_{\a,\b}^{\g,\d}=\d_{\a+\b,\g+\d} (q/t)^{|\a+\d|/2} {\g \brack \a} \prod_{r\in \g-\a} (1-q^r t^{-r})
\end{align}
We need to compute $\RR_{\a,\b}^{\g,\d}(u) =N(u)^{-1} \sum_{\s,\r} \bar{R}_{\a,\b}^{\s,\r}(u) K_{\s,\r}^{\g,\d}$. After inserting the explicit form of the coefficient $X$ \eqref{eq:Xcoef} and simplifying we find:
\begin{multline}
\label{eq:RX}
\RR_{\a,\b}^{\g,\d}(u)=
\sum_{\l\in \MP} b_\l(q,t)  \prod_{(i,j)\in \l}\frac{1-u  t^{i-1}q^{-j+1}}{1-u t^{i}q^{-j}}
\sum_{\k\subseteq (\a+\b)\cap (\g+\d) } g_{\l,(\a+\b)-\k}(q,t)
g_{\l,(\g+\d)-\k}(q,t) \\
\sum_{\m\subseteq\k} \frac{(-1)^{\ell(\k-\m)}z_{(\g+\d)-\m}(q,t)}{z_{\k-\m}(q,t)}
\sum_{\s,\r}\tilde{C}_{\a,\b}^{\s,\r}(\m)K_{\s,\r}^{\g,\d}
\end{multline}
The first line of this expression matches with \eqref{eq:Rprop} up to the factor $C_{\a,\b}^{\g,\d}(\k)$. The remaining part of the proof consists of showing that:
\begin{align}
\label{eq:C-identity}
\sum_{\m\subseteq\k} \frac{(-1)^{\ell(\k-\m)}z_{(\g+\d)-\m}(q,t)}{z_{\k-\m}(q,t)}
\sum_{\s,\r}\tilde{C}_{\a,\b}^{\s,\r}(\m)K_{\s,\r}^{\g,\d}
=C_{\a,\b}^{\g,\d}(\k),
\qquad 
\forall \k\subseteq (\a+\b)\cap(\g+\d)
\end{align}
On the left hand side of \eqref{eq:C-identity} one needs to use \eqref{eq:zz} to rewrite $z_{(\g+\d)-\m}(q,t)=z_{(\g+\d)-\k+(\k-\m)}(q,t)$ in terms of $z_{(\g+\d)-\k}$, $z_{\k-\m}$ and a binomial coefficient. Dividing both sides of the resulting equation by  $z_{(\g+\d)-\k}$ we obtain an identity 
satisfied by two Laurent polynomials in $q/t$ (expand the product $\prod_{r\in \g-\s} (1-(q/t)^r)$ appearing in $K_{\s,\r}^{\g,\d}$ in powers of $q/t$). The rest is an exercise in computing summations on the left hand side of the resulting identity using standard manipulations with binomial coefficients.
\end{proof}

\section{Stable bases and Hilbert schemes}
\label{sec:stable basis}

We will now provide an alternative viewpoint on the $R$-matrix, through a geometric construction known as the stable basis (which originated in \cite{MO}, and was developed in \cite{AO, O1, O2, OS} and other works). We will start by reviewing the general theory, focusing on the crucial aspects, all the while glossing over technical details. Thus the following presentation should not be taken as a completely rigorous discussion (however, we will provide appropriate references), but rather as an invitation to a more in-depth study that does not require prior knowledge of algebraic geometry. Specifically, we provide:

\begin{itemize}

\item in Subsections \ref{sub:alg geom} - \ref{sub:localization} : a quick review of certain basic aspects of algebraic geometry, which would be contained in most introductory courses (such as the first three chapters of \cite{H})

\item in Subsections \ref{sub:hilbert} - \ref{sub:torus acts on M} : an introduction to Hilbert schemes and moduli spaces of framed sheaves on the plane, following \cite{N}

\item in Subsections \ref{sub:symplectic} - \ref{sub:normal} : a discussion of certain symplectic algebraic varieties equipped with torus actions, following \cite{MO}

\item in Subsections \ref{sub:stable} - \ref{sub:poles} : a definition (modulo technical details) of the $K$-theoretic stable basis (\cite{AO,O1,O2}) and its usefulness in constructiong trigonometric $R$-matrices (\cite{OS})

\end{itemize}

\subsection{Algebraic varieties} 
\label{sub:alg geom}

An affine algebraic variety is, by definition, the subset of $\mathbb{C}^m$ cut out by a certain collection of polynomial equations:
\begin{equation}
\label{f's}
V = \Big\{(x_1,\dots,x_m) \in \mathbb{C}^m, \ f_1(x_1,\dots,x_m) = \dots = f_k(x_1,\dots,x_m) = 0 \Big\}
\end{equation}
A (Zariski) open subset of $V$ refers to the following subset, defined for any polynomials $g_1,\dots,g_l$:
\begin{equation}
\label{g's}
U = \Big\{(x_1,\dots,x_m) \in V, \ g_1(x_1,\dots,x_m) \neq 0 \text{ or } \dots \text{ or } g_l(x_1,\dots,x_m) \neq 0 \Big\}
\end{equation}
Our main objects of interest are algebraic varieties $X$ over $\mathbb{C}$, i.e. spaces covered by open subsets:
\begin{equation}
\label{patch}
U \subset X
\end{equation}
which are all isomorphic to open subsets of affine algebraic varieties, as in \eqref{g's}. Moreover, for any open subsets $U,U' \subset X$, the gluing maps between $U \cap U'$ viewed as a subset of $U$ and $U \cap U'$ viewed as a subset of $U'$ are given by rational functions. A subvariety $Z \subset X$ is given by imposing further polynomial equations on the open subsets \eqref{patch}, in a way which is compatible with the gluing maps between any two open subsets. The ``smallest'' subvarieties of $X$ are points:
$$
p \in X
$$
which are sometimes called closed points \footnote{This is because, in the language of scheme theory, the word ``point'' also refers to irreducible subvarieties of $X$ of positive dimension}. Many familiar geometric notions, such as dimension (always over $\mathbb{C}$) and tangent spaces, apply to algebraic varieties, see \cite{H} for details.

The algebraic treatment of varieties starts by considering their coordinate rings, i.e. the rings of algebraic functions on these varieties. For instance, if $V \subset \mathbb{C}^m$ is the affine algebraic variety cut out by the polynomial equations \eqref{f's}, then the Nullstellensatz tells us that its coordinate ring is:
$$
R_V = \mathbb{C}[x_1,\dots,x_m] \Big / \sqrt{(f_1,\dots,f_k)}
$$
Going one step further, if $U \subset V$ is the open subset defined by $g\neq 0$ for some polynomial $g$, then:
$$
R_U = (R_V)_{g} = \left\{ \frac a{g^n}, \ a\in R_V, n \in \mathbb{N} \right\}
$$
As for general algebraic varieties, they have local coordinate rings $R_U$ for every open subset \eqref{patch}. The gluing maps between two open subsets $U$ and $U'$ lead to ring homomorphisms:
\begin{equation}
\label{glue}
R_U \rightarrow R_{U \cap U'} \leftarrow R_{U'}
\end{equation}
If $Z \subset X$ is a subvariety, then the restriction maps $R_U \rightarrow R_{U\cap Z}$ are quotient maps of rings. 

\subsection{Sheaves}
\label{sub:sheaves}

Since the study of algebraic geometry is mostly centered around coordinate rings of varieties, and these rings are commutative, it makes sense to study module theory over such rings.

\begin{defn}

A (coherent) sheaf $\MF$ on $X$ is an object which consists of $R_U$-modules $F_U$, as $U$ goes over all open subsets \eqref{patch}. When defining a sheaf, one needs to specify how to glue the various modules $F_U$.

\end{defn}

We refer to \cite[Chapter 2]{H} for the complete set of axioms that defines a coherent sheaf, and simply work with the intuitive ``definition'' above. The basic example of a coherent sheaf is the structure sheaf of $X$, i.e. the one defined by:
$$
F_U = R_U
$$
for all open subsets $U$, with gluing maps provided by \eqref{glue}. The structure sheaf is usually denoted by $\MO_X$. More generally, a coherent sheaf $\MF$ for which the $R_U$-modules $F_U$ are free for all small enough open subsets $U$, is called locally free. Locally free sheaves can be identified with the familiar notion of vector bundles from topology. However, not all sheaves are locally free, as the following examples show. 

\begin{ex}
\label{ideals}

Let $X = \mathbb{A}_{\mathbb{C}}^2$. Its ring of polynomial functions is $\mathbb{C}[x,y]$, so any module over this ring will correspond to a coherent sheaf over $X$. In particular, ideals:
\begin{equation}
\label{ideal on c2}
I \subset \mathbb{C}[x,y]
\end{equation}
are examples of such modules/sheaves. If we assume that the colength:
$$
n := \dim_{\mathbb{C}} \mathbb{C}[x,y]/I
$$
is finite, then $I$ is not a free module, and so the corresponding coherent sheaf is not locally free.

\end{ex}

\begin{ex}
\label{skyscraper}

Given a point $p \in X$, the skyscraper sheaf $\mathbb{C}_p$ is defined by the property that:
$$
(\mathbb{C}_p)_U = \begin{cases} \mathbb{C} &\text{if } p \in U \\ 0 &\text{if }p \notin U \end{cases}
$$
If $p \in U$, then the action of $f \in R_U$ on $(\mathbb{C}_p)_U$ is simply given by multiplication with the scalar $f(p) \in \mathbb{C}$.

\end{ex}

If $Z \subset X$ is any subvariety, then any coherent sheaf $\MF$ on $X$ can be restricted to $Z$, i.e. we can define the coherent sheaf:
$$
\MF|_Z \text { on } Z
$$
whose defining $R_{U \cap Z}$ modules are (for any open subset $U \subset X$):
$$
F_U \bigotimes_{R_U} R_{U \cap Z}
$$
endowed with the gluing maps inherited from those of $\MF$. The extremal case of restriction is when $Z = p$ is a point, in which case $\MF|_p$ is called the fiber of $\MF$ at $p$.

\subsection{Operations with sheaves} 
\label{sub:operations}

Given coherent sheaves $\MF$ and $\MG$, their direct sum and tensor product:
\begin{equation}
\label{sum and product}
\MF \oplus \MG \quad \text{and} \quad \MF \otimes \MG
\end{equation}
are defined by taking the direct sums and tensor products, respectively, of all the $R_U$-modules $F_U$ and $G_U$. A map $f : \MF \rightarrow \MG$ of coherent sheaves on $X$ consists of $R_U$-module maps $F_U \rightarrow G_U$ for all open subsets $U \subset X$, which are suitably compatible under gluing. The kernel and image of $f$ are also coherent sheaves \footnote{We note that while the $R_U$-modules $(\text{Ker }f)_U$ are simply $\text{Ker } (F_U \rightarrow G_U)$, the $R_U$-modules $(\text{Im }f)_U$ are actually defined by a procedure known as sheafification applied to $\text{Im }(F_U \rightarrow G_U)$, see \cite{H}.}, denoted by:
$$
\text{Ker }f \quad \text{and} \quad \text{Im }f
$$
If $\text{Ker }f = 0$ then $f$ is called injective, while if $\text{Im }f = \MG$ then $f$ is called surjective. 

\begin{defn}

A short exact sequence of coherent sheaves:
\begin{equation}
\label{ses}
0 \rightarrow \MF \xrightarrow{f} \MG \xrightarrow{g} \MH \rightarrow 0
\end{equation}
is one such that the morphism $f$ is injective, the morphism $g$ is surjective, and $\emph{Im }f = \emph{Ker }g$. 

\end{defn}

\begin{defn}
\label{K-theory}

The (0-th) algebraic $K$-theory group of $X$ is the abelian group:
$$
K(X)
$$
generated by symbols $[\MF]$ for all coherent sheaves $\MF$ on $X$, modulo the relations:
\begin{equation}
\label{k-theory rel}
[\MF] - [\MG] +[\MH] = 0
\end{equation}
for all short exact sequences \eqref{ses}.

\end{defn}

When $X$ is smooth, $K(X)$ is a ring with multiplication defined by derived tensor product of sheaves. The fact that we need the word ``derived'' in the previous sentence stems from the fact that the usual tensor product $\MF \otimes \MG$ does not preserve the relations \eqref{k-theory rel} unless one of $\MF$ or $\MG$ is locally free. However, if $X$ is smooth, any coherent sheaf $\MF$ admits a resolution by a finite complex of locally free sheaves $\dots \rightarrow \ME_i \rightarrow \ME_{i-1} \rightarrow \cdots$, so we may define the derived tensor product:
\begin{equation}
    \label{eqn:derived tensor product}
[\MF] \stackrel{L}{\otimes} [\MG] = \sum_i (-1)^i [\ME_i \otimes \MG]
\end{equation}
As an exercise in commutative algebra, one may check that the definition above does not depend on any of the choices made, and endows $K(X)$ with a commutative ring structure.

\begin{ex} As the previous paragraph shows, it is very elegant to work with locally free sheaves (i.e. vector bundles), but one often needs to work with general coherent sheaves. For example, given any subvariety $Z \subset X$, its structure sheaf $\MO_Z$ is a coherent sheaf on $X$ \footnote{This is because for any open subset $U \subset X$, the ring $R_{U \cap Z}$ is a $R_U$-module via the restriction morphism $R_U \rightarrow R_{U \cap Z}$.} and thus an element in $K$-theory:
$$
[\MO_Z] \in K(X)
$$
In the extremal case, when $Z = p$ is a point, this construction recovers skyscraper sheaves: $\MO_p = \mathbb{C}_p$. 

\end{ex}

\subsection{Torus actions} 
\label{torus actions}

Throughout the present paper, an algebraic torus is $T = (\mathbb{C}^*)^k$ for various natural numbers $k$. An algebraic variety $X$ is called a $T$-variety if it is endowed with an action:
$$
T \curvearrowright X
$$ 
by which we mean that general elements $(t_1,\dots,t_k) \in T = (\mathbb{C}^*)^k$ act on $T$-invariant open subsets $U \subset X$ (and implicitly on the local coordinate rings $R_U$) by rational functions in $t_1,\dots,t_k$. 

\begin{ex} 
\label{action on c2}

The standard action:
\begin{equation}
\label{action on plane}
\mathbb{C}^* \times \mathbb{C}^* \curvearrowright \mathbb{A}_{\mathbb{C}}^2, \qquad (t_1,t_2) \cdot (x,y) =  \left( \frac x{t_1}, \frac y{t_2} \right)
\end{equation}
corresponds to the following action on the ring of functions:
\begin{equation}
\label{action on functions}
\mathbb{C}^* \times \mathbb{C}^* \curvearrowright \mathbb{C}[x,y], \qquad (t_1,t_2) \cdot f(x,y) = f(t_1x,t_2y)
\end{equation}

\end{ex}

\begin{defn} Let $X$ be a $T$-variety. A coherent sheaf $\MF$ on $X$ is called $T$-equivariant if the torus $T$ acts on the $R_U$-modules $F_U$ in a way which is compatible with the action of $T$ on the rings $R_U$:
$$
t\cdot (r  f) = (t \cdot r)  (t \cdot f), \qquad \forall t \in T, r \in R_U, f \in F_U
$$
for all $T$-invariant open subsets $U \subset X$. 

\end{defn}

\begin{ex}
\label{fixed ideal}

Combining Examples \ref{ideals} with \ref{action on c2}, it is not hard to see that a finite colength ideal \eqref{ideal on c2} is $\mathbb{C}^* \times \mathbb{C}^*$ equivariant if and only if it is a monomial ideal:
\begin{equation}
\label{monomial ideal}
I_\lambda = (x^{\lambda_1}, x^{\lambda_2}y, \dots, x^{\lambda_d}y^{d-1},y^d)
\end{equation}
for some partition $\lambda_1 \geq \lambda_2 \geq \dots \geq \lambda_d > 0$. The number $\lambda_1+\dots+\lambda_d$ is the colength of the ideal $I_\lambda$.

\end{ex}

\subsection{Equivariant $K$-theory} 
\label{sub:equiv k-theory}

If $X$ is a $T$-variety, then all the notions of Subsection \ref{sub:operations} make sense for $T$-equivariant coherent sheaves $\MF, \MG, \MH$. In particular, morphisms of sheaves $\MF \rightarrow \MG$ are called $T$-equivariant if the corresponding $R_U$-module homomorphisms $F_U \rightarrow G_U$ respect the $T$-action. A short exact sequence is called $T$-equivariant if all the sheaves and morphisms in its definition are $T$-equivariant.

\begin{defn}

The $T$-equivariant (0-th) algebraic $K$-theory group of $X$, denoted by: 
$$
K^T(X)
$$
is defined as in Definition \ref{K-theory}, but allowing only $T$-equivariant coherent sheaves and short exact sequences.

\end{defn}
 
Besides the structures discussed in Subsection \ref{sub:operations} (including derived tensor product if $X$ is smooth), equivariant $K$-theory has the additional structure of a module over the representation ring of $T$:
\begin{equation}
\label{rep ring 1}
\text{Rep}_T \curvearrowright K^T(X)
\end{equation}
where $\text{Rep}_T$ denotes the set of linear combinations of finite-dimensional (algebraic) $T$-representations, made into a ring via direct sum and tensor product; alternatively, one may think of $\text{Rep}_T$ as the ring of characters of such $T$-representations. Explicitly, if $V$ is a finite-dimensional $T$-representation and $\MF$ is a $T$-equivariant coherent sheaf on $X$, then $V \otimes \MF$ has a natural structure of a $T$-equivariant coherent sheaf on $X$. If $T = (\mathbb{C}^*)^k$, then:
\begin{equation}
\label{rep ring 2}
\text{Rep}_T = \mathbb{C}[q^{\pm 1}_1,\dots,q^{\pm 1}_k]
\end{equation}
where $q_i$ denotes the one-dimensional $T$-representation corresponding to the character $(t_1,\dots,t_k) \mapsto t_i$. 

\subsection{The localization theorem}
\label{sub:localization}

Let $X$ be a smooth algebraic variety, which entails the existence of the tangent (respectively cotangent) locally free sheaf $\text{Tan } X$ (respectively $\text{Tan}^\vee X$). We further assume that $X$ has a $T$-action whose fixed point locus $X^T$ is finite. We will write:
$$
K^T(X)_{\text{loc}} = K^T(X) \bigotimes_{\text{Rep}_T} \text{Frac}(\text{Rep}_T)
$$
for the localized equivariant $K$-theory groups of $X$. This formally means that elements of $K^T(X)_{\text{loc}}$ can be multiplied not only with elements of the ring $\text{Rep}_T$ (i.e. Laurent polynomials in the elementary characters $q_1,\dots,q_k : T \rightarrow \mathbb{C}^*$) but with elements of the field:
$$
\mathbb{F} = \text{Frac}(\text{Rep}_T)
$$
(i.e. rational functions in $q_1,\dots,q_k$). Thus, $K^T(X)_{\text{loc}}$ is a $\mathbb{F}$-vector space. 

\begin{thm} \label{loc thm} (\cite{TT}) If $X$ is a smooth $T$-variety, we have an isomorphism of $\mathbb{F}$-vector spaces:
$$
K^T(X)_{\emph{loc}} \cong \bigoplus_{p \in X^T} \mathbb{F} \cdot [p]
$$
where $[p] = [\mathbb{C}_p]$ is the class of the skyscraper sheaf at the torus fixed point $p \in X^T$ (we assume that there are finitely many torus fixed points). The isomorphism above is explicitly given by:
\begin{equation}
\label{loc exp}
[\MF] \mapsto \sum_{p \in X^T} \frac {\MF|^L_p}{\wedge^\bullet (\emph{Tan}^\vee_p X)} \cdot [\mathbb{C}_p]
\end{equation}
for all $T$-equivariant coherent sheaves $\MF$, where the derived fiber (see \eqref{eqn:derived tensor product}) is defined as:
$$
\MF|^L_p = [\MF] \stackrel{L}\otimes [\mathbb{C}_p]
$$
Thus, $\MF|^L_p$ can be thought of as an alternating sum of $T$-equivariant vector spaces (supported at $p \in X$), as can the total exterior power $\wedge^\bullet (\emph{Tan}^\vee_p X)$. The fraction in \eqref{loc exp} is defined as the ratio of the alternating sums of the $T$-characters in said vector spaces, and this ratio is an element of $\mathbb{F}$.

\end{thm}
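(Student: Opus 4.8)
My plan is to obtain this as (a form of) Thomason's concentration theorem \cite{TT}, and I outline the argument. Write $j : X^T \hookrightarrow X$ for the inclusion of the fixed locus and $j_p : \{p\} \hookrightarrow X$ for the inclusion of an individual fixed point $p \in X^T$. Since $X$ is smooth and $X^T$ is finite, $X^T$ is a disjoint union of reduced points, so $K^T(X^T) = \bigoplus_{p \in X^T} K^T(\mathrm{pt}) = \bigoplus_{p \in X^T} \text{Rep}_T \cdot [\mathbb{C}_p]$; hence the theorem is equivalent to the statement that pushforward
\[
j_* : K^T(X^T)_{\mathrm{loc}} \longrightarrow K^T(X)_{\mathrm{loc}}
\]
is an isomorphism of $\mathbb{F}$-vector spaces, together with an identification of its inverse. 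First I would prove surjectivity after localization. The localization long exact sequence in equivariant $K$-theory for the closed embedding $X^T \hookrightarrow X$ with open complement $U = X \setminus X^T$ reads
\[
K^T(X^T) \xrightarrow{\ j_*\ } K^T(X) \longrightarrow K^T(U) \longrightarrow 0 ,
\]
and since tensoring with $\mathbb{F}$ over $\text{Rep}_T$ is exact, surjectivity of $j_*$ after localization reduces to the vanishing $K^T(U)_{\mathrm{loc}} = 0$.

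That vanishing is the substantive point, and the step I expect to be the main obstacle. One proves it by Noetherian induction on $T$-invariant closed subsets of $U$: because $U$ has no $T$-fixed points, the generic stabilizer along every irreducible $T$-invariant subvariety is a proper subtorus $T' \subsetneq T$, so one can stratify $U$ into $T$-invariant locally closed pieces on each of which the stabilizers contain a fixed proper subtorus. On such a piece $Y$, equivariant $K$-theory $K^T(Y)$ is a module over $\text{Rep}_{T'}$, and $\text{Rep}_{T'}$ is the quotient of $\text{Rep}_T$ by an ideal containing an element $\chi - 1$ for some nontrivial character $\chi$ of $T$ that is trivial on $T'$; as $\chi - 1$ is a nonzero element of the Laurent polynomial ring $\text{Rep}_T$ it is invertible in $\mathbb{F}$, so $K^T(Y)_{\mathrm{loc}} = 0$. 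Feeding these vanishings into the localization long exact sequences of the stratification gives $K^T(U)_{\mathrm{loc}} = 0$. (This is precisely the place where one uses that $T$ is a torus rather than an arbitrary group.)

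Finally I would compute the inverse of $j_*$, which reproduces formula \eqref{loc exp} and, as a byproduct, gives injectivity. For a $T$-equivariant coherent sheaf $\MF$, surjectivity lets us write $[\MF] = \sum_{p \in X^T} c_p(\MF)\,[\mathbb{C}_p]$ with $c_p(\MF) \in \mathbb{F}$, and to pin down $c_p$ I would apply the derived restriction $j_p^* = -\stackrel{L}{\otimes}[\mathbb{C}_p]$ to the fixed point $p$ (well-defined on localized $K$-theory since $X$ is smooth, so $j_p$ is a perfect morphism). If $q \neq p$ then $\mathbb{C}_q$ is supported away from $p$, whence $j_p^*[\mathbb{C}_q] = 0$; and resolving $\MO_p$ by the Koszul complex on a regular system of parameters at $p$ — available by smoothness, with $i$-th term restricting at $p$ to $\wedge^i(\text{Tan}^\vee_p X)$ — yields the self-intersection identity
\[
j_p^*[\mathbb{C}_p] = \wedge^\bullet(\text{Tan}^\vee_p X) = \prod_{i=1}^{\dim X}\bigl(1 - w_i^{-1}\bigr),
\]
where $w_1,\dots,w_{\dim X}$ are the $T$-weights of $\text{Tan}_p X$. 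Since $p$ is an isolated fixed point, the fixed subspace $(\text{Tan}_p X)^T$ vanishes, so no $w_i$ is the trivial character, each factor $1-w_i^{-1}$ is a nonzero element of $\text{Rep}_T$, and $\wedge^\bullet(\text{Tan}^\vee_p X)$ is invertible in $\mathbb{F}$. Applying $j_p^*$ to $[\MF] = \sum_q c_q(\MF)[\mathbb{C}_q]$ now gives $\MF|^L_p = c_p(\MF)\cdot \wedge^\bullet(\text{Tan}^\vee_p X)$, i.e.
\[
c_p(\MF) = \frac{\MF|^L_p}{\wedge^\bullet(\text{Tan}^\vee_p X)},
\]
which is \eqref{loc exp}; in particular the $c_p(\MF)$ are uniquely determined, so $j_*$ is injective and hence an isomorphism with the displayed inverse. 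The only genuinely hard ingredient is the concentration vanishing $K^T(U)_{\mathrm{loc}} = 0$; the identification of $K^T(X^T)$, the Koszul self-intersection computation, and the invertibility of the $K$-theoretic Euler class are all formal.
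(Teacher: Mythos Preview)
Your proposal is correct, and in fact it goes well beyond what the paper does. The paper does not prove this theorem at all: it is stated with a citation to Thomason \cite{TT}, and the only argument given is the remark immediately following the statement, namely that once one accepts that the classes $[\mathbb{C}_p]$ form an $\mathbb{F}$-basis of $K^T(X)_{\text{loc}}$, the explicit formula \eqref{loc exp} follows from the restriction identity
\[
\mathbb{C}_p|^L_{p'} = \begin{cases} \wedge^\bullet(\text{Tan}^\vee_p X) & p = p' \\ 0 & p \neq p' \end{cases}.
\]
You recover exactly this identity via the Koszul resolution and the support argument, so on the part the paper does address, your approach and the paper's coincide.

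Where you differ is that you actually sketch why the $[\mathbb{C}_p]$ span, i.e.\ the concentration step $K^T(X \setminus X^T)_{\text{loc}} = 0$, via the localization exact sequence and a stratification by stabilizer type. The paper simply defers this to \cite{TT}. Your outline of that step is the standard one and is sound; the only thing to be slightly careful about is that the ``localization long exact sequence'' you invoke is really the right-exact sequence for $G$-theory (coherent sheaves), which is what $K^T$ denotes in this paper, so no issue arises there.
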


The main substance of Theorem \ref{loc thm} lies in the fact that the classes of the skyscraper sheaves $\mathbb{C}_p$ form a $\mathbb{F}$-basis of $K^T(X)_{\text{loc}}$. Once one accepts this fact, formula \eqref{loc exp} is a simple exercise which follows from:
\begin{equation}
\label{eqn:skyscraper restriction}
\mathbb{C}_p|^L_{p'} = \begin{cases} \wedge^\bullet (\text{Tan}^\vee_p X) &\text{if }p = p' \\ 0 &\text{otherwise} \end {cases}
\end{equation}
(the equality above is one of characters of alternating sums of $T$-equivariant vector spaces). 

\subsection{Hilbert schemes}
\label{sub:hilbert}

Let us now apply the notions above to a particularly interesting algebraic variety.

\begin{defn} The Hilbert scheme of $n$ points on the affine plane $\mathbb{A}_\mathbb{C}^2$ is:
\begin{equation}
\label{hilb 1}
\emph{Hilb}_n = \left \{ \text{colength } n \text{ ideals } I \subset \mathbb{C}[x,y] \right \}
\end{equation}

\end{defn}

To interpret $\Hilb_n$ as an algebraic variety, we consider the following alternative description. Given endomorphisms $X,Y$ of a vector space $V$, a vector $v \in V$ is called cyclic if $\text{span}\{P(X,Y)v\} = V$, as $P$ goes over all non-commutative polynomials with coefficients in $\mathbb{C}$. Then we have:
\begin{equation}
\label{hilb 2}
\Hilb_n = \left \{ (X, Y, v) \in \text{Mat}_{n \times n} \times \text{Mat}_{n \times n} \times \mathbb{C}^n \text{ such that } [X,Y] = 0 \text{ and }v \text{ is cyclic} \right\} /GL_n
\end{equation}
where $GL_n$ acts on $X,Y$ by conjugation and on $v$ by left multiplication. To get from an ideal $I$ as in \eqref{hilb 1} to a triple $(X,Y,v)$ as in \eqref{hilb 2}, we fix an identification $\mathbb{C}^n = \mathbb{C}[x,y]/I$, and let $X$ and $Y$ be the operators of multiplication by $x$ and $y$. Meanwhile, $v$ is simply $1 \text{ mod } I$. To get from a triple $(X,Y,v)$ as in \eqref{hilb 2} to an ideal $I$ as in \eqref{hilb 1} is also a straightforward exercise, one which we leave to the interested reader. The presentation \eqref{hilb 2} is manifestly an algebraic variety: the matrix entries of $X,Y,v$ constitute an $n^2+n^2+n$ dimensional affine space; from this we cut out an algebraic variety by imposing the commutation relations $[X,Y] = 0$, while requiring that $v$ is cyclic determines an open subset. \footnote{Care must be taken to properly formulate the operation of taking the $GL_n$ quotient in the context of algebraic varieties. The appropriate language here is that of geometric invariant theory, which deals with the existence and properties of such quotients.}

\subsection{The torus action on $\Hilb$} 
\label{sub:torus acts on Hilb} 

We will consider the torus $T = \mathbb{C^*} \times \mathbb{C}^* \times \mathbb{C}^*$ and the action: 
\begin{equation}
\label{action hilb}
T \curvearrowright \Hilb_n
\end{equation}
defined as follows: the first two factors of $\mathbb{C}^*$ act on ideals in a way which is inherited from their action on $\mathbb{C}[x,y]$ from \eqref{action on functions}. Meanwhile, the third factor of $\mathbb{C}^*$ acts trivially on $\Hilb_n$, so its presence is cosmetic at the moment, but will play a key role when we study moduli spaces of higher rank sheaves. In the language of \eqref{hilb 2}, the action \eqref{action hilb} is given by:
$$
(t_1,t_2, \xi)\cdot (X,Y,v) = (t_1X,t_2Y,\xi v)
$$
A point of $\Hilb_n$ is fixed by $T$ precisely when the corresponding ideal is $T$-equivariant, which as we saw in Example \ref{fixed ideal}, is precisely asking that the ideal be monomial. Therefore, we conclude that the fixed points of $\Hilb_n$ are in one-to-one correspondence with partitions of weight $n$:
$$
\Hilb_n^T = \{I_\lambda, \lambda \vdash n\}
$$
Let $q_1,q_2,u$ be the usual elementary characters of the three factors of the torus \eqref{action hilb}. If we write:
$$
\mathbb{F}_1 = \mathbb{C} (q_1,q_2,u)
$$
then the localization Theorem \ref{loc thm} gives us:
$$
K^T(\Hilb_n)_{\text{loc}} \cong \bigoplus_{|\lambda| = n} \mathbb{F}_1 \cdot [\lambda]
$$
where $[\lambda] = [\mathbb{C}_{I_\lambda}]$. To get the Fock space, we simply need to let $n$ run over all non-negative integers:
$$
\Hilb = \bigsqcup_{n=0}^\infty \Hilb_n, \qquad K^T(\Hilb) = \bigoplus_{n=0}^\infty K^T(\Hilb_n)
$$
and we conclude that:
$$
K^T(\Hilb)_{\text{loc}} \cong \bigoplus_{\lambda} \mathbb{F}_1 \cdot [\lambda]
$$
By work of Haiman, it is natural to identify:
\begin{equation}
\label{identify hilb}
K^T(\Hilb)_{\text{loc}} \cong \Lambda_{\mathbb{F}}(u)
\end{equation}
by sending $[\lambda]$ to the modified Macdonald polynomials $\widetilde{H}_\lambda$ (close relatives of $P_\lambda$, and we refer to \cite{Ha} for details). The notation $\Lambda_{\mathbb{F}}(u)$ simply refers to $\Lambda_{\mathbb{F}}$, but we insert the parameter $u$ in the notation to keep track of the third $\mathbb{C}^*$ action. This action is trivial on $\text{Hilb}_n$, but it will play a role in the next Subsection.

\subsection{Moduli of higher rank sheaves} 
\label{sub:moduli}

The Hilbert scheme is, perhaps after the Grassmannian, one of the most fundamental examples of a moduli space: an algebraic variety whose points parameterize algebro-geometric objects of a different nature (in the present case, ideals in $\mathbb{C}[x,y]$). As we have seen in Example \ref{ideals}, ideals can be interpreted as coherent sheaves on $\mathbb{A}_{\mathbb{C}}^2$. As such, they have rank 1, meaning that for any open subset $U \subset \mathbb{A}_{\mathbb{C}}^2$ which misses finitely many points, the ideal is a rank 1 free module over the local coordinate ring of $U$. One can therefore ask if there exists a moduli space of rank $r$ sheaves on $\mathbb{A}_{\mathbb{C}}^2$, and the answer is sadly, no. However, the following closely related object exists and behaves nicely.

\begin{defn} Let $\infty \subset \mathbb{P}_{\mathbb{C}}^2$ be the line at infinity, so $\mathbb{A}_{\mathbb{C}}^2 = \mathbb{P}_{\mathbb{C}}^2 \backslash \infty$. There exists an algebraic variety $\MM(r)$, whose points are in one-to-one correspondence with pairs:
\begin{equation}
\label{framed sheaf}
(\MF, \phi)
\end{equation}
where $\MF$ is a rank $r$ coherent sheaf on $\mathbb{P}_{\mathbb{C}}^2$, and $\phi$ is an isomorphism:
\begin{equation}
\label{framing}
\phi : \MF|_\infty \xrightarrow{\sim} \MO_\infty^{\oplus r}
\end{equation}
The isomorphism $\phi$ is called a framing of $\MF$, and a pair \eqref{framed sheaf} is called a framed sheaf. We will write $\MM(r)_n \subset \MM(r)$ for the connected component of framed sheaves whose second Chern class is $n$.

\end{defn}

The gist of the construction above is that a framed sheaf is trivial (i.e. free) in a neighborhood of $\infty \subset \mathbb{P}_{\mathbb{C}}^2$, but it might have some non-trivial structure away from $\infty$. In particular, for general algebraic reasons, a rank 1 framed sheaf embeds inside the structure sheaf:
\begin{equation}
    \label{eqn:ideal in p2}
\mathcal{I} \subset \mathcal{O}_{\mathbb{P}^2_{\mathbb{C}}}
\end{equation}
Because of the framing, the inclusion above is an equality near $\infty$, but on the open subset $\mathbb{A}_{\mathbb{C}}^2 = \mathbb{P}_{\mathbb{C}}^2 \backslash \infty$, it corresponds to an ideal:
\begin{equation}
    \label{eqn:ideal in a2}
I \subset \mathbb{C}[x,y]
\end{equation}
hence $\MM(1) \cong \Hilb$. With this example in mind, it should be of no surprise that $\MM(r)_n$ admits the following alternate presentation for any $r$, by analogy with \eqref{hilb 2}:
\begin{multline}
\label{adhm}
\MM(r)_n = \Big \{ (X, Y, A, B) \in \text{Mat}_{n \times n} \times \text{Mat}_{n \times n} \times \text{Mat}_{n \times r} \times \text{Mat}_{r \times n} \\  \text{ such that } [X,Y] = AB \text{ and }A \text{ is cyclic} \Big\} /GL_n
\end{multline}
where $GL_n$ acts by conjugation on $X$ and $Y$ and by left (resp. right) multiplication on $A$ (resp. $B$). We call the matrix $A$ cyclic if $\text{span}\{P(X,Y)A \cdot \mathbb{C}^r\} = V$, as $P$ goes over all non-commutative polynomials with coefficients in $\mathbb{C}$. We refer to \cite[Chapter 2]{N} for a detailed description on how to connect a framed sheaf with a quadruple $(X,Y,A,B)$ as above.

\subsection{The torus action on $\MM(r)$} 
\label{sub:torus acts on M}

There is an action of $T_r = \mathbb{C}^* \times \mathbb{C}^* \times (\mathbb{C}^*)^{r}$ on $\MM(r)$, where the first two factors act on the sheaf $\MF$ (through their action on the projective plane $\mathbb{P}_{\mathbb{C}}^2$) while the last $r$ factors act by multiplying the framing isomorphism \eqref{framing} with diagonal matrices. In terms of the presentation \eqref{adhm}, this action is given by:
$$
(t_1,t_2,\xi_1,\dots,\xi_r) \cdot (X,Y,A,B) = (t_1X,t_2Y,AD,t_1t_2D^{-1}B), \quad \text{where } D = \text{diag}(\xi_1,\dots,\xi_r)
$$
The fixed points of this action are all direct sums of monomial ideals, i.e. for various partitions $\lambda^1,\dots,\lambda^r$:
\begin{equation}
\label{monomial sheaf}
\MF_{\lambda^1,\dots,\lambda^r} = \MI_{\lambda^1} \oplus \dots \oplus \MI_{\lambda^r}
\end{equation}
where the inclusion $\MI_{\mu} \subset \MO_{\mathbb{P}_\mathbb{C}^2}$ is defined as the identity near $\infty$, and as the inclusion $I_{\mu} \subset \mathbb{C}[x,y]$ on the open subset $\mathbb{A}_\mathbb{C}^2 = \mathbb{P}_\mathbb{C}^2 \backslash \infty$ (see \eqref{eqn:ideal in p2}--\eqref{eqn:ideal in a2}). The fact that $\MI_{\mu} = \MO_{\mathbb{P}_\mathbb{C}^2}$ in a neighborhood of $\infty$ determines a canonical framing of $\MF_{\lambda^1,\dots,\lambda^r}$, which makes \eqref{monomial sheaf} into a framed sheaf. More specifically, we have:
$$
\MM(r)_n^{T_r} = \{ \MF_{\lambda^1,\dots,\lambda^r}, \text{ where }\lambda^1, \dots, \lambda^r \in \MP \text{ satisfy } |\lambda^1|+\dots+|\lambda^r| = n\}
$$
Let $q_1,q_2,u_1,\dots,u_r$ be the usual elementary characters of the factors of $T_r$. If: 
$$
\mathbb{F}_r = \mathbb{C} (q_1,q_2,u_1,\dots,u_r)
$$
then the localization Theorem \ref{loc thm} gives us:
$$
K^{T_r}(\MM(r)_n)_{\text{loc}} \cong \bigoplus_{|\lambda^1|+\dots+|\lambda^r| = n} \mathbb{F}_r \cdot [\lambda^1,\dots,\lambda^r]
$$
where $[\lambda^1,\dots,\lambda^r]$ is shorthand for $[\mathbb{C}_{\MF_{\lambda^1,\dots,\lambda^r}}]$. As we sum over all natural numbers $n$, we have:
$$
K^{T_r}(\MM(r))_{\text{loc}} \cong \bigoplus_{\lambda^1,\dots,\lambda^r} \mathbb{F}_r \cdot [\lambda^1,\dots,\lambda^r]
$$
Thus, we may identify:
\begin{equation}
\label{identify m}
K^{T_r}(\MM(r))_{\text{loc}} \cong \Lambda_{\mathbb{F}}(u_1) \otimes \dots \otimes \Lambda_{\mathbb{F}}(u_r)
\end{equation}
by sending $[\lambda^1,\dots,\lambda^r]$ to the tensor product of modified Macdonald polynomials $\widetilde{H}_{\lambda^1} \otimes \dots \otimes \widetilde{H}_{\lambda^r}$. 

\begin{ex} 
\label{ex decomposition}

Instead of asking for the fixed point set of $\MM(r)$ under the whole $T_r$ action, one may ask for its fixed point set under a subtorus. In particular, for any decomposition $r = r_1 + r_2$, we may consider the one-parameter subtorus:
\begin{equation}
\label{sigma r}
\phi_{r_1,r_2} : \mathbb{C}^* \hookrightarrow T_r, \qquad t \mapsto (1,1,\underbrace{1,\dots,1}_{r_1 \ 1 \text{'s}},\underbrace{t,\dots,t}_{r_2 \ t\text{'s}})
\end{equation}
The fixed point set with respect to this torus is simply given by those rank $r$ framed sheaves which split as a direct sum of two framed sheaves, of respective ranks $r_1$ and $r_2$:
$$
\MM(r_1) \times \MM(r_2) \cong \MM(r)^{\phi_{r_1,r_2}}, \qquad (\MF_1, \MF_2) \mapsto \MF_1 \oplus \MF_2
$$
In terms of connected components indexed by the second Chern class $n$, we have:
$$
\bigsqcup_{n_1+n_2=n} \MM(r_1)_{n_1} \times \MM(r_2)_{n_2} \cong \MM(r)_n^{\phi_{r_1,r_2}}
$$ 

\end{ex}

\subsection{Symplectic forms and tangent spaces} 
\label{sub:symplectic}

The $K$-theoretic stable basis construction (\cite{AO,O1,O2,OS}, inspired by \cite{MO}), which we will now recall, exists for a wide class of so-called conical symplectic resolutions $X$ \footnote{This class contains all Nakajima quiver varieties, of which the moduli spaces $\MM(r)_n$ are specific examples.}. Such algebraic varieties $X$ are smooth and equipped with a non-degenerate symplectic form:
\begin{equation}
\label{iso tan 1}
\omega : \text{Tan } X \otimes \text{Tan } X \rightarrow \MO_X
\end{equation}
Non-degeneracy implies that \eqref{iso tan 1} yields an isomorphism between the tangent and cotangent bundles. Moreover, we assume that there exists a torus $T$ acting on $X$, which naturally splits into two parts:
\begin{equation}
\label{decomp}
T = \mathbb{C}^*_\omega \times A
\end{equation}
where the subtorus $\mathbb{C}^*_\omega$ scales the symplectic form with weight 1, while the subtorus $A$ preserves the symplectic form. This means that the aforementioned isomorphism between the tangent and cotangent bundles (induced by \eqref{iso tan 1}) is not $T$-equivariant, but becomes $T$-equivariant upon twisting with the character $q : T \rightarrow \mathbb{C}^*$ that sends $(t,a) \mapsto t$ with respect to the decomposition \eqref{decomp}:
\begin{equation}
\label{iso tan 2}
\text{Tan}^\vee X \cong q \otimes \text{Tan }X
\end{equation}
The non-degeneracy of the form \eqref{iso tan 1} implies that the dimension of $X$ = the rank of $\text{Tan } X$ must be even. Moreover, the $T$-equivariant nature of the symplectic form of \eqref{iso tan 2} implies that we have isomorphisms:
$$
\text{Tan}^\vee_p X \cong q \otimes \text{Tan}_p X
$$
between the tangent and cotangent spaces at any $p \in X^T$. Therefore, the weights that appear in the $T$-representation $\text{Tan}_p X$ can be paired up with respect to the symplectic form, and we conclude that:
$$
\chi_T(\text{Tan}_p X) = \sum_{i=1}^{\frac {\dim X}2} \left( \chi_i + \frac 1{q\chi_i} \right)
$$
for various $T$-characters $\chi_i : T \rightarrow \mathbb{C}^*$. Here and below, if $V$ is any representation of the torus $T$, we will write $\chi_T(V) \in \text{Rep}_T$ for the character of $T$ in $V$. Explicity, $\chi_T(V)$ will be a Laurent polynomial in the elementary characters of $T$ (these are denoted by $q_1,q_2,u_1,\dots,u_r$ for the torus $T_r$ of Subsection \ref{sub:torus acts on M}).

\begin{ex} 
\label{fixed point moduli}

The discussion above applies to the moduli spaces of framed sheaves $X = \MM(r)_n$, which are $2nr$ dimensional as algebraic varieties over $\mathbb{C}$ (this generalizes the well-known fact that the Hilbert scheme of $n$ points has dimension $2n$). We consider the action of the torus:
$$
T = T_r = \mathbb{C}^* \times \mathbb{C}^* \times (\mathbb{C}^*)^r
$$
that was considered in Subsection \ref{sub:torus acts on M}, and the decomposition \eqref{decomp} has:
$$\mathbb{C}^*_\omega = \{(t,t,1,\dots,1)\} \hookrightarrow T_r \hookleftarrow A = \{(t,t^{-1},\xi_1,\dots,\xi_r)\}
$$
In particular, the weight of the symplectic form is $q_1q_2$. As for the $T_r$-character in the tangent space to a fixed point of the form \eqref{monomial sheaf}, if we let $q_1,q_2,u_1,\dots,u_r$ denote the usual elementary characters of the factors of $T_r$, we have:
\begin{equation}
\label{explicit tangent space}
\chi_{T_r} \left(\emph{Tan}_{\MF_{\lambda^{1},\dots,\lambda^r}} \MM(r) \right) = \sum_{i,j=1}^r \sum_{\square  \in \lambda^i} \left( \frac {u_j}{u_i} q_1^{-a_{\square}(\lambda^i)-1}q_2^{l_{\square}(\lambda^j)} + \frac {u_i}{u_j} q_1^{a_{\square}(\lambda^i)}q_2^{-l_{\square}(\lambda^j)-1} \right)
\end{equation}
where $\square \in \lambda$ means that $square$ goes over all the boxes in the Young diagram of the partition $\lambda$, and the arm (respectively leg) length $a_{\square}(\mu)$ (respectively $l_{\square}(\mu)$) refers to the number of unit steps one must take to the right (respectively above) of the box $\square$ in order to reach the vertical (respectively horizontal) boundary of the Young diagram $\lambda$. We remark that the arm (respectively leg) length can be negative if the aforementioned boundary is left (respectively below) of the box $\square$. 

\end{ex}

In the case $r=1$, namely the Hilbert scheme of points, formula \eqref{explicit tangent space} reads:
$$
\chi_{\mathbb{C}^* \times \mathbb{C}^*} \left(\text{Tan}_{\lambda} \Hilb \right) = \sum_{\square  \in \lambda} \left( q_1^{-a_{\square}(\lambda)-1}q_2^{l_{\square}(\lambda)} +  q_1^{a_{\square}(\lambda)}q_2^{-l_{\square}(\lambda)-1} \right)
$$
Then formula \eqref{eqn:skyscraper restriction} implies the following restriction formula in $K$-theory:
\begin{equation}
    \label{eqn:i lambda pair}
\mathbb{C}_{I_\lambda}|^L_{I_\lambda} = \prod_{\square  \in \lambda} \left(1 - q_1^{a_{\square}(\lambda)+1}q_2^{-l_{\square}(\lambda)} \right) \left(1 - q_1^{-a_{\square}(\lambda)}q_2^{l_{\square}(\lambda)+1} \right)
\end{equation}
where the left-hand side is naturally interpreted as the torus character in a chain complex of $\mathbb{C}^* \times \mathbb{C}^*$ representations. Note that the right-hand side of the expression above matches \eqref{eqn:b lambda}, which is supported by the prediction that the isomorphism \eqref{identify hilb} sends $[\lambda] = [\mathbb{C}_{I_\lambda}]$ to the modified Macdonald polynomial $\widetilde{H}_\lambda$ (the minor discrepancies between \eqref{eqn:b lambda} and \eqref{eqn:i lambda pair}, namely the substitution $q_1,q_2 \mapsto q,t^{-1}$ and the fact that half of the linear factors need to be moved from the numerator to the denominator, are accounted for in the difference between Macdonald polynomials $P_\lambda$ and modified Macdonald polynomials $\widetilde{H}_\lambda$). 

\subsection{Attracting subvarieties} 
\label{sub:attracting}

For a smooth algebraic variety $X$ endowed with the action of an algebraic torus $T$, we can perform the following construction. For any one-parameter torus $\sigma : \mathbb{C}^* \rightarrow T$ and any connected component of the fixed locus:
\begin{equation}
\label{connected}
Z \subset X^\sigma
\end{equation}
we may define its attracting set:
\begin{equation}
\label{attracting}
\text{Attr}_\sigma(Z) = \{x \in X \text{ s.t. } \lim_{t \rightarrow 0} \sigma(t)\cdot x \text{ exists and lies in }Z\}
\end{equation}
The repelling set is defined analogously, but replacing $\sigma$ with $\sigma^{-1}$. One may define a partial order on the set of connected components of $X^\sigma$ generated by:
$$
Z \succeq Z' \quad \text{if} \quad \overline{\text{Attr}_\sigma(Z)} \cap Z' \neq \emptyset
$$
and transitivity. This allows us to define the full attracting set of a connected component \eqref{connected} as:
\begin{equation}
\label{full attracting}
\text{Attr}^f_\sigma(Z) = \bigsqcup_{Z' \preceq Z} \text{Attr}_\sigma(Z')
\end{equation}
While attracting sets \eqref{attracting} need not be subvarieties of $X$ in general (the reason is that they are only locally closed, instead of closed, subsets of $X$ in the Zariski topology), the full attracting set \eqref{full attracting} will be. This is the very point of the definition of full attracting subvarieties: if one defines the notion of a fixed point ``flowing'' into another fixed point when a small perturbation of the former literally flows into the latter under the action of the one-parameter subtorus $\sigma$, then the full attracting subvariety of $Z$ consists of all points of $X$ which flow into points which flow into points which ... flow into points of $Z$.

\subsection{Normal bundles} 
\label{sub:normal}

Let us now apply the situation in the previous Subsection when $X$ is a symplectic variety, and $\sigma : \mathbb{C}^* \rightarrow A$ is a one-parameter subtorus that preserves the symplectic form. In this case, consider any connected component of the fixed point locus:
$$
Z \subset X^A
$$
The normal bundle to $Z$ will split as a direct sum:
\begin{equation}
\label{normal}
N_Z X = N_Z^+ X \oplus N_Z^-X
\end{equation}
where $N_Z^+X$ (respectively $N_Z^-X$) is the normal bundle of the repelling (respectively attracting) subvariety of $Z$. In other words, the weights of the torus $A$ acting in the fibers of $N_Z^+X$ (respectively $N_Z^-X$) are precisely those weights which are positive (respectively negative) with respect to the cocharacter $\sigma$. \footnote{Given a cocharacter $\sigma : \mathbb{C}^* \rightarrow A$, a character $\chi : A \rightarrow \mathbb{C}^*$ is called positive (respectively negative) if $\chi \circ \sigma(t) = t^n$ for $n$ a positive (respectively negative) integer. The notion applies similarly if $\chi : T \rightarrow \mathbb{C}^*$ for a bigger torus $T \supset A$.} The symplectic form pairs $N_Z^+X$ and $N_Z^-X$ non-trivially, and so we have the following analogue of \eqref{iso tan 2}:
$$
(N_Z^+X)^\vee \cong q \otimes N_Z^- X
$$
as $T$-equivariant vector bundles on $Z$. In particular, the dimensions of the attracting and repelling normal bundles in \eqref{normal} are equal to each other, which can be thought of as saying that the attracting/repelling subvarieties are half-codimensional with respect to the subvariety $Z \subset X$ (in particular, this implies that the dimension of $Z$ must be even).

\begin{ex} Let us consider the situation of Example \ref{fixed point moduli}, when $X = \MM(r)_n$, and the fixed point set $X^A$ consists of finitely many points:
\begin{equation}
\label{fixed point p}
p = \MF_{\lambda^1,\dots,\lambda^r}
\end{equation}
In this case, the normal bundle to such a fixed point is simply the tangent space at $p$, whose $T_r$-character is given by formula \eqref{explicit tangent space}. For any cocharacter $\sigma : \mathbb{C}^* \rightarrow A \subset T_r$, one of the two terms:
\begin{equation}
\label{two terms}
\frac {u_j}{u_i} q_1^{-a_{\square}(\lambda^i)-1}q_2^{l_{\square}(\lambda^j)} \qquad \text{and} \qquad \frac {u_i}{u_j} q_1^{a_{\square}(\lambda^i)}q_2^{-l_{\square}(\lambda^j)-1}
\end{equation}
will be positive and the other will be negative with respect to $\sigma$. Therefore, we have:
$$
\emph{Tan}_p X = \emph{Tan}^+_p X \oplus \emph{Tan}^-_p X
$$
where the $T_r$-character of $\emph{Tan}^+_p X$ (respectively $\emph{Tan}^-_p X$) is simply the sum of those terms among \eqref{two terms} which are positive (respectively negative) with respect to $\sigma$. Special choices of $\sigma$ will make it very obvious which is the positive term and which is the negative term. For example, if:
\begin{equation}
\label{one-parameter}
\sigma_{d_1,\dots,d_r}(t) = (t,t^{-1}, t^{d_1}, \dots ,t^{d_r})
\end{equation}
for integers $d_1 \ll \dots \ll d_r$ (the inequalities should be interpreted as saying that $d_i$ is much smaller than $d_{i+1}$ compared to the weight $n$ of the partitions that make up the fixed point \eqref{fixed point p}), then the first term in \eqref{two terms} will be the positive one and the second term will be the negative one whenever $i < j$.

\end{ex}

\subsection{The stable basis} 
\label{sub:stable}

We are ready to give the definition of the $K$-theoretic stable basis (\cite{AO,O1,O2,OS}, inspired by \cite{MO}), at least modulo certain details. For simplicity, we will only deal with symplectic algebraic varieties $X$ endowed with $T$-actions for which all the preceding discussion applies, and whose fixed point set $X^T$ is finite. Our main example is $X = \MM(r)_n$, which satisfies all of these properties. 

\begin{defn} 
\label{def stable basis}

For any generic cocharacter $\sigma : \mathbb{C}^* \rightarrow A$, there is a collection of elements:
\begin{equation}
\label{stable basis}
\{s_p\}_{p \in X^T} \in K^T(X)
\end{equation}
which are uniquely determined by the following properties for all $p \in X^T$:

\begin{enumerate}

\item $s_p$ is supported on the full attracting subvariety $\emph{Attr}^f_\sigma(p)$

\item $s_p|^L_p = \wedge^\bullet(\emph{Tan}^{-,\vee}_p X) \otimes \emph{monomial}_p$

\item for any fixed point $p' \prec p$, the $A$-weights of $s_p|^L_{p'}$ are contained (as elements of $\emph{Lie}_{\mathbb{R}}(A^\vee)$) in:
\begin{equation}
\label{window}
\Big(\text{convex hull}^*\text{ of weights appearing in } \wedge^\bullet(\emph{Tan}^-_{p'}X) \Big) + \emph{shift}_{p,p'} \subset \emph{Lie}_{\mathbb{R}}(A^\vee)
\end{equation}
where $\text{convex hull}^*$ refers to the convex hull minus one of its vertices.  

\end{enumerate}

\end{defn}

\begin{rmk}
\label{rem:choices}

The (so-far undefined) quantities $\emph{monomial}_p$ and $\emph{shift}_{p,p'}$ are certain characters of $T$ and $A$, respectively. The definition of these quantities takes as input two more pieces of data that one must specify in order to give the full definition of the stable basis, namely: 

\begin{itemize}

\item a polarization, i.e. the choice of a decomposition of the tangent spaces to $X$ into halves

\item a $T$-equivariant line bundle $\ML$ on $X$

\end{itemize}

The interested reader may find a complete discussion of these issues in \cite[Section 9]{O lect}.  We do not dwell on them, because these choices are quite natural in the case at hand: because the moduli space of sheaves $X = \MM(r)$ is the cotangent bundle of a certain stack $S$, $\emph{Tan }\MM(r)$ is locally isomorphic to $\emph{Tan } S \oplus \emph{Tan}^\vee S$, and this decomposition into ``halves of the tangent bundle'' yields the requisite polarization. As for the line bundle $\ML$, it will always be taken to be the structure sheaf $\MO_X$ in the constructions that follow.

\end{rmk}

Properties (1) and (2) imply that the collection \eqref{stable basis} is upper triangular in the fixed point basis $[p]$:
\begin{equation}
\label{stable to fixed}
s_p = [p] \cdot \frac {\text{monomial}_p}{\wedge^\bullet(\text{Tan}^{+,\vee}_p X)} + \sum_{p' \prec p} [p'] \cdot \text{coefficient}
\end{equation}
where the numerator of the fraction is a character of $T$, i.e. a monomial in the representation ring $\text{Rep}_T$, and the quantities marked ``coefficient'' are elements of $\text{Frac}(\text{Rep}_T)$. Therefore, the elements \eqref{stable basis} form a basis of $K^T(X)_{\text{loc}}$, which is called the stable basis. Property (3) implies the uniqueness of the stable basis, and also provides an algorithm for computing it via long division of Laurent polynomials in several variables: fix a total ordering of the fixed points $p_1,\dots,p_N$ which refines the partial ordering $\prec$. Then we start from the collection $\{t_p = [\MO_{\text{Attr}^f_p(X)}] \otimes \text{monomial}_p\}_{p \in X^T} \in K^T(X)$ and at the $i$-th step, we will modify: 
$$
t_{p_i} \leadsto s_{p_i} = t_{p_i} + \sum_{j=1}^{i-1} \text{coefficient} \cdot s_{p_j}
$$
in such a way that property (3) holds for $p = p_i$. The fact that the already constructed $s_{p_1},\dots,s_{p_{i-1}}$ already satisfy properties (2) and (3) implies that the coefficients in the above formula are completely determined. The existence of the stable basis satisfying properties (1)-(3) was established in \cite{AO,O1,O2}.

\subsection{The geometric $R$-matrix}
\label{sub:r-matrix}

Let us now set $X = \MM(r)_n$, and consider the one-parameter subgroup \eqref{one-parameter} for $d_1 \ll \dots \ll d_r$. The construction of the previous Subsection gives rise to a stable basis:
\begin{equation}
\label{attracting basis}
\{s^+_{\lambda^1,\dots,\lambda^r}\}_{\text{partitions } \lambda^1, \dots,\lambda^r} \in K^{T_r}(\MM(r))
\end{equation}
Similarly, the analogous construction for the inverse one-parameter subgroup $\sigma^{-1}_{d_1,\dots,d_r}$ yields a basis:
\begin{equation}
\label{repelling basis}
\{s^-_{\lambda^1,\dots,\lambda^r}\}_{\text{partitions } \lambda^1, \dots,\lambda^r} \in K^{T_r}(\MM(r))
\end{equation}
In other words, the basis \eqref{repelling basis} is defined just like the basis \eqref{attracting basis}, but switching the roles of attracting and repelling subvarieties. This implies that the bases $s^+_{\lambda^1,\dots,\lambda^r}$ and $s^-_{\lambda^1,\dots,\lambda^r}$ are upper and lower triangular, respectively, in the basis of fixed points $[\lambda^1,\dots,\lambda^r]$ with respect to the ordering $\prec$. 

\begin{defn} 

For any decomposition $r = r_1+r_2$, we have stable basis maps:
\begin{equation}
\label{stable basis map}
K^{T_{r_1}}(\MM(r_1)) \otimes K^{T_{r_2}}(\MM(r_2)) \stackrel{\emph{Stab}^\pm}\longrightarrow K^{T_r}(\MM(r))
\end{equation}
given by:
$$
\emph{Stab}^\pm \left( s^\pm_{\lambda^1,\dots,\lambda^{r_1}} \otimes s^\pm_{\mu^1,\dots,\mu^{r_2}} \right) = s^\pm_{\lambda^1,\dots,\lambda^{r_1},\mu^{1},\dots,\mu^{r_2}}
$$
for all partitions $\lambda^1,\dots,\lambda^{r_1},\mu^1,\dots,\mu^{r_2}$. The maps \eqref{stable basis map} are isomorphisms after tensoring with $\mathbb{F}_r$.

\end{defn}

In the usual stable basis framework of \cite{MO}, it is natural to think of the maps \eqref{stable basis map} as the fundamental construction, instead of the elements \eqref{attracting basis}--\eqref{repelling basis}. Indeed, Definition \ref{def stable basis} can be upgraded to the language of Lagrangian correspondences, in which the subvariety $\MM(r_1) \times \MM(r_2)$ plays the role of individual torus fixed points of $\MM(r)$ (and in fact, the aforementioned subvariety is the torus fixed locus with respect to the one-parameter subgroup \eqref{sigma r}). An overview of the theory in this more general language can be found in \cite[Section 9]{O lect}.

\begin{defn} 

For any decomposition $r = r_1+r_2$, the composition:
\begin{equation}
\label{r-matrix}
R_{r_1,r_2} : K^{T_{r_1}}(\MM(r_1)) \otimes K^{T_{r_2}}(\MM(r_2)) \xrightarrow{\emph{Stab}^+} K^{T_r}(\MM(r)) \xrightarrow{(\emph{Stab}^-)^{-1}} K^{T_{r_1}}(\MM(r_1)) \otimes K^{T_{r_2}}(\MM(r_2))
\end{equation}
is called a geometric $R$-matrix. It satisfies the quantum Yang-Baxter equation:
\begin{equation}
\label{qybe}
R_{r_1,r_2} R_{r_1,r_3} R_{r_2,r_3} = R_{r_2,r_3} R_{r_1,r_3} R_{r_1,r_2}
\end{equation}
as endomorphisms of $K^{T_{r_1}}(\MM(r_1)) \otimes K^{T_{r_2}}(\MM(r_2)) \otimes K^{T_{r_3}}(\MM(r_3))$, for all natural numbers $r_1,r_2,r_3$. \footnote{Implicit in the notation \eqref{qybe} is that $R_{r_1,r_3}$ acts as \eqref{r-matrix} as an endomorphism of $K(\MM(r_1)) \otimes K(\MM(r_3))$ tensor the identity on $K(\MM(r_2))$, etc.}

\end{defn}

\subsection{Poles of the $R$-matrix}
\label{sub:poles}

As we have seen in \eqref{identify hilb} and \eqref{identify m}, matching modified Macdonald polynomials with the skyscraper sheaves at the torus fixed points allows us to identify $K^{T_r}(\MM(r))$ with a tensor product of Fock spaces. Under this isomorphism, the $R$-matrix \eqref{r-matrix} gives rise to an endomorphism:
\begin{equation}
\label{r-matrix in fock}
R_{r_1,r_2} \in \text{End}_{\mathbb{F}_r} \left( \Lambda_{\mathbb{F}}(u_1) \otimes \dots \otimes \Lambda_{\mathbb{F}}(u_{r}) \right)
\end{equation}
where $r = r_1+r_2$. Since the endomorphism above is ``geometric'' in nature (the rigorous term here is that it is given by a correspondence), all its matrix coefficients are rational functions in $q_1,q_2,u_1,\dots,u_r$. However, the very nature of the stable basis allows us to say more.

\begin{prop} The poles of the endomorphism $R_{r_1,r_2}$ are all of the form:
$$
u_j = u_i q_1^x q_2^y
$$
as $1 \leq i \leq r_1 < j \leq r$ and $x,y \in \mathbb{Z}$.

\end{prop}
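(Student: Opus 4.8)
The plan is to express the geometric $R$-matrix as a product of two ``half'' stable envelopes and then read off its possible denominators from the $T_r$-weights of a single normal bundle, in the spirit of \cite{MO}. Fix integers $d_1 \ll \dots \ll d_r$ and work within a fixed second Chern class, so that all the $K$-theory groups below are finitely generated $\text{Rep}_{T_r}$-modules. The first step is the transitivity (``triangle'') property of stable envelopes (\cite{MO}, and \cite[Section 9]{O lect} in the $K$-theoretic, Lagrangian-correspondence formulation): for $d_1 \ll \dots \ll d_r$, the cocharacter $\sigma_{d_1,\dots,d_r}$ of \eqref{one-parameter} sits on the positive side of every ``crossing'' wall $u_j/u_i = q_1^xq_2^y$ with $i\le r_1<j$ — the same side as the block cocharacter $\phi_{r_1,r_2}$ of \eqref{sigma r}, which in turn lies exactly on the ``internal'' walls (those with $i,j$ on the same side of the split). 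Consequently the stable envelope of $\MM(r)$ for $\sigma_{d_1,\dots,d_r}$ factors as the partial stable envelope for $\phi_{r_1,r_2}$ — which lands in $K^{T_r}\big(\MM(r)^{\phi_{r_1,r_2}}\big) = K^{T_r}\big(\MM(r_1)\times\MM(r_2)\big)$ — followed by the stable envelope of $\MM(r_1)\times\MM(r_2)$ for the internal cocharacter $\sigma_{d_1,\dots,d_{r_1}}\times\sigma_{d_{r_1+1},\dots,d_r}$; and the latter is the tensor product of the stable envelopes of the two factors. Matching this with the Definition of $\emph{Stab}^\pm$ above, and using $s^\pm_{\lambda^1,\dots,\lambda^{r_1}}\otimes s^\pm_{\mu^1,\dots,\mu^{r_2}}=s^\pm_{\lambda^1,\dots,\lambda^{r_1},\mu^1,\dots,\mu^{r_2}}$ for the product variety, the internal envelopes cancel and one is left with
\[
R_{r_1,r_2} \;=\; \big(\emph{Stab}^-_{\phi_{r_1,r_2}}\big)^{-1}\circ\,\emph{Stab}^+_{\phi_{r_1,r_2}},
\]
where $\emph{Stab}^\pm_{\phi_{r_1,r_2}}:K^{T_r}\big(\MM(r_1)\times\MM(r_2)\big)\to K^{T_r}\big(\MM(r)\big)$ are the stable envelopes attached to $\phi_{r_1,r_2}^{\pm 1}$ and the fixed locus $\MM(r_1)\times\MM(r_2)\subset\MM(r)$.

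Next I would use that a stable envelope is a map of $\text{Rep}_{T_r}$-modules, defined with no localization — indeed, as Definition \ref{def stable basis} already records, the stable basis elements live in $K^{T_r}(X)$, not in its localization. Fixing $\text{Rep}_{T_r}$-module bases on both sides, the maps $\emph{Stab}^\pm_{\phi_{r_1,r_2}}$ become square matrices $M^\pm$ with entries in $\text{Rep}_{T_r}=\mathbb{C}[q_1^{\pm 1},q_2^{\pm 1},u_1^{\pm 1},\dots,u_r^{\pm 1}]$, hence with no poles at all. Therefore every matrix coefficient of $R_{r_1,r_2}=(M^-)^{-1}M^+$ is a Laurent polynomial divided by $\det M^-$, and it remains only to locate the zeros of $\det M^-$. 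By the support and normalization axioms of the stable basis (properties (1)--(2) of Definition \ref{def stable basis}, in the version for the non-discrete fixed locus), $\emph{Stab}^-_{\phi_{r_1,r_2}}$ is triangular with respect to the order $\prec$ on the connected components $Z\subset\MM(r)^{\phi_{r_1,r_2}}$, and its diagonal block on $Z$ is multiplication by $\wedge^\bullet(N')^{\vee}$, up to twisting by a line bundle, where $N'$ is the normal bundle of $Z$ inside one of the two $\phi_{r_1,r_2}$-halves of $\MM(r)$ (see \eqref{normal}). Taking $\det$ and then localizing at $T_r$ inside each component yields, up to an invertible monomial of $\text{Rep}_{T_r}$,
\[
\det M^- \;=\; \prod_{Z}\ \prod_{p\,\in\, Z^{T_r}}\ \prod_{w}\ \big(1-w\big),
\]
where, in the innermost product, $w$ runs over the $T_r$-weights of the fiber of $N'$ at $p$; in particular every such $w$ is a $\phi_{r_1,r_2}$-nonfixed summand of the tangent character $\chi_{T_r}\big(\text{Tan}_p\MM(r)\big)$.

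Finally I would read off these weights from formula \eqref{explicit tangent space}: a summand $\tfrac{u_j}{u_i}\,q_1^{a}q_2^{b}$ of $\chi_{T_r}\big(\text{Tan}_p\MM(r)\big)$ is nonfixed by $\phi_{r_1,r_2}$ precisely when $i$ and $j$ lie on opposite sides of the partition $\{1,\dots,r_1\}\sqcup\{r_1+1,\dots,r\}$, and for every such term one has $a,b\in\mathbb{Z}$ (they are arm and leg lengths, each shifted by $0$ or $1$). The equation $1-w=0$ then reads $u_j=u_i\,q_1^{x}q_2^{y}$ with $i,j$ on opposite sides and $x,y\in\mathbb{Z}$; since rewriting $u_j=u_iq_1^xq_2^y$ as $u_i=u_jq_1^{-x}q_2^{-y}$ and exchanging the names $i\leftrightarrow j$ costs nothing ($-x,-y$ are again integers), this locus coincides with the set of $u_j=u_i\,q_1^{x}q_2^{y}$ with $1\le i\le r_1<j\le r$ and $x,y\in\mathbb{Z}$. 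Hence the poles of $R_{r_1,r_2}$ are confined to exactly these loci, which is the assertion.

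The step requiring the most care is the transitivity reduction of the first paragraph: one must check precisely that, for $d_1\ll\dots\ll d_r$, no crossing wall $u_j/u_i=q_1^xq_2^y$ ($i\le r_1<j$) lies between the chamber of $\sigma_{d_1,\dots,d_r}$ and that of $\phi_{r_1,r_2}$, so that the triangle identity genuinely allows the internal stable envelopes to cancel in the composition — and it is exactly this cancellation that disposes of the would-be poles $u_j=u_iq_1^xq_2^y$ with $i,j$ on the same side of the split. The remaining ingredients — the existence, triangularity and normalization of $K$-theoretic stable envelopes for the quiver varieties $\MM(r)_n$, and the compatibility of stable envelopes with products — are by now standard (\cite{AO,O1,O2,OS}), so once the identity $R_{r_1,r_2}=(\emph{Stab}^-_{\phi_{r_1,r_2}})^{-1}\emph{Stab}^+_{\phi_{r_1,r_2}}$ is in place the rest is essentially bookkeeping. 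As with Corollary \ref{cor:poles}, a finer residue analysis would be needed to see that in fact only $x,y\ge 1$ occur, but this is not required for the Proposition.
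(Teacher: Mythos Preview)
Your argument is correct and uses the same underlying ingredients (transitivity of stable envelopes, integrality of the stable classes, and the explicit tangent weights \eqref{explicit tangent space}), but it is organized differently from the paper's proof. The paper first invokes the factorization $R_{r_1,r_2} = \prod_{i\le r_1 < j} R^{(i,j)}_{1,1}$ (\cite[9.3.5]{O lect}) to reduce to $r_1=r_2=1$; then, working only with the \emph{discrete} fixed locus of $\MM(2)$, it passes through the fixed-point basis $[\lambda,\mu]$ and reads the poles off the full tangent character $\wedge^\bullet(\text{Tan}^\vee_{\MF_{\lambda,\mu}}\MM(2))$. You instead keep $r_1,r_2$ arbitrary, use the triangle lemma to identify $\text{Stab}^\pm$ with the partial envelope for the block cocharacter $\phi_{r_1,r_2}$, and then bound the poles by the zeros of $\det M^-$, computed via block-triangularity and restriction to $T_r$-fixed points of each component $Z\subset \MM(r)^{\phi_{r_1,r_2}}$. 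Your route is a bit more uniform and immediately isolates exactly the ``mixed'' weights (those with $i,j$ on opposite sides of the split), at the cost of invoking the stable-envelope formalism for non-isolated fixed loci and the freeness of $K^{T_r}$ over $\text{Rep}_{T_r}$ (both standard for Nakajima quiver varieties, but worth stating). The paper's route is lighter on machinery once the reduction is made, since after passing to $\MM(2)$ everything happens in the fixed-point basis; on the other hand it initially produces a larger set of candidate poles (all tangent weights, including the $i=j$ ones) and then discards those coming from the ``internal'' change of basis $s^-_\lambda\otimes s^-_\mu \to s^+_\lambda\otimes s^+_\mu$.
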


\begin{proof} As explained in Subsection 9.3.5 of \cite{O lect}, the endomorphism $R_{r_1,r_2}$ is a product of endomorphisms $R_{1,1}$ acting in various tensor products of the form $\Lambda_{\mathbb{F}}(u_i) \otimes \Lambda_{\mathbb{F}}(u_j)$ for $1 \leq i \leq r_1 < j \leq r$ (and the identity on the other tensor factors) and so it is enough to prove the Proposition in the case $r_1 = r_2 = 1$. As is clear from \eqref{stable basis map} and \eqref{qybe}, when expressed in the basis $s^+_\lambda \otimes s^+_\mu$, the endomorphism $R_{1,1}$ is the product of the following matrices:

\begin{itemize}

\item the change of basis from $s^+_{\lambda,\mu}$ to $s^-_{\lambda,\mu}$ as $\lambda$ and $\mu$ go over all partitions

\item the change of basis from $s^-_\lambda \otimes s^-_\mu$ to $s^+_\lambda \otimes s^+_\mu$ as $\lambda$ and $\mu$ go over all partitions

\end{itemize}

The second bullet is simply the tensor product of a matrix acting in $\Lambda_{\mathbb{F}}(u_1)$ and a matrix acting in $\Lambda_{\mathbb{F}}(u_2)$, so it does not produce any poles involving $u_1$ and $u_2$. As for the first bullet, we can further sub-divide it into:

\begin{itemize}

\item the change of basis from $s^+_{\lambda,\mu}$ to $[\lambda,\mu]$ as $\lambda$ and $\mu$ go over all partitions

\item the change of basis from $[\lambda,\mu]$ to $s^-_{\lambda,\mu}$ as $\lambda$ and $\mu$ go over all partitions

\end{itemize} 

As shown by \eqref{stable to fixed}, the poles of the aforementioned changes of basis are among the linear factors in:
$$
\wedge^\bullet(\text{Tan}_{\MF_{\lambda,\mu}}^\vee \MM(2))
$$
From formula \eqref{explicit tangent space}, those linear factors which involve both $u_1$ and $u_2$ in the expression above are:
$$
1 - \frac {u_1}{u_2} q_1^x q_2^y \quad \text{and} \quad 1 - \frac {u_2}{u_1} q_1^x q_2^y
$$
for various $(x,y) \in \mathbb{Z}^2$ which arise as the weights of boxes in various partitions. 

\end{proof}

A finer analysis of the poles of the $R$-matrix via shift operators (following the lines of \cite[Subsections 3.4 and 3.5]{K}) allows one to prove that the poles of $R_{r_1,r_2}$ are simple and all of the form $u_j = u_i q_1^x q_2^y$ as $1 \leq i \leq r_1 < j \leq r$ and $x,y \in \mathbb{N}$. We thank Yakov Kononov and Andrey Smirnov for this remark. 

\subsection{Connection between the $R$-matrices}

As the reader probably suspects at this stage, the endomorphisms \eqref{r-matrix in fock} are expected to match the ones produced by the $R$-matrix of the quantum toroidal algebra. 

\begin{conj}
\label{conj:two r-matrices}

The endomorphism $R_{1,1} \in \emph{End}_{\mathbb{F}_2} \left( \Lambda_{\mathbb{F}}(u_1) \otimes\Lambda_{\mathbb{F}}(u_{2}) \right)$ matches $R(u_2/u_1)$ of \eqref{eqn:r-matrix in fock}. 

\end{conj}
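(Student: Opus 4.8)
The plan is to prove Conjecture \ref{conj:two r-matrices} by the standard ``uniqueness of intertwiners plus normalization'' strategy, which reduces the equality of two operators to the equality of two scalars. The first task is to set up the dictionary between the two incarnations of $\qg$ acting on Fock space. On one side is the algebraic action \eqref{eqn:action}; on the other is the geometric action on $K^T(\Hilb)_{\text{loc}}$ built from Nakajima-style correspondences on Hilbert schemes, in the normalization of \cite{FT, SV}. Under the Haiman identification \eqref{identify hilb} these two actions agree --- this is essentially the content of \cite{FT, SV} --- but one must be careful about the passage from the modified Macdonald basis $\widetilde H_\lambda$ appearing geometrically to the basis $P_\lambda$ of Section \ref{sec:Fock}, since the two are related by a plethystic substitution together with a diagonal rescaling, which conjugates one action into the other. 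The same comparison, applied to $K^{T_r}(\MM(r))_{\text{loc}}$ and \eqref{identify m}, identifies the geometric $\qg$-action with the $r$-fold coproduct action on $\Lambda_{\bbF}(u_1) \otimes \dots \otimes \Lambda_{\bbF}(u_r)$, the framing equivariant parameters $u_i$ playing the role of the spectral parameters. Pinning down all conventions here is the first technical task.

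Next I would invoke the key structural property of stable bases from \cite{MO}: the maps $\text{Stab}^\pm$ of \eqref{stable basis map} are morphisms of $\qg$-modules, where the target $K^{T_r}(\MM(r))$ carries the full geometric action while the source carries the action induced by the coproduct $\Delta$ (for $\text{Stab}^+$) or the opposite coproduct $\Delta^{\text{op}}$ (for $\text{Stab}^-$). Consequently the geometric $R$-matrix $R_{1,1} = (\text{Stab}^-)^{-1} \circ \text{Stab}^+$ intertwines $\Delta$ with $\Delta^{\text{op}}$ on $\Lambda_{\bbF}(u_1) \otimes \Lambda_{\bbF}(u_2)$, i.e. $R_{1,1}\, \Delta(x) = \Delta^{\text{op}}(x)\, R_{1,1}$ for all $x \in \qg$. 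By construction the algebraic $R$-matrix $R(u_2/u_1)$, being the image of the universal $\MR$ in the Fock representation \eqref{eqn:r-matrix in fock}, satisfies the same intertwining relation \eqref{deltaUR}. Hence $R(u_2/u_1)^{-1} R_{1,1}$ is an endomorphism of $\Lambda_{\bbF}(u_1) \otimes \Lambda_{\bbF}(u_2)$ which is $\qg$-linear for the coproduct $\Delta$.

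The crux is to show that such an endomorphism is a scalar. Working over the field $\bbF(u_2/u_1)$, one wants that $\Lambda_{\bbF}(u_1) \otimes \Lambda_{\bbF}(u_2)$ is irreducible as a $\qg$-module --- equivalently, that its commutant is one-dimensional on each bigraded piece and compatible across pieces, hence equal to $\bbF(u_2/u_1) \cdot \text{id}$. The cleanest route is a genericity argument: a tensor product of two evaluation Fock modules is irreducible for generic values of the ratio of evaluation parameters, which can be extracted from the theory of $q$-characters for the quantum toroidal algebra of type $\mathfrak{gl}_1$, or proved directly by producing a cyclic vector and a cocyclic vector whose pairing is a nonzero rational function of $u$. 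Granting this, $R_{1,1} = c(u_2/u_1) \cdot R(u_2/u_1)$ for a scalar function $c$. To pin down $c \equiv 1$ --- so that the match is with the unnormalized operator $R(u_2/u_1)$, as stated --- I would compute the single matrix coefficient $\bbra{\emptyset,\emptyset} R_{1,1} \kket{\emptyset,\emptyset}$ geometrically: it is the restriction of the relevant stable-basis change of basis to the fixed point $\MF_{\emptyset,\emptyset}$, computed by localization \eqref{loc exp} from the tangent weights \eqref{explicit tangent space} at the $\MM(2)$-fixed points, and one checks it equals $N(u)$ of \eqref{eq:N}, i.e. the same vacuum--vacuum value \eqref{RN} as on the algebraic side.

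The main obstacle is the irreducibility (equivalently, the one-dimensionality of the commutant) of $\Lambda_{\bbF}(u_1) \otimes \Lambda_{\bbF}(u_2)$ over $\bbF(u)$: the Fock space is infinite-dimensional and graded, $\qg$ acts through completed sums, and irreducibility of tensor products of such modules is genuinely delicate, requiring either the representation-theoretic input on generic irreducibility of tensor products of evaluation modules or a careful bare-hands argument with cyclic and cocyclic vectors. A secondary but still substantial difficulty is the precise matching of the geometric and algebraic $\qg$-actions under \eqref{identify hilb}--\eqref{identify m}, including the $\widetilde H$-versus-$P$ change of basis and the exact dictionary between framing and spectral parameters; this is ``known'' but scattered across \cite{FT, SV, MO} in differing conventions, and assembling it rigorously --- as foreshadowed in the footnote opening Section \ref{sec:stable basis} --- is where most of the labor lies.
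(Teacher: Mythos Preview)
The statement you are attempting to prove is labeled a \emph{Conjecture} in the paper, and the paper does not provide a proof of it. After stating Conjecture \ref{conj:two r-matrices}, the paper instead formulates the stronger Conjecture \ref{conj:big} (that the map $\Upsilon$ of \eqref{eqn:algebra homomorphism} is a Hopf algebra isomorphism onto the FRT-type algebra $\mathcal{U}$), lists three ingredients needed to establish it, and proves only the injectivity part (Proposition \ref{prop:inj}). So there is no ``paper's own proof'' to compare against; your proposal is an attempt to prove something the authors explicitly leave open.

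That said, your strategy is the natural one, and it is worth noting where it overlaps with, and diverges from, the paper's outline. The paper's route is global: establish that $\Upsilon$ is a Hopf algebra isomorphism onto $\mathcal{U}$, whence the universal $R$-matrices on both sides must agree in every representation, in particular on $\Lambda_{\bbF}(u_1)\otimes\Lambda_{\bbF}(u_2)$. Your route is local: work directly in the rank-two Fock representation, use the intertwining property of $R_{1,1}$ and of $R(u_2/u_1)$, and invoke Schur's lemma. The local route is potentially cheaper, since it avoids surjectivity of $\Upsilon$ (item (3) in the paper's list), which the authors flag as the hardest part.

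However, your argument has a genuine gap at precisely the point the paper identifies. You assert that $\text{Stab}^\pm$ intertwine the $\qg$-action via $\Delta$ and $\Delta^{\text{op}}$ respectively, citing \cite{MO} as the source. What \cite{MO} (and \cite{OS}) actually give is that $\text{Stab}^\pm$ intertwine the action of the geometrically defined algebra $\mathcal{U}$ with respect to \emph{its} coproduct. To transport this to $\qg$ you need that $\Upsilon$ is a \emph{bialgebra} map, i.e. that the Drinfeld coproduct \eqref{eqn:cop quantum 0}--\eqref{eqn:cop quantum 4} on $\qg$ matches the stable-basis coproduct on $\mathcal{U}$ under $\Upsilon$. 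This is exactly item (2) in the paper's list of open tasks, and you have assumed it rather than proved it. Without it, you cannot conclude that $R_{1,1}$ satisfies $R_{1,1}\,\Delta(x)=\Delta^{\text{op}}(x)\,R_{1,1}$ for the \emph{algebraic} coproduct $\Delta$ on $\qg$, and the Schur argument does not get off the ground. Your secondary concern (irreducibility of the tensor product over $\bbF(u)$) is real but, as you say, tractable by standard methods; the bialgebra compatibility is the substantive missing piece.
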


As we already mentioned in Remark \ref{rem:choices}, part of the Conjecture above involves fixing the various choices that go into the Definition of the stable basis, with the goal of having it match $R(u_2/u_1)$ on the nose. In fact, these choices are quite strongly determined by the existence of an action:
\begin{equation}
\label{eqn:geometric action}
U_{q_1,\frac 1{q_2}}(\ddot{\mathfrak{gl}}_1) \Big |_{c = 1, \psi_0^+ = 1, \psi_0^- = (q_1q_2)^{-r}} \curvearrowright K^{T_r}(\MM(r))_{\text{loc}}
\end{equation}
which generalizes \eqref{eqn:action} under the identification \eqref{identify hilb}. A way to reformulate the actions above (as $r$ runs over $\mathbb{N}$) is to frame them as an algebra homomorphism:
\begin{equation}
\label{eqn:algebra homomorphism}
\Upsilon : U_{q_1,\frac 1{q_2}}(\ddot{\mathfrak{gl}}_1) \Big |_{c = 1, \psi_0^+ = 1, \psi_0^- = (q_1q_2)^{-r}} \longrightarrow \prod_{r=1}^{\infty} \text{End}_{\mathbb{F}_r} \left( K^{T_r}(\MM(r))_{\text{loc}} \right)
\end{equation}
(implicit in the formula above is that ``$r$'' denotes the grading element with respect to the product in the right-hand side). On the other hand, the usual FRT formalism (explained in \cite{MO} as pertains to our context) says that once one has $R$-matrices \eqref{r-matrix} for all $r_1$ and $r_2$ that satisfy the quantum Yang-Baxter equation, taking arbitrary matrix coefficients of these $R$-matrices gives rise to a Hopf subalgebra:
\begin{equation}
    \label{eqn:frt}
\mathcal{U} \subset \prod_{r=1}^{\infty} \text{End}_{\mathbb{F}_r} \left( K^{T_r}(\MM(r))_{\text{loc}} \right)
\end{equation}
Tautologically, $R_{r_1,r_2}$ are the images of the universal $R$-matrix of $\mathcal{U}$ in the representations $K^{T_r}(\MM(r))_{\text{loc}}$.

\begin{conj} 
\label{conj:big}

The map $\Upsilon$ of \eqref{eqn:algebra homomorphism} induces a Hopf algebra isomorphism:
\begin{equation}
\label{eqn:big}
U_{q_1,\frac 1{q_2}}(\ddot{\mathfrak{gl}}_1) \Big |_{c = 1, \psi_0^+ = 1, \psi_0^- = (q_1q_2)^{-r}}  \xrightarrow{\sim} \mathcal{U}
\end{equation}

\end{conj}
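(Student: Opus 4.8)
The plan is to proceed in three stages: (i) that the map $\Upsilon$ of \eqref{eqn:algebra homomorphism} is a homomorphism of Hopf algebras, (ii) that its image is exactly the FRT subalgebra $\mathcal{U}$ of \eqref{eqn:frt}, and (iii) that $\Upsilon$ is injective. Throughout, the left-hand side of \eqref{eqn:big} should be read as the extension of the quantum toroidal algebra in which the rank grading is retained as a group-like element, so that $\psi_0^-$ survives and acts on the $r$-th summand by $(q_1q_2)^{-r}$ (this is the element ``$d$'' from the discussion preceding \eqref{KR}); and we grant the existence of the action \eqref{eqn:geometric action}, with the polarization and equivariant line bundle fixed as in Remark \ref{rem:choices}. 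The crux of stage (i) is to verify that the geometric action \eqref{eqn:geometric action} is compatible with the stable basis maps in the sense that, for $r=r_1+r_2$, $\text{Stab}^+$ intertwines the action of a generator $x\in\{e_n,f_n,\psi_n^\pm\}$ on $K^{T_r}(\MM(r))_{\text{loc}}$ with the action of the algebraic coproduct $\Delta(x)$ of \eqref{eqn:cop quantum 0}--\eqref{eqn:cop quantum 4} on $K^{T_{r_1}}(\MM(r_1))_{\text{loc}}\otimes K^{T_{r_2}}(\MM(r_2))_{\text{loc}}$ (here $\MM(r_1)\times\MM(r_2)$ is the fixed locus of the subtorus \eqref{sigma r}, cf. Example \ref{ex decomposition}). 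Once this intertwining is established, the FRT coproduct on $\mathcal{U}$, which is characterized by the same intertwining applied to matrix coefficients, agrees with $\Delta$ under $\Upsilon$, and $\Upsilon$ is a bialgebra (hence Hopf) map. The intertwining I would check on generators using the triangularity \eqref{stable to fixed} of $\text{Stab}^\pm$ and the explicit tangent weights \eqref{explicit tangent space}: these produce precisely the $\psi^\pm$-dressed expressions of \eqref{eqn:cop quantum 3}--\eqref{eqn:cop quantum 4}, with the shifts $w/c^{(i)}$ coming from the $q_1q_2$-weight of the symplectic form and the $(q_1q_2)^{-r}$-specialization of $\psi_0^-$ accounting for the rank shift; coassociativity and the remaining Hopf axioms on $\mathcal{U}$ follow from the analogous geometric statements (associativity of $\text{Stab}$ under triple splittings), exactly as in \cite{MO}.

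For stage (ii) I would use the $L$-operator description. By construction $\mathcal{U}$ is generated by the matrix coefficients of the geometric $R$-matrices $R_{r_1,r_2}$, and by the argument in the proof of the Proposition in Subsection~\ref{sub:poles} every such $R_{r_1,r_2}$ is a product of copies of $R_{1,1}$ acting pairwise on Fock factors, so it suffices to treat $R_{1,r}$ for all $r$. Assuming the rank-$r$ generalization of Conjecture \ref{conj:two r-matrices} --- that $R_{1,r}$ is the image of $\MR=\MRb\MK$ under the action on $\Lambda_{\bbF}(u_1)\otimes(\Lambda_{\bbF}(u_2)\otimes\dots\otimes\Lambda_{\bbF}(u_r))$ --- the leading expansion $\MRb=1\otimes 1+\alpha\sum_n e_n\otimes f_{-n}+\text{higher terms}$ from \eqref{eqn:r-matrix for quantum toroidal} shows that pairing the first Fock factor against vacuum-type covectors and vectors of $\Lambda_{\bbF}(u_1)$ recovers $\Upsilon(e_n)$ and $\Upsilon(f_n)$ acting on $K^{T_{r-1}}(\MM(r-1))_{\text{loc}}$, while the Cartan part $\MK$ of \eqref{MK} together with relation \eqref{eqn:rel double 3} recovers $\Upsilon(\psi_n^\pm)$; hence $\mathrm{Im}\,\Upsilon\subseteq\mathcal{U}$. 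Conversely, every matrix coefficient of $R_{1,r}$ is, by \eqref{eqn:r-matrix for quantum toroidal}, a sum $\sum_i\langle a_i\rangle\cdot\Upsilon(a^i)$ with $\{a_i\},\{a^i\}$ dual PBW bases of $\qgp,\qgm$ and $\langle a_i\rangle$ a matrix coefficient on $\Lambda_{\bbF}(u_1)$; by the degree constraint \eqref{eqn:degrees} this sum is finite in each $\bbZ\times\bbZ$-bidegree, and $\mathrm{Im}\,\Upsilon$ is graded, so the sum lies in $\mathrm{Im}\,\Upsilon$, giving $\mathcal{U}\subseteq\mathrm{Im}\,\Upsilon$.

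For stage (iii), $\Upsilon$ is injective if and only if $\bigoplus_{r\geq 0}\Lambda_{\bbF}(u_1)\otimes\dots\otimes\Lambda_{\bbF}(u_r)$, with the central elements specialized as in \eqref{eqn:big}, is a faithful module over the extended specialized algebra. I would establish this in the shuffle picture: using Proposition \ref{prop:iso} and \eqref{eqn:action shuf plus}--\eqref{eqn:action shuf minus}, a PBW/triangular-decomposition argument reduces faithfulness to two inputs --- non-degeneracy of the restricted pairing \eqref{eqn:pairing restricted}, which controls the two ``nilpotent'' directions $\qgp,\qgm$, and the fact that the Cartan currents $\psi^\pm(z)$ act with pairwise distinct generating functions across the summands as $r$ varies, which controls the Cartan direction even after the specialization.

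\textbf{Main obstacle.} The hardest step is stage (i): rigorously identifying the geometric (FRT/stable-basis) coproduct with the algebraic one \eqref{eqn:cop quantum 0}--\eqref{eqn:cop quantum 4}. This is essentially equivalent to proving Conjecture \ref{conj:two r-matrices} in all ranks, and it rests on the compatibility (``associativity'') of the maps $\text{Stab}^\pm$ with the action \eqref{eqn:geometric action} --- i.e. that the operators \eqref{r-matrix} are $\qg$-intertwiners --- which in $K$-theory for the moduli spaces $\MM(r)$ requires carrying out in full the deformation and degeneration arguments of \cite{MO, AO, OS}. Two further technical points must be handled with care: the completed tensor product $\MR\in\qg\,\widehat{\otimes}\,\qg$ (controlled, as above, by the $\bbZ\times\bbZ$ grading) and the $r$-dependent specialization of $\psi_0^-$ in \eqref{eqn:big}, which forces one to work with the extended algebra in which the rank grading is kept as an additional group-like element rather than with a genuine numerical specialization.
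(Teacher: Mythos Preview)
The statement is labeled a \emph{Conjecture} in the paper, and the paper does not prove it. What the paper does is list the three ingredients needed (image lands in $\mathcal{U}$; compatibility of coproducts; bijectivity), remarks that surjectivity is the most significant open point, and then proves only the injectivity statement as Proposition~\ref{prop:inj}. So there is no ``paper's own proof'' to compare against beyond that one piece.

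Your three stages match the paper's breakdown, and you correctly flag stage~(i) as the crux. But your stage~(ii) is not a proof of surjectivity: you explicitly \emph{assume} the rank-$r$ generalization of Conjecture~\ref{conj:two r-matrices} in order to identify $R_{1,r}$ with the image of $\MR$, and then read off generators from the expansion \eqref{eqn:r-matrix for quantum toroidal}. Since Conjecture~\ref{conj:two r-matrices} is itself open (and, as you note, essentially equivalent to stage~(i)), this reduces Conjecture~\ref{conj:big} to another unproven conjecture rather than establishing it. The paper is explicit that surjectivity is the hard direction and does not claim it; your outline does not close that gap either. A second subtlety: even granting the identification of $R_{1,r}$, your argument that every matrix coefficient lies in $\mathrm{Im}\,\Upsilon$ appeals to finiteness ``in each $\bbZ\times\bbZ$-bidegree'', but the sum $\sum_i \langle a_i\rangle\,\Upsilon(a^i)$ ranges over all of $\qgm$ and the matrix coefficients on a fixed pair of Fock vectors already restrict the first $\bbZ$-degree but not the second; one needs an additional argument (e.g.\ a filtration or slope argument on $\MS$) to conclude the infinite sum lands in the image.

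For stage~(iii), where the paper does give a proof, your sketch differs from Proposition~\ref{prop:inj}. You propose an abstract faithfulness argument resting on non-degeneracy of the pairing \eqref{eqn:pairing restricted} and separation of Cartan eigenvalues. The paper instead uses the orthogonal triangular decomposition \eqref{eqn:orthogonal} together with the explicit formulas \eqref{eqn:action 1}--\eqref{eqn:action 2}, and exhibits a concrete witness: for a nonzero tensor $T=\sum_i F_i\otimes\chi_i\otimes G_i$ it chooses $r\gg 0$ and the specific $r$-tuples $(\emptyset,\dots,\emptyset,\square,\dots,\square)$ and $(\square,\dots,\square,\emptyset,\dots,\emptyset)$ so that the matrix coefficient becomes $\sum_i F_i(u_1,\dots,u_{n_1})\cdot\chi_i|_{h'_{\pm k}\mapsto \sum u_a^{\pm k}}\cdot G_i(u_{r-n_2+1},\dots,u_r)$ up to a nonzero factor, which is nonzero by genericity of the $u_a$. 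This is more direct than invoking non-degeneracy of \eqref{eqn:pairing restricted}, and in particular handles the Cartan direction simultaneously with the nilpotent ones rather than as a separate case.
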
 

There are several parts to proving Conjecture \ref{conj:big}, and we will list them in increasing order of difficulty.

\begin{enumerate}[leftmargin=*]

\item Show that the image of the map $\Upsilon$ lands in the subalgebra $\mathcal{U}$ of \eqref{eqn:frt}. This is actually fairly easy, since the action \eqref{eqn:geometric action} is generated by tautological line bundles on the so-called simple Nakajima correspondences (see \cite{FT,SV}, also the proof of Proposition \ref{prop:inj}); these correspondences are Lagrangian, and they fit in quite well within the framework of stable bases.

\item Show that \eqref{eqn:big} is not just a map of algebras, but a map of bialgebras. To this end, one needs to show that the map \eqref{eqn:algebra homomorphism} intertwines the coproduct \eqref{eqn:cop quantum 0}--\eqref{eqn:cop quantum 4} on the left-hand side with the coproduct induced by stable basis maps (see \cite[Section 3.3]{OS}) on the right-hand side.

\item Show that the map \eqref{eqn:big} is a bijection: we will deal with injectivity in the following Proposition, but note that surjectivity is the more significant task.

\end{enumerate}

\begin{prop}
\label{prop:inj}

The map \eqref{eqn:algebra homomorphism} is injective.

\end{prop}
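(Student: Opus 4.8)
The plan is to prove injectivity of $\Upsilon$ by exhibiting, for each generator of $U_{q_1,1/q_2}(\ddot{\mathfrak{gl}}_1)$, an element whose image acts nontrivially in some $K^{T_r}(\MM(r))_{\text{loc}}$, and more strongly to show that a whole basis of the algebra maps to linearly independent operators. Since the quantum toroidal algebra has the triangular decomposition \eqref{eqn:triangular}, and since the Cartan part is detected already in rank $1$ (the operators $\psi^\pm(z)$ act on $K^{T}(\Hilb)_{\text{loc}} \cong \Lambda_{\mathbb{F}}(u)$ by the explicit formulas in the action \eqref{eqn:action}, which are manifestly faithful on the deformed Heisenberg subalgebra generated by the $h_{\pm n}$), it suffices to prove faithfulness on each of the two nilpotent halves $\qgp$ and $\qgm$. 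By the isomorphisms of Proposition \ref{prop:iso}, this amounts to showing that a nonzero shuffle element $F \in \MS$ cannot act by zero in all the representations $K^{T_r}(\MM(r))_{\text{loc}}$ simultaneously.

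First I would recall that the action \eqref{eqn:geometric action} is generated by tautological classes on the simple Nakajima correspondences, so that the operators attached to $f_n \mapsto z_1^n$ are the familiar vertex-operator-type operators $\eta(z), \xi(z)$ in rank $1$, and their higher-rank analogues are obtained by iterating these correspondences. Concretely, composing $n$ of the degree-$(-1)$ operators produces, after the usual contour-integral manipulation (as in \eqref{eqn:action shuf plus}), an operator on $K^{T_r}(\MM(r))_{\text{loc}}$ whose ``matrix coefficient'' between appropriate fixed-point classes is the contour integral of $F(z_1,\dots,z_n)$ against an explicit kernel with known poles and zeros coming from \eqref{explicit tangent space}. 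The key point is that, as $r$ grows and the fixed points $\MF_{\lambda^1,\dots,\lambda^r}$ range over all tuples of partitions, the weights $u_i q_1^x q_2^y$ that appear are Zariski-dense; hence if the contour integral of $F$ against this family of kernels vanishes identically in $q_1,q_2,u_1,\dots,u_r$ for all $r$, then $F$ itself must be the zero rational function. This is the standard ``separation of variables/specialization'' argument: the wheel-condition description \eqref{eqn:wheel form}--\eqref{eqn:wheel explicit} of $\MS$ guarantees that $F$ is determined by its values, and evaluating the relevant integrals at a sufficiently generic torus fixed point recovers the symmetrization $\mathrm{Sym}[z_1^{d_1}\cdots z_n^{d_n}]\prod \zeta(z_i/z_j)$ paired against all monomials, which is a faithful pairing.

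The main obstacle will be making the contour-integral/residue bookkeeping on $K^{T_r}(\MM(r))_{\text{loc}}$ rigorous: one must verify that the operator obtained by iterating the simple Nakajima correspondences really is computed by the claimed integral formula (this is essentially the content of \cite{FT,SV} combined with the shuffle presentation of \cite{FHHSY}, \cite{Shuf}), and that the localized matrix coefficients separate points of $\MS$. A clean way to package this is: (i) use the grading \eqref{eqn:grading on algebra} to reduce to a fixed bidegree $(\pm n, d)$, which is finite-dimensional; (ii) on that graded piece, write down the pairing between $\qgp$ (resp. $\qgm$) and the space spanned by matrix coefficients of $\Upsilon$ in rank $\gg 0$; (iii) identify this pairing, after localization, with the nondegenerate shuffle-algebra pairing \eqref{eqn:pairing restricted}, whose nondegeneracy was recalled in the text. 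Once that identification is in place, injectivity of $\Upsilon$ on $\qgp \otimes \qgo \otimes \qgm$ follows formally from the triangular decomposition together with the faithfulness of the Cartan action established in rank $1$. I expect step (ii)--(iii) — the precise matching of the geometric pairing with the algebraic Hopf pairing — to be where essentially all the work lies, but it is exactly the kind of statement for which the stable-basis formalism of \cite{MO}, \cite{OS} was designed, so no new idea should be required beyond careful execution.
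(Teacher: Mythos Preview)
Your reduction ``it suffices to prove faithfulness on each of the two nilpotent halves'' is where the argument breaks. Faithfulness of $\Upsilon$ restricted to $\qgp$, to $\qgo$, and to $\qgm$ separately does \emph{not} imply faithfulness on their tensor product unless you can separate the three pieces by some grading on the target. But in the decomposition \eqref{eqn:triangular} that you invoke, the half $\qgp$ is generated by the $e_n$'s, which act via the vertex operator $\xi(z)$; this operator has Fourier modes of all degrees (both raising and lowering), so $\qgp$ does not act on $K^{T_r}(\MM(r))_{\text{loc}} = \bigoplus_n K^{T_r}(\MM(r)_n)_{\text{loc}}$ by purely raising or lowering $n$. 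Hence there is no grading on the representation side that isolates the three tensor factors, and your step (i) cannot get off the ground with this decomposition.

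The paper avoids this by using the \emph{orthogonal} triangular decomposition $\MS \otimes \mathbb{F}[\psi_k^\pm] \otimes \MS^{\text{op}}$ (orthogonal in the sense of the slope filtration, see \cite{R-matrix}), in which the first factor acts by strictly decreasing $n$, the middle by preserving $n$, and the last by strictly increasing $n$. This grading now does separate the factors. Even so, faithfulness is not ``formal'': given a nonzero tensor $T = \sum_i F_i \otimes \chi_i \otimes G_i$ one must still exhibit a single matrix coefficient that is nonzero. The paper does this concretely by taking $r \gg 0$ and choosing the source and target fixed points to be $r$-tuples of partitions each of which is either empty or a single box, arranged so that the $G_i$ must remove boxes with weights $u_{r-n_2+1},\dots,u_r$, the $F_i$ must add boxes with weights $u_1,\dots,u_{n_1}$, and the Cartan piece sees only $u_{n_1+1},\dots,u_{r-n_2}$. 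The matrix coefficient then factors as $\sum_i F_i(u_1,\dots,u_{n_1}) \cdot \chi_i(\dots) \cdot G_i(u_{r-n_2+1},\dots,u_r)$ times a nonzero scalar, with the three groups of equivariant parameters completely disjoint; genericity of the $u_a$'s then gives nonvanishing. Your ``Zariski-dense weights'' remark is in the right spirit, but the mechanism that makes it work is this explicit separation of the $u$-variables across the three tensor factors, which requires both the correct triangular decomposition and the specific choice of single-box fixed points. Your proposed route through the Hopf pairing (step (iii)) would be a different and substantially harder argument, and is not what is needed here.
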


\begin{proof}

There is a triangular decomposition which is orthogonal to \eqref{eqn:triangular} (see \cite{R-matrix} for details):
\begin{equation}
\label{eqn:orthogonal}
U_{q_1,\frac 1{q_2}}(\ddot{\mathfrak{gl}}_1) \Big |_{c = 1, \psi_0^+ = 1, \psi_0^- = (q_1q_2)^{-r}} = \MS \otimes \mathbb{F}[\psi_k^{\pm}]_{k \in \mathbb{N}} \otimes \MS^{\text{op}}
\end{equation}
such that the three factors in the decomposition above act on:
$$
K^{T_r}(\MM(r))_{\text{loc}} = \bigoplus_{n = 0}^{\infty} K^{T_r}(\MM(r)_n)_{\text{loc}}
$$
by decreasing $n$, preserving $n$ and increasing $n$, respectively. More specifically:
\begin{equation}
\label{eqn:psi acts}
\psi^\pm(z) \cdot [\lambda^1,\dots,\lambda^r] = [\lambda^1,\dots,\lambda^r] \cdot \prod_{a=1}^r \left[ \frac {z - u_a q_1^{-1}q_2^{-1}}{z-u_a} \prod_{\square = (i,j) \in \lambda^a} \frac {\zeta \left(\frac {u_a q_1^{i}q_2^{j}}z \right)}{\zeta \left(\frac {u_a q_1^{i-1}q_2^{j-1}}z \right)} \right]
\end{equation}
where $\psi^\pm (z) = \sum_{k=0}^{\infty} \psi^\pm_k z^{\mp k}$. In the formula above, $\zeta$ denotes the rational function \eqref{eqn:def zeta} with $q \mapsto q_1$, $t \mapsto q_2^{-1}$. Since $[\lambda^1,\dots,\lambda^r]$ correspond to tensor products of modified Macdonald polynomials according to \eqref{identify m}, then the polynomial ring generated by the operators $\psi_k^\pm$ coincides with the polynomial ring generated by the (plethystically modified versions of the) Macdonald operators \eqref{En} and \eqref{En_dual}. In particular, it is easy to deduce from \eqref{eqn:psi acts} that the elements $\{h_{\pm k}\}_{k \in \mathbb{N}} \in \qg$ of \eqref{eqn:psi to h} act by:
\begin{equation}
\label{eqn:h acts}
h_{\pm k} \cdot [\lambda^1,\dots,\lambda^r] = [\lambda^1,\dots,\lambda^r] \cdot \pm \sum_{a=1}^r \left(\frac {u_a^{\pm k}}{(1-q_1^{\pm k})(1-q_2^{\pm k})} - \sum_{\square = (i,j) \in \lambda^a} u_a^{\pm k}q_1^{\pm (i-1)k} q_2^{\pm (j-1)k} \right)
\end{equation}
It is sometimes convenient to replace the operators above by:
\begin{multline*}
h_{\pm k}' = -h_{\pm k} \pm \text{Id}_{K^{T_r}(\MM(r))_{\text{loc}}} \cdot \sum_{a=1}^r \frac {u_a^{\pm k}}{(1-q_1^{\pm k})(1-q_2^{\pm k})} \qquad \Rightarrow \\ \Rightarrow \qquad h_{\pm k}' \cdot [\lambda^1,\dots,\lambda^r] = [\lambda^1,\dots,\lambda^r] \cdot \sum^{1\leq a \leq r}_{\square = (i,j) \in \lambda^a} u_a^{\pm k}q_1^{\pm (i-1)k} q_2^{\pm (j-1)k} 
\end{multline*}
Similarly, the tensor factors $\MS$ and $\MS^{\text{op}}$ act by the following formulas:
\begin{align}
&F \cdot [\mu^1,\dots,\mu^r] = \mathop{\sum_{\{\mu^a \unlhd \lambda^a\}_{a = 1, \dots, r}}}_{\sum_{a=1}^r |\lambda^a \backslash \mu^a| = n} [\lambda^1,\dots,\lambda^r] \cdot F( \dots , u_a q_1^{i-1} q_2^{j-1}, \dots)_{\square = (i,j) \in \lambda^a \backslash \mu^a}^{a \in 1,\dots,r} \cdot \text{factor}_{\mu^1,\dots,\mu^r} \label{eqn:action 1} \\
&G \cdot [\lambda^1,\dots,\lambda^r] = \mathop{\sum_{\{\mu^a \unlhd \lambda^a\}_{a = 1, \dots, r}}}_{\sum_{a=1}^r |\lambda^a \backslash \mu^a| = n}  [\mu^1,\dots,\mu^r] \cdot G( \dots , u_a q_1^{i-1} q_2^{j-1}, \dots)_{\square = (i,j) \in \lambda^a \backslash \mu^a}^{a \in 1,\dots,r} \cdot \text{factor}'_{\lambda^1,\dots,\lambda^r} \label{eqn:action 2}
\end{align}
for any $F(z_1,\dots,z_n)\in \MS$ and $G(z_1,\dots,z_n) \in \MS^{\text{op}}$, where $\mu \unlhd \lambda$ means that the Young diagram of the partition $\mu$ is contained in the Young diagram of the partition $\lambda$. In formulas \eqref{eqn:action 1}--\eqref{eqn:action 2}, the terms denoted $\text{factor}_{\mu^1,\dots,\mu^r}$ and $\text{factor}'_{\lambda^1,\dots,\lambda^r}$ are certain non-zero elements of $\mathbb{F}$ that the interested reader may find in \cite[Subsection 3.10]{AGT}; the important thing for now is that they do not depend on $F$ and $G$, respectively. We conclude from the formulas above that $F(z_1,\dots,z_n) \in \MS$ (respectively $G(z_1,\dots,z_n) \in \MS^{\text{op}}$) acts on $r$-tuples of partitions by adding (respectively removing) $n$ boxes; the respective matrix coefficient is proportional to $F$ (respectively $G$) applied to the collection of weights $\{u_aq_1^{i-1}q_2^{j-1}\}_{\square = (i,j)}$ of the added (respectively removed) boxes. 

With this description of the action \eqref{eqn:geometric action} in mind, let us prove the injectivity of the map \eqref{eqn:algebra homomorphism}. By \eqref{eqn:orthogonal}, we need to show that for an arbitrary non-zero tensor:
\begin{equation}
\label{eqn:tensor}
T = \sum_i \underbrace{F_i}_{\in \MS} \otimes \underbrace{\chi_i}_{\in \mathbb{F}[h_{\pm k}']} \otimes \underbrace{G_i}_{\in \MS^{\text{op}}} 
\end{equation}
there exists a large enough $r$ and collections of partitions $\lambda^1,\dots,\lambda^r$, $\mu^1,\dots,\mu^r$ such that the coefficient: 
\begin{equation}
\label{eqn:the coefficient}
\left \langle [\mu^1, \dots, \mu^r] |  T  | [\lambda^1, \dots, \lambda^r] \right \rangle
\end{equation}
is non-zero. To this end, choose any $n_1,n_2 \in \mathbb{N}$ and $r \gg n_1 + n_2$. Then we consider the following $r$-tuples of partitions, all of which either consist of a single box or are empty:
\begin{align*} 
&(\lambda^1,\dots,\lambda^r) = (\underbrace{\emptyset, \dots, \emptyset}_{n_1 \text{ terms}}, \underbrace{\square, \dots, \square}_{r-n_1-n_2 \text{ terms}}, \underbrace{\square, \dots, \square}_{n_2 \text{ terms}}) \\
&(\mu^1,\dots,\mu^r) = (\underbrace{\square, \dots, \square}_{n_1 \text{ terms}}, \underbrace{\square, \dots, \square}_{r-n_1-n_2 \text{ terms}}, \underbrace{\emptyset, \dots, \emptyset}_{n_2 \text{ terms}})
\end{align*}
Then let us choose $n_1$ and $n_2$ to be the maximal number of variables among the shuffle elements $F_i$ and $G_i$ (respectively) which appear in formula \eqref{eqn:tensor}. The non-zero contributions to the matrix coefficient \eqref{eqn:the coefficient} all arise from having the $G_i$'s remove the last $n_2$ boxes from the $r$-tuple of partitions $(\lambda^1,\dots,\lambda^r)$ and have the $F_i$'s add the first $n_1$ boxes, in order to obtain the $r$-tuple of partitions $(\mu^1,\dots,\mu^r)$. With this in mind, the coefficient \eqref{eqn:the coefficient} equals:
$$
\sum_i F_i(u_1,\dots,u_{n_1}) \cdot \chi_i\Big|_{h_{\pm k}' \mapsto u_{n_1+1}^{\pm k} + \dots + u_{r-n_2}^{\pm k}} \cdot G_i(u_{r-n_2+1}, \dots, u_r) \cdot \text{non-zero factor}
$$
where the sum above only goes over those indices $i$ such that $F_i$ has $n_1$ variables and $G_i$ has $n_2$ variables. Since the parameters $u_1,\dots,u_r$ are generic, the non-vanishing of the expression above follows from the non-vanishing of the tensor \eqref{eqn:tensor}, and we are done.

\end{proof}

\begin{rmk}

Besides the algebraic approach outlined above, a more straightforward way to prove Conjecture \ref{conj:two r-matrices} would involve comparing formulas for the two $R$-matrices involved, for example \eqref{Rcc} with the trigonometric version of the product formula of \cite[equation (130)]{S} (we thank Andrey Smirnov for pointing out the latter formula to us).

\end{rmk}


\end{document}